\definecolor{darkgreen}{rgb}{0.0, 0.5, 0.0}
\begin{document}
\title{Classicality without local discriminability: decoupling entanglement and complementarity}
\author{Giacomo Mauro D'Ariano}
\email{dariano@unipv.it}
\author{Marco Erba}
\email{marco.erba@unipv.it}
\author{Paolo Perinotti}
\email{paolo.perinotti@unipv.it}
\affiliation{Universit\`a degli Studi di Pavia, Dipartimento di Fisica, QUIT Group, and INFN Gruppo IV, Sezione di Pavia, via Bassi 6, 27100 Pavia, Italy}
\begin{abstract}
An operational probabilistic theory where all systems are classical, and all pure states of composite systems are entangled, is constructed. The theory is endowed with a rule for composing an arbitrary number of systems, and with a nontrivial set of transformations. Hence, we demonstrate that the presence of entanglement is independent of the
existence of incompatible measurements. We then study a variety of phenomena occurring in the theory---some of them contradicting both Classical and Quantum Theories---including: cloning, entanglement swapping, dense coding, additivity of classical capacities, non-monogamous entanglement, hypersignaling. We also prove the existence, in the theory, of a universal processor.
The theory is causal and satisfies the no-restriction hypothesis. At the same time, it violates a number of information-theoretic principles enjoyed by Quantum Theory, most notably: local discriminability, purity of parallel composition of states, and purification. Moreover, we introduce an exhaustive procedure to construct generic operational probabilistic theories, and a sufficient set of conditions to verify their consistency. 
In addition, we prove a characterisation theorem for the parallel composition rules of arbitrary theories, and specialise it to the case of bilocal-tomographic theories. We conclude pointing out some open problems. In particular, 
on the basis of the fact that every separable state of the theory is a statistical mixture of entangled states, we formulate a no-go conjecture for the existence of a local-realistic ontological model.
\end{abstract}
\maketitle

\tableofcontents

\section{\textbf{Introduction}}
In \emph{Discussion of probability relations between separated systems}~\cite{schrodinger1935discussion}, a paper dating back to 1935, Schr{\"o}dinger provides a seminal description of the phenomenon of entanglement: ``When two systems, of which we know the states by their respective representatives, enter into temporary physical interaction due to known forces between
them, and when after a time of mutual influence the systems separate again, then they can no longer be described in the same way as before, viz.~by endowing each of them with a representative of its own. I would not call that \emph{one} but rather \emph{the} characteristic trait of quantum mechanics, the one that enforces its entire departure from classical lines of thought.'' Indeed, in Classical Theory (CT), the state of affair of every system admits of a suitable description as a statistical mixture of products of pure states. This means that in CT any bipartite state can be prepared by two experimenters just by using local operations and shared randomness. In other words, every state in CT is \emph{separable}. On the contrary, Quantum Theory (QT) allows for states which are not separable, namely, which are \emph{entangled}. Actually, in the broad landscape of probabilistic theories~\cite{barrett2007information}, entanglement is far from being \emph{the} characteristic trait of QT. Most notably, the so-called \emph{PR boxes}~\cite{Rastall:1985aa,Popescu1994} provided the first example of a probabilistic model beyond QT and featuring entangled states, initiating a fruitful research field in the scope of foundations of physics. On top of this, it has been even argued that entanglement is \emph{an inevitable feature
of any theory superseding classical theory}~\cite{PhysRevLett.119.080503}.

While entanglement can be considered a ubiquitous feature in the scope of probabilistic theories, one of the points of the present work is to question to which extent the presence of entanglement in a physical theory \emph{enforces its entire departure from classical lines of thought}. One widespread notion of \emph{classicality} for a physical theory~\cite{barrett2007information,PhysRevLett.99.240501,pfister2013information,PhysRevLett.119.080503} is based on the set of states of the theory. A theory is deemed classical if the pure states of states for every system are: (i) the vertices of a simplex, and (ii) jointly perfectly discriminable. We present a complete operational probabilistic theory, which we call Bilocal Classical Theory (BCT), that, in spite of being classical in the above sense, features entangled states. Classical theories with entanglement have been characterised in Ref.~\cite{d2019classical}. Moreover, it is interesting to notice that BCT also complies with the notion of classicality proposed in Ref.~\cite{schmid2020structure}, admitting of a noncontextual ontological model.

BCT represents the proof of concept that entanglement is compatible with the absence of \emph{complementarity}, i.e., with the existence of incompatible measurements. The theory is causal and satisfies the no-restriction hypothesis. However, BCT violates the principles of local discriminability, purity of parallel composition of states, and purification. The theory also features non-monogamous entanglement and hypersignaling~\cite{PhysRevLett.119.020401}. Furthermore, we show that in BCT it is possible to perform entanglement swapping and dense coding, and we prove a theorem of universal programmability for the theory.

The paper is organized as follows. In Sec.~\ref{sec:conc} we provide a conceptual preview of the main results of the paper in a non technical way. In Sec.~\ref{sec:OPTs} we review the framework of operational probabilistic theories (OPTs), which captures the building blocks necessary to construct a theory of physical processes. This can be done by resorting to \emph{Category Theory}~\cite{lane2013categories}. Indeed, the categorical framework provided a powerful toolbox for deriving Quantum Theory~\cite{chiribella2016quantum}, reformulating it~\cite{coecke2017picturing}, and also for modelling generic physical theories~\cite{PhysRevA.81.062348,Chiribella2015,gogioso2017categorical}, or more general kinds of theories~\cite{selby2017process,tull2019categorical}. In Sec.~\ref{sec:structure_consistency} we make a survey of the operational probabilistic structures that are relevant to the present work. We prove two first results, providing in particular a characterisation for the composition rule of the system-sizes in bilocal-tomographic theories. We then move on to considering the construction of a probabilistic theory. When a mathematical structure is constructed, one needs to provide a procedure not only to build it, but also to make sure that it is consistent. Accordingly, in Sec.~\ref{sec:constructing_an_opt} we set the problem of consistency for the construction of an operational probabilistic theory, proposing a building procedure, and identifying a sufficient set of conditions to check well-posedness and coherence. Thus, in Sec.~\ref{sec:BCT} we present Bilocal Classical Theory, also verifying its consistency, while in Sec.~\ref{sec:features} we analyse a variety of information-theoretic properties of the theory. In Sec.~\ref{sec:discussion} we discuss our findings in the light of the existing literature. Finally, in Sec.~\ref{sec:conclusions} we draw our conclusions, pointing out some open problems of foundational and interpretative relevance.

\section{\textbf{Conceptual preview}}\label{sec:conc}
In the present paper we exhibit a complete probabilistic theory that embodies features studied in Ref.~\cite{d2019classical}, where classical theories without local discriminability were investigated. Such theories, in particular, feature entanglement, and allow to study its logical dependence on other properties that are commonly associated with it. The first step that we take is to review the framework of operational probabilistic theories (OPTs). OPTs allow for the study of statistics of  events, as in generalised probabilistic theories (GPTs), but they also enable us to describe the transformations and their composition in sequence and in parallel. Unlike most of the literature on GPTs, with remarkable exceptions~\cite{schmid2020structure}, OPTs formalise in a thorough manner the compositional structure, and can be thus thought as a completion of the framework of GPTs, emphasising the role of transformations. Introducing new structure on top of the statistical content of a GPT, an OPT has to satisfy further consistency constraints, that regard not only  its coherence as a probability theory~\cite{janotta2014generalized}, but, most importantly, the interplay of compositional structures\footnote{The compositional structure of an OPT, in technical terms, corresponds to that of a {\em monoidal category}~\cite{coecke2010categories}.} among themselves, and with the probabilistic one.

Classical theories are OPTs that are \emph{locally classical}, i.e.~the state space of every system is  classical, but their composition rule is not the usual one. Constructing such a theory presents remarkable difficulties, in particular, abiding by consistency constraints. For this reason the paper presents an exhaustive discussion of such constraints, that are then checked after introducing in detail an example of classical theory with entanglement.  

We now provide a brief survey of the main results of the paper. Theorem~\ref{thm:delta3} provides necessary conditions that the parallel composition rules for arbitrary theories must satisfy,  and Theorem~\ref{thm:bilocal} provides a characterisation of those rules in the case of bilocal-tomographic theories~\cite{hardy2012limited}. Bilocal discriminability is particularly relevant in the present work, since the theory here presented is bilocal-tomographic. In Sec.~\ref{sec:constructing_an_opt} we formulate a procedure to check the consistency of a given theory, that turns out to be crucial in establishing the validity of the theory constructed in Sec.~\ref{sec:BCT}. The latter, that we call  Bilocal Classical Theory (BCT), represents the main result of the paper. It constitutes a proof of concept that incompatibility of measurements  and entanglement are independent properties. In subsequent sections we analyse other relevant features of  BCT, some in common with Quantum Theory---e.g.~dense coding, entanglement swapping---some in common with Classical Theory---e.g.~perfect clonability, full information without disturbance, no Bell nonlocality, universal programming, insecure cryptography---and finally some which supersede both theories---such as the violation of  atomicity of parallel composition and of entanglement monogamy, hypersignaling.

\section{\textbf{Operational Probabilistic Theories: a review}}\label{sec:OPTs}
The primitive notions of an operational probabilistic theory (OPT) are those of \emph{systems}, \emph{tests}, \emph{events}, and \emph{probabilities}. Systems represent the physical entities which are probed in a laboratory (e.g.~an electron, a molecule, a radiation field, etc\ldots). Tests represent the physical processes---occurring between two systems---which experiments are made up of (such as the single use of a physical device). Accordingly, an \emph{outcome space} is associated with each test, collecting all the possible outcomes of the test itself. On the other hand, an \emph{event} is associated with each outcome, representing a possible occurrence in a physical process. Finally, the goal of a physical theory is to associate some \emph{probability distributions} with each event.

In the present section, we will provide a review of the main general properties of an OPT. In particular, we will present the operational and compositional properties of the primitives, their probabilistic structure, and the resulting linear structure.

\subsection{Operational structure: compositional properties of a theory}\label{subsec:operational}
Let $\Sys{\Theta}$, $\TestA{\Theta}$ and $\EventA{\Theta}$ denote the classes of, respectively, the systems, the tests, and the events of a theory $\Theta$. Systems will be denoted using latin characters $\sys{A},\sys{B},\sys{C},\ldots\in\Sys{\Theta}$, while tests from a system $\sys{A}$ to a system $\sys{B}$ will be denoted by $\TestC{T}{X}{A}{B}\in\TestAB{A}{B}$, where $\Out{X}$ is the corresponding outcome space.
A test having a single outcome $\ast$ will be called \emph{a singleton test}, and the singleton set will be denoted by $\star\coloneqq\lbrace\ast\rbrace$. Without loss of clarity, sometimes we will use the shorthand notation $\Out{T}\equiv\Test{T}{X}\equiv\TestC{T}{X}{A}{B}$. Each test $\TestC{T}{X}{A}{B}$ is a collection of events from $\sys{A}$ to $\sys{B}$, namely:
	\begin{align*}
	\TestC{T}{X}{A}{B}=\lbrace\T{T}_x\rbrace_{x\in\Out{X}};\quad \forall x\in\Out{X},\ \T{T}_x\colon\sys{A}\rightarrow\sys{B}.
	\end{align*}
The class of events from a system $\sys{A}$ to a system $\sys{B}$ will be denoted by $\Event{A}{B}$. Events $\T{T}\in\Event{A}{B}$ are represented as wired boxes, where the source and target systems  are the labels of the input and, respectively, the output wires:
	\begin{align*}
	\Qcircuit @C=1em @R=1em {
		&\s{A}&\gate{\T{T}}&\s{B}\qw&
	}.
\end{align*}
As it is clear from the diagram above, the input-output direction is conventionally represented as going from left to right.

One requires that every test with, say, output system $\sys{B}$, can be sequentially composed with any other test having $\sys{B}$ as input. This operation represents the consecutive occurrence of two physical processes. That is, for all $\sys{A},\sys{B},\sys{C}\in\Sys{\Theta}$ and all $\T{T}_1\in\Event{A}{B},\T{T}_2\in\Event{B}{C}$, there exist an associative map $\circ$, called \emph{sequential composition}, and an event $\T{T}_2\T{T}_1\coloneqq\T{T}_2\circ\T{T}_1\in\Event{A}{C}$, such that the sequential composition of two tests $\TestC{T}{X}{A}{B},\TestC{T'}{Y}{B}{C}$ is defined as:
\begin{align*}
\TestC{\left( T'T\right)}{X\times Y}{A}{C}\equiv\TestC{T'}{Y}{B}{C}\TestC{T}{X}{A}{B} \coloneqq \lbrace \T{T}'_{y}\circ\T{T}_x \rbrace_{(x,y)\in\Out{X\times Y}}.
\end{align*}
Sequential composition is pictorially represented by the horizontal juxtaposition of boxes from left to right, connecting the two consecutive input/output wires which carry the same label:
\begin{align*}
\begin{aligned}
\Qcircuit @C=1em @R=1.3em
{
	&\s{A}&\gate{\T{T}_2\T{T}_1}&\s{C}\qw&
}
\end{aligned}
=
\begin{aligned}
\Qcircuit @C=1em @R=1.3em
{
	&\s{A}&\gate{\T{T}_1}&\s{B}\qw&\gate{\T{T}_2}&\s{C}\qw&
}
\end{aligned}.
\end{align*}
Moreover, for all $\sys{S}\in\Sys{\Theta}$ there exists a (unique) singleton test---denoted by $\TestC{I}{\star}{S}{S}=\lbrace\T{I}_\sys{S}\rbrace$ and called \emph{the identity of $\sys{S}$}---satisfying, for all systems $\sys{A}$ and $\sys{B}$, the following property:
	\begin{align}\label{eq:identity_sequential}
	\T{I}_\sys{B}\T{T}_x=\T{T}_x=\T{T}_x\T{I}_\sys{A},\quad\forall\TestC{T}{X}{A}{B},\ \forall\T{T}_x\in\TestC{T}{X}{A}{B}.
	\end{align}
In the following, we will denote the identity family also by using the symbol $\T{I}$. Identity processes are equivalent to doing nothing, and can be then conveniently represented just as an extended wire carrying the respective system's label:
\begin{align*}
	\begin{aligned}
	\Qcircuit @C=1em @R=1em {
		&\qw&\s{A}\qw&\qw&\qw&
	}
	\end{aligned}
	=
	\begin{aligned}
	\Qcircuit @C=1em @R=1em {
		&\s{A}&\gate{\T{I}_\sys{A}}&\s{A}\qw&
	}
	\end{aligned}.
\end{align*}

One can compose two systems $\sys{A}$ and $\sys{B}$ to make the new composite system $\sys{AB}$. Correspondingly, any two arbitrary events $\T{T}_1\in\Event{A}{B},\T{T}_2\in\Event{C}{D}$  can be composed in parallel. This operation corresponds to an associative map $\boxtimes$, called \emph{parallel composition}, that produces the composite event $\T{T}_1\boxtimes\T{T}_2\in\Event{AC}{BD}$. The parallel composition of tests is then straightforwardly defined as:
\begin{align*}
\TestC{\left( T\boxtimes T'\right)}{X\times Y}{AC}{BD}\equiv\TestC{T}{X}{A}{B}\boxtimes\TestC{T'}{Y}{C}{D}\!\coloneqq \lbrace \T{T}_x\boxtimes\T{T}'_{y} \rbrace_{(x,y)\in\Out{X\times Y}}.
\end{align*}
This can be thought as a composite test which is completely described by two single tests performed in different laboratories. Parallel composition is pictorially represented by vertically juxtaposing transformations from top to bottom:
\begin{align*}
	\begin{aligned}
	\Qcircuit @C=1em @R=1em
	{
		&\s{AC}&\gate{\T{T}_1\boxtimes\T{T}_2}&\s{BD}\qw&
	}
	\end{aligned}
	&=
	\begin{aligned}
	\Qcircuit @C=1em @R=1em
	{
		&\s{A}&\multigate{1}{\T{T}_1\boxtimes\T{T}_2}&\s{B}\qw&
		\\
		&\s{C}&\ghost{\T{T}_1\boxtimes\T{T}_2}&\s{D}\qw&
	}
	\end{aligned}
	=\\[2.5ex]
	&=
	\begin{aligned}
	\Qcircuit @C=1em @R=1em
	{
		&\s{A}&\gate{\T{T}_1}&\s{B}\qw&
		\\
		&\s{C}&\gate{\T{T}_2}&\s{D}\qw&
	}
	\end{aligned}.
\end{align*}
Moreover, the following properties are required for all $\sys{A},\sys{B},\sys{C},\sys{D},\sys{E},\sys{F}\in\Sys{\Theta}$ and all $\T{T}_1\in\Event{A}{B}$, $\T{T}_2\in\Event{B}{C}$, $\T{T}_3\in\Event{D}{E}$, $\T{T}_4\in\Event{E}{F}$:
\begin{align}
	\begin{aligned}\label{eq:identity}
	\Qcircuit @C=1em @R=2em {
		&\qw&\s{A}\qw&\qw&\qw&\\
		&\qw&\s{B}\qw&\qw&\qw&
	}
	\end{aligned}
	&=\ \ 
	\begin{aligned}
	\Qcircuit @C=1em @R=2em {
		&\qw&\s{AB}\qw&\qw&\qw&
	}
	\end{aligned},
	\\[2.5ex]
	\begin{aligned}\label{eq:bifunctoriality_diagrammatic}
	\Qcircuit @C=1em @R=1.3em
	{
	\s{A}&\qw&\gate{\T{T}_1}&\s{B}\qw&\gate{\T{T}_2}&\qw&\s{C}\qw&
	\\
	\s{D}&\qw&\gate{\T{T}_3}&\s{E}\qw&\gate{\T{T}_4}&\qw&\s{F}\qw&
	\relax\gategroup{1}{3}{1}{5}{.8em}{--}
	\relax\gategroup{2}{3}{2}{5}{.8em}{--}
	\relax\gategroup{1}{3}{2}{5}{2em}{--}
	}
	\end{aligned}
	&\!\!=\ \ 
\begin{aligned}
\Qcircuit @C=1.2em @R=1.3em
{
	\s{A}&\qw&\gate{\T{T}_1}&\s{B}\qw&\gate{\T{T}_2}&\qw&\s{C}\qw&
	\\
	\s{D}&\qw&\gate{\T{T}_3}&\s{E}\qw&\gate{\T{T}_4}&\qw&\s{F}\qw&
	\relax\gategroup{1}{3}{2}{3}{.8em}{--}
	\relax\gategroup{1}{5}{2}{5}{.8em}{--}
	\relax\gategroup{1}{3}{2}{5}{2em}{--}
}
\end{aligned}\!\!\!.
\end{align}
Eq.~\eqref{eq:identity} asserts that the parallel composition of the identities $\T{I}_\sys{A}$ and  $\T{I}_\sys{B}$ is the identity $\T{I}_\sys{AB}$ of the compound system $\sys{AB}$. Property~\eqref{eq:bifunctoriality_diagrammatic}  states that the operations of sequential and parallel composition commute.

Furthermore, one requires the possibility to consider physical processes having no input or output. The corresponding tests are those where the experimenter disregards everything that---from the viewpoint of the input--output direction---has occurred before or, respectively, will occur after, the physical process considered. In order to capture this notion, there exists a (unique) system $\sys{I}$, called the \emph{trivial system}, satisfying: $\sys{IA}=\sys{A}=\sys{AI}$ for all $\sys{A}\in\Sys{\Theta}$. Also, the parallel composition of any test with the identity of $\sys{I}$ amounts to doing nothing. Indeed, one can conveniently omit an explicit diagrammatic representation of both $\sys{I}$ and $\T{I}_\sys{I}$, leaving blank spaces. Accordingly, events $\rho\in\Event{I}{A},a\in\Event{A}{I}$ will be represented, respectively, as:
\begin{align*}
\begin{aligned}
\Qcircuit @C=1em @R=1em {
	&\prepareC{\rho}&\s{A}\qw&
}
\end{aligned},
\begin{aligned}
\Qcircuit @C=1em @R=1em {
	&\s{A}&\measureD{a}&
}
\end{aligned}.
\end{align*}
Such events are called, respectively, \emph{preparations} and \emph{observations}.

Finally, we introduce one last relevant feature. One can think to each system as being controlled by an agent. Accordingly, one also requires the possibility of exchanging systems between agents. This operation is captured by the notion of \emph{braiding}, namely, a family $\Out{S}$ of invertible singleton tests defined in the following way. For any two systems $\sys{X},\sys{Y}$, the braiding $\Out{S}$ contains two tests, $\TestC{S}{\star}{XY}{YX}=\lbrace\T{S}_{\sys{X},\sys{Y}}\rbrace$ and its inverse $\left(\mathsf{S}_{\star}^{-1}\right)^{\sys{YX}\!\rightarrow\!\sys{XY}}=\lbrace\T{S}^{-1}_{\sys{X},\sys{Y}}\rbrace$, whose associated events will be denoted as follows:
\begin{align*}
\begin{aligned}
\Qcircuit @C=1em @R=2em
{
	\s{X}&\multigate{1}{\T{S}_{\sys{X},\sys{Y}}}&\s{Y}\qw&
	\\
	\s{Y}&\ghost{\T{S}_{\sys{X},\sys{Y}}}&\s{X}\qw&
}
\end{aligned}
&=\tikzfig{braiding},
\\[1ex]
\begin{aligned}
\Qcircuit @C=1em @R=2em
{
	\s{Y}&\multigate{1}{\T{S}^{-1}_{\sys{X},\sys{Y}}}&\s{X}\qw&
	\\
	\s{X}&\ghost{\T{S}^{-1}_{\sys{X},\sys{Y}}}&\s{Y}\qw&
}
\end{aligned}
&=\tikzfig{braiding_inverse}.
\end{align*}
The above graphical representation as a twisting of systems is due to the fact that the braiding is required to obey, for all $\sys{A},\sys{B},\sys{C},\sys{D}\in\Sys{\Theta}$ and all $\T{T}_1\in\Event{A}{B},\ \T{T}_2\in\Event{C}{D}$, to the following ``sliding property'':
\begin{align}\label{eq:braiding_naturality_diagram}
\tikzfig{swap_narrow2}=\tikzfig{swap_narrow1}.
\end{align}
In the following, we will denote a braiding also by using the symbol $\T{S}$, identifying it with the family of events that defines $\Out{S}$. In the case where the members of the braiding satisfy $\T{S}_{\sys{A},\sys{B}}^{-1}=\T{S}_{\sys{B},\sys{A}}$ for all $\sys{A},\sys{B}\in\Sys{\Theta}$, the OPT is called \emph{symmetric}. Notice that all the theories developed so far are symmetric. In analogy to the cases of CT and QT, symmetricity has been assumed in the literature, for both physical and process theories.

\subsection{Probabilistic structure: linear properties of a theory}\label{subsec:probabilistic}
Tests $\Test{P}{X}=\lbrace p_x \rbrace_{x\in\Out{X}}\in\TestAB{I}{I}$ are probability distributions. For all $p\in\Event{I}{I}$ and all $\T{T}\in\Event{A}{B}$, one can define an operation, called \emph{multiplication by a scalar}, as follows:
\begin{align}\label{eq:scalar_multiplication}
\begin{aligned}
\Qcircuit @C=1em @R=1em
{
	p\;&\s{A}&\gate{\T{T}}&\s{B}\qw&
}
\end{aligned}
\coloneqq
\begin{aligned}
\Qcircuit @C=1em @R=1em
{
	&&p&&
	\\
	&\s{A}&\gate{\T{T}}&\s{B}\qw&
}
\end{aligned}
=
\begin{aligned}
\Qcircuit @C=1em @R=1em
{
	&\s{A}&\gate{\T{T}}&\s{B}\qw&
	\\
	&&p&&
}
\end{aligned}.
\end{align}
Notice that the equality can be shown to hold using the properties of parallel composition and of the trivial system. In the case of preparations $\rho$ and observations $a$, one can also show that $p\boxtimes\rho=\rho\circ p$ and $p\boxtimes a =  p\circ a$.
On the other hand, the following is required:
\begin{align}\label{eq:indep}
	 p\boxtimes q\coloneqq pq,\quad\forall p,q\in\Event{I}{I},
\end{align}
where $pq$ is the usual multiplication on real numbers. Moreover, an experimenter is allowed to perform a test $\TestC{T}{X}{A}{B}$
disregarding the different outcomes within an arbitrary subset $\Out{Y}\subseteq\Out{X}$. This amounts to
merging the events in $\Out{Y}$ into a single event. Such a possibility is captured by the notion of \emph{coarse-graining}. The probability of the coarse-grained event $\Out{Y}$ amounts to the sum of the probabilities
of all the outcomes in the subset $\Out{Y}$. Then, for each test $\TestC{T}{X}{A}{B}$ and every subset $\Out{Y}\subseteq\Out{X}$, there exists a coarse-grained event formally given by:
\begin{align}\label{eq:coarse_graining}
\T{T}_{\Out{Y}}\coloneqq\sum_{y\in\Out{Y}}\T{T}_y,
\end{align}
where sequential and parallel composition distribute over sums. The coarse-graining operation~\eqref{eq:coarse_graining} clearly boils down to the usual sum over real when $\sys{A}=\sys{I}=\sys{B}$.

Now, let $\Theta$ be an OPT. For all systems $\sys{A},\sys{B}\in\Sys{\Theta}$, we define the following equivalence relation for every $\T{T}_1,\T{T}_2\in\Event{A}{B}$: $\T{T}_1\sim\T{T}_2$ if for all $\sys{E}\in\Sys{\Theta},\rho\in\Event{I}{AE},a\in\Event{BE}{I}$ one has
\begin{align}\label{eq:equivalence_relation}
	\begin{aligned}
	\Qcircuit @C=1em @R=1em {
		&\multiprepareC{1}{\rho}&\s{A}\qw&\gate{\T{T}_1}&\s{B}\qw&\multimeasureD{1}{a}&
		\\
		&\pureghost{\rho}&\qw&\s{E}\qw&\qw&\ghost{a}&
	}
	\end{aligned}
	=
	\begin{aligned}
	\Qcircuit @C=1em @R=1em {
		&\multiprepareC{1}{\rho}&\s{A}\qw&\gate{\T{T}_2}&\s{B}\qw&\multimeasureD{1}{a}&
		\\
		&\pureghost{\rho}&\qw&\s{E}\qw&\qw&\ghost{a}&
	}
	\end{aligned}.
\end{align}
The members of the quotient class
\begin{align*}
\Transf{A}{B}\coloneqq\Event{A}{B}/\sim
\end{align*}
are called the \emph{transformations from $\sys{A}$ to $\sys{B}$}. Quotient classes of preparations $\St{A}\coloneqq\Transf{I}{A}$ are called \emph{the states of $\sys{A}$}, while those of observations $\Eff{A}\coloneqq\Transf{A}{I}$ \emph{the effects of $\sys{A}$}. Finally, equivalence classes of tests are called \emph{instruments}, and are collected by the quotient class $\InstrA{\Theta}$. We will often denote the states of $\sys{A}$ as $\rket{\rho}_{\sys{A}}$, and the effects of $\sys{A}$ as $\rbra{a}_\sys{A}$. In the case of the parallel composition of states or effects, we will safely omit the symbol $\boxtimes$ without loss of clarity---simply writing $\rket{\rho_1}_{\sys{A}_1}\rket{\rho_2}_{\sys{A}_2}$ or $\rbra{a_1}_{\sys{A}_1}\rbra{a_2}_{\sys{A}_2}$---while the sequential composition between a state and an effect will be given by the pairing: $\rbraket{a}{\rho}_\sys{A}\coloneqq\rbra{a}_\sys{A}\circ\rket{\rho}_\sys{A}\in[0,1]$.
\begin{remark}\label{rem:transformation_equiv_class}
	We observe that an arbitrary transformation $\T{T}\in\Transf{A}{B}$ is indeed defined by the entire class of transformations $\lbrace\T{T}\boxtimes\T{I}_\sys{E}\rbrace_{\sys{E}\in\Sys{\Theta}}$. This means that, given two transformations $\T{T}_1,\T{T}_2\in\Transf{A}{B}$, the following holds:
	\begin{align}\label{eq:identity_extended}
	\T{T}_1=\T{T}_2\quad \Longleftrightarrow\quad \T{T}_1\boxtimes\T{I}_\sys{E}=\T{T}_2\boxtimes\T{I}_\sys{E},\ \ \forall \sys{E}\in\Sys{\Theta}.
	\end{align}
\end{remark}
\begin{remark}\label{rem:OPT}
	Two tests associated with the same probability distribution in every possible experiment are said to be \emph{operationally equivalent}. Importantly, in the construction and characterization of an OPT, one is interested in equivalence classes of tests, namely in instruments. Thus, as it will be explicitly shown in Secs.~\ref{sec:constructing_an_opt} and~\ref{sec:BCT}, an OPT can be defined by specifying: (i) the systems $\Sys{\Theta}$, (ii) a parallel composition rule $\boxtimes$ for systems and states,
	and (iii) the instruments $\InstrA{\Theta}$ and their parallel composition $\boxtimes$.
\end{remark}
Now, say that we have a preparation with output system $\sys{A}$, an observation with input system $\sys{B}$, and in between a test from $\sys{A}$ to $\sys{B}$: this scenario corresponds to a joint probability distribution of the events that may occur in the experiment. More generally, the goal of a physical theory is to associate a probability with each event of a composite test as the above-described one. In particular:
\begin{align}\label{eq:probability}
\begin{aligned}
\Qcircuit @C=1em @R=1em {
	\prepareC{\rho_x}&\s{A}\qw&\gate{\T{T}_y}&\s{B}\qw&\measureD{a_z}
}
\end{aligned}\coloneqq
p\left(x,y,z\left.|\right. \srho_{\mathsf{X}},\Test{T}{Y},\Test{a}{Z} \right)\ ,
\end{align}
where $\rho_x,\T{T}_y,a_z$ are transformations of, respectively, the instruments $\srho_{\mathsf{X}}^{\sys{I}\!\rightarrow\!\sys{A}},\TestC{T}{Y}{A}{B},\TestC{a}{Z}{B}{I}$. 
The above definition shows the parametric dependence of the joint probability distribution of outcomes on the whole closed circuit representing a test $\sys I\!\to\!\sys I$. Along with condition~\eqref{eq:indep}, Eq.~\eqref{eq:probability} amounts to say that disconnected circuits represent statistically independent processes or experiments. 

By virtue of Eq.~\eqref{eq:equivalence_relation}, effects separate states and, \emph{vice versa}, states separate effects. This means that, for every pair of states $\rket{\rho}_\sys{A},\rket{\sigma}_\sys{A}\in\St{A}$ such that $\rket{\rho}_\sys{A}\neq\rket{\sigma}_\sys{A}$, there exists an effect $\rbra{a}_\sys{A}\in\Eff{A}$ such that $\rbraket{a}{\rho}_\sys{A}\neq\rbraket{a}{\sigma}_\sys{A}$ (and \emph{vice versa} for every pair of different effects). The latter amounts to say that, given two different states (or effects), there exists an experiment producing different statistics for them. States (effects) can be seen as functionals from effects (states) to
probabilities. Consequently---in the light of definitions~\eqref{eq:scalar_multiplication},~\eqref{eq:indep}, and~\eqref{eq:coarse_graining}---one can take linear combinations of them. In particular, the pairing between states and effects defines a complete class of linearly independent vectors in $\St{A}$, spanning a real vector space $\StR{A}\coloneqq\Span_{\mathbb{R}}{\St{A}}$. Similarly, $\Eff{A}$ is a class of non-negative linear functionals on $\St{A}$, and spans the dual space $\EffR{A}\coloneqq\Span_{\mathbb{R}}{\Eff{A}}=\StR{A}^{\vee}$. The members of $\StR{A}$ and $\EffR{A}$ are called, respectively, \emph{generalised states} and \emph{generalised effects}. The dimension $\D{A}\coloneqq\dim{\StR{A}}$ is called \emph{the size (or dimension) of the system $\sys{A}$}. In the following, every system $\sys{A}$ will be always thought as accompanied by its associated size $\D{A}$. A \emph{type of system} is an infinite collection of systems with a given size $D$. In an OPT, $\D{A}$ is the minimum number of probabilities that must be ascertained to determine the state of a system $\sys{A}$. For instance, in QT, for a system $\sys{A}$ whose associated Hilbert space has dimension $d_{\sys{A}}$, one has $\D{A}=d_{\sys{A}}^2$. In the present work we will restrict to the case of \emph{finite-dimensional} OPTs, namely theories $\Theta$ where $\D{A}<+\infty$ for all $\sys{A}\in\Sys{\Theta}$. In this case, for all systems $\sys{A}$ one has $\dim{\StR{A}}=\dim{\EffR{A}}$.

Every transformation $\T{T}$ from $\sys{A}$ to $\sys{B}$ maps $\St{A}$ to $\St{B}$.
Accordingly, transformations from $\sys{A}$ to $\sys{B}$ span a real vector space $\TransfR{A}{B}\coloneqq\Span_{\mathbb{R}}{\Transf{A}{B}}$, whose members are called \emph{generalised transformations}. Given $\sys{A},\sys{B},\sys{C},\sys{D}\in\Sys{\Theta}$, the transformations of the form $\T T\boxtimes\T T'\in\Transf{AC}{BD}$ are called {\em local transformations}. The transformations of the form $\T{T}\boxtimes\T{I}_\sys{C}\in\Transf{AC}{BC}$ and $\T{I}_\sys{A}\boxtimes\T{T}'\in\Transf{AC}{AD}$ are called, respectively, \emph{local transformations from the system $\sys{A}$ to the system $\sys{B}$} and \emph{from the system $\sys C$ to the system $\sys{D}$}. A transformation $\T{R}\in\Transf{A}{B}$ is called \emph{reversible} if there exists a transformation $\T{R}^{-1}\in\Transf{B}{A}$ such that $\T{R}^{-1}\T{R}=\T{I}_\sys{A}$ and $\T{R}\T{R}^{-1}=\T{I}_\sys{B}$. The class of reversible transformations of a theory $\Theta$ will be denoted by $\RevTransfA{\Theta}$.

Now, for all $p\in\St{I}$ and $\T{T}\in\Transf{A}{B}$, we will write the following:
\begin{align}
\begin{split}\label{eq:scalar_multiplication_consistency}
\begin{aligned}
\Qcircuit @C=1em @R=1.3em
{   
	&\s{A}&\gate{p\T{T}}&\s{B}\qw&
}
\end{aligned}
=
\begin{aligned}
\Qcircuit @C=1em @R=1.3em
{   
	&p&\s{A}&\gate{\T{T}}&\s{B}\qw&
}
\end{aligned},
\end{split}
\end{align}
where $p\T{T}$ represents the scalar multiplication on the vector space $\TransfR{A}{B}$. One has that $\StR{I}\cong\mathbb{R}$, implying $\D{I}=1$. Then, for all $p\in\St{I}$, $\sys{A},\sys{B}\in\Sys{\Theta}$, and $\T{T}\in\Transf{A}{B}$, one has: $\sys{I}\boxtimes\sys{A}=\sys{I}\otimes\sys{A}=\sys{A}=\sys{A}\boxtimes\sys{I}=\sys{A}\otimes\sys{I}$ and $p\boxtimes\T{T}=p\otimes\T{T}$---where $\otimes$ is the standard tensor product. The identity of the trivial system is clearly given by the unit, i.e., $\T{I}_\sys{I}=1$. Finally, the braiding $\T{S}$ satisfies the following relation:
\begin{align}\label{eq:braiding_identical}
\T{S}_{\sys{A},\sys{I}}=\T{I}_{\sys{A}}=\T{S}^{-1}_{\sys{A},\sys{I}},\quad \forall\sys{A}\in\Sys{\Theta}.
\end{align}

The transformations from a system $\sys{A}$ to a system $\sys{B}$ are contained in a convex and bounded subset of $\TransfR{A}{B}$~\cite{bookDCP2017}. Then, it is often convenient to consider the convex cone generated by the convex combinations of transformations in $\Transf{A}{B}$. The latter will be denoted by $\TransfC{A}{B}$---clearly, $\StC{A}$ and $\EffC{A}$ will be used in the case of the convex cones generated by, respectively, $\St{A}$ and $\Eff{A}$. A transformation $\T{T}\in\Transf{A}{B}$ is called \emph{atomic} if, given $\T{T}_{1},\T{T}_{2}\in\Transf{A}{B}$, one has the following implication:
\begin{align}\label{eq:atomicity}
\T{T}=\T{T}_{1}+\T{T}_{2}\implies\T{T}_{1}\propto \T{T}_{2}.
\end{align}
A transformation $\T{T}\in\Transf{A}{B}$ is called \emph{extremal} if, given $\T{T}_{1},\T{T}_{2}\in\Transf{A}{B}$ and $p\in(0,1)$, the condition $\T{T}=p\T{T}_{1}+(1-p)\T{T}_{2}$ implies $\T{T}_{1}=\T{T}_{2}$. On the other hand, we call \emph{a refinement of $\T{T}$} a set $\{\T{T}_i\}_{i\in\mathcal{R}}\subseteq\Transf{A}{B}$ such that $\T{T}=\sum_{i\in\mathcal{R}}\T{T}_{i}$, and similarly we define \emph{a convex refinement of $\T{T}$} a set $\{\T{T}_i\}_{i\in\mathcal{R}}\subseteq\Transf{A}{B}$ such that $\T{T}=\sum_{i\in\mathcal{R}}p_i\T{T}_{i}$ for some probability distribution $\{p_i\}_{i\in\mathcal{R}}$. Finally, $\Refi{T}$ and $\ConvRef{T}$ will denote the union of, respectively, all the refinements and all the convex refinements of $\T{T}$.

We assume that $0\in\Transf{I}{I}$, so that events associated with a zero probability can be considered in a theory. Accordingly, the existence of parallel composition implies that the null generalised transformation $\varepsilon_{\sys{A}\to\sys{B}}\in\TransfR{A}{B}$, such that $\rbra{a}_{\sys{BE}}\left(\varepsilon_{\sys{A}\to\sys{B}}\boxtimes\T{I}_{\sys{E}}\right)\rket{\rho}_{\sys{AE}}=0$ for every $\sys{E}\in\Sys{\Theta},\rket{\rho}_{\sys{AE}}\in\St{AE},\rbra{a}_{\sys{BE}}\in\Eff{BE}$, is included in the transformations $\Transf{A}{B}$ for all $\sys{A},\sys{B}$. A null transformation always occurs with null marginal probability. A transformation $\T{T}\in\Transf{A}{B}$ such that there exists a singleton instrument $\Instr{T}{\star}=\{\T{T}\}\in\InstrAB{A}{B}$ is called \emph{deterministic}. In continuity with the past literature, we will often call \emph{channels} those deterministic transformations which are not states or effects. Oppositely to the case of null transformations, a deterministic physical process happens with certainty, i.e., with marginal probability $1$. For instance, a state is deterministic if and only if it gives probability 1 on every deterministic effect, or, in other words, if and only if it is normalized. We will denote by $\TransfN{A}{B}$ the class of deterministic transformations from system $\sys{A}$ to system $\sys{B}$---clearly, $\StN{A}$ and $\EffN{A}$ will be used in the case of, respectively, deterministic states and effects of system $\sys{A}$. For any given instrument $\TestC{T}{X}{A}{B}$, one has clearly that $\sum_{x\in\mathsf{X}}\T{T}_x\in\TransfN{A}{B}$, and, as a particular case, $\TransfN{I}{I}=\{\T{I}_{\sys{I}}\}=\{1\}$. In the light of the definition of null transformation, we will assume that $\TestC{T}{X}{A}{B}\in\InstrA{\Theta}$ if and only if $\varepsilon_{\sys{A}\to\sys{B}}\cup\TestC{T}{X}{A}{B}\in\InstrA{\Theta}$. The reason why this is a convenient assumption is the following characterisation of the deterministic transformations. If $\T{T}\in\Transf{A}{B}$ is a deterministic transformation, $\T{T}\boxtimes\T{I}_{\sys{E}}$ clearly maps $\StN{AE}$ to $\StN{BE}$ for all $\sys{E}$.
Conversely, let $\T{T}\in\Transf{A}{B}$ be such that $\T{T}\boxtimes\T{I}_{\sys{E}}$ maps $\StN{AE}$ to $\StN{BE}$ for all $\sys{E}$. Suppose now that there exists an instrument $\InstrC{T}{K\cup\star}{A}{B}$ of the form $\InstrC{T}{K\cup\star}{A}{B}=\{\T{T}_k\}_{k\in\mathsf{K}}\cup\{\T{T}\}$ with $\T{T}_k\in\Transf{A}{B}$ for every $k\in\mathsf{K}$. Let $\{\rket{\rho_j}_{\sys{AE}}\}_{j\in\mathsf{J}}$ and $\{\rbra{a_l}_{\sys{BE}}\}_{l\in\mathsf{L}}$ be, respectively, a preparation-instrument of $\sys{AE}$ and an observation-instrument of $\sys{BE}$. By coarse-graining, $\rket{\rho}_{\sys{AE}}\coloneqq\sum_{j\in\mathsf{J}}\rket{\rho_j}_{\sys{AE}}$ and $\rbra{\tilde{e}}_{\sys{B}}\coloneqq\sum_{l\in\mathsf{L}}\rbra{a_l}_{\sys{BE}}$ are deterministic. Then, by hypothesis, $\left(\T{T}\boxtimes\T{I}_{\sys{E}}\right)\rket{\rho}_{\sys{AE}}\in\StN{BE}$ also holds. Accordingly, $\TransfN{I}{I}\ni\rbra{\tilde{e}}_{\sys{BE}}\left(\T{T}\boxtimes\T{I}_{\sys{E}}\right)\rket{\rho}_{\sys{AE}}=1$. On the other hand, it must also be
\begin{align*}
\rbra{\tilde{e}}_{\sys{BE}}\left(\T{T}\boxtimes\T{I}_{\sys{E}}\right)\rket{\rho}_{\sys{AE}} + \sum_{k\in\mathsf{K}}\rbra{\tilde{e}}_{\sys{BE}}\left(\T{T}_k\boxtimes\T{I}_{\sys{E}}\right)\rket{\rho}_{\sys{AE}} = 1,
\end{align*}
which in turn implies:
\begin{align*}
\sum_{l\in\mathsf{L},k\in\mathsf{K},j\in\mathsf{J}}\rbra{a_l}_{\sys{BE}}\left(\T{T}_k\boxtimes\T{I}_{\sys{E}}\right)\rket{\rho_j}_{\sys{AE}} = 0,
\end{align*}
with $\rbra{a_l}_{\sys{BE}}\left(\T{T}_k\boxtimes\T{I}_{\sys{E}}\right)\rket{\rho_j}_{\sys{AE}}\geq 0$ for every $l\in\mathsf{L},k\in\mathsf{K},j\in\mathsf{J}$ by definition. Accordingly, since the previous argument does not depend on the choice of instruments $\{\rket{\rho_j}_{\sys{AE}}\}_{j\in\mathsf{J}}$ and $\{\rbra{a_l}_{\sys{BE}}\}_{l\in\mathsf{L}}$, we conclude that it must be $\T{T}_k=\varepsilon_{\sys{A}\to\sys{B}}$ for all $k\in\mathsf{K}$, namely, $\T{T}\in\TransfN{A}{B}$.
Thus, a transformation $\T{T}\in\Transf{A}{B}$ is deterministic if and only if $\T{T}\boxtimes\T{I}_{\sys{E}}$ maps $\StN{AE}$ to $\StN{BE}$ for all systems $\sys{E}$. By the same argument as above, one can show that, equivalently, a transformation $\T{T}\in\Transf{A}{B}$ is deterministic if and only if $\T{T}\boxtimes\T{I}_{\sys{E}}$ maps $\EffN{AE}$ to $\EffN{BE}$ for all systems $\sys{E}$. The property of being deterministic is clearly preserved under both sequential and parallel composition.

As a first consequence, it is clear why in Subsec.~\ref{subsec:operational} the identity process $\T{I}$ has been, by definition, associated with a family of singleton tests. In the remainder of this work, we will make extensive use of the above characterisation, which we now use in order to prove the following useful result.
\begin{proposition}\label{prop:pure_to_pure}
	Let $\Theta$ be an OPT, and $\T{R}\in\RevTransf{A}{A'}$ be a reversible transformation from a system $\sys{A}\in\Sys{\Theta}$ to a system $\sys{A}'\in\Sys{\Theta}$. Then $\T{R}\in\TransfN{A}{A'}$, and preserves both atomicity and extremality via sequential composition. In particular, $\RevTransf{A}{A}$ is a group of permutations on $\PurSt{A}$.
\end{proposition}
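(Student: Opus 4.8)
The plan is to establish the four assertions in the order stated, the real work being the first one, $\T{R}\in\TransfN{A}{A'}$. The naive idea of ``inverting the instrument'' is circular, since one may not assume at the outset that $\T{R}^{-1}$ is deterministic; the main obstacle is exactly bypassing this circularity. I would invoke the characterisation proven just above---that a transformation is deterministic iff it is the sole non-null outcome of every instrument containing it. Embed $\T{R}$ in an arbitrary instrument $\{\T{R}\}\cup\{\T{T}_k\}_{k\in\mathsf{K}}$ from $\sys{A}$ to $\sys{A}'$, and take any instrument from $\sys{A}'$ to $\sys{A}$ having $\T{R}^{-1}$ among its outcomes (one exists, as every transformation occurs in some test). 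Composing these sequentially yields an instrument from $\sys{A}$ to $\sys{A}$ one of whose outcomes, say $(x_0,y_0)$, carries the event $\T{R}^{-1}\T{R}=\T{I}_\sys{A}$. Coarse-graining all remaining outcomes into a single event $\T{G}$ gives the two-outcome instrument $\{\T{I}_\sys{A},\T{G}\}$; since $\T{I}_\sys{A}$ is deterministic (it is a singleton test by definition), the characterisation forces $\T{G}=\varepsilon_{\sys{A}\to\sys{A}}$. As in the proof of that characterisation, a sum of transformations equal to the null transformation has every summand null, because pairing with any normalised state and effect gives a sum of non-negatives equal to zero; hence each remaining outcome is null, in particular the one carrying $\T{R}^{-1}\T{T}_k$, so $\T{R}^{-1}\T{T}_k=\varepsilon_{\sys{A}\to\sys{A}}$ for each $k$. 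Left-composing with $\T{R}$ and using $\T{R}\T{R}^{-1}=\T{I}_{\sys{A}'}$ together with $\T{R}\,\varepsilon_{\sys{A}\to\sys{A}}=\varepsilon_{\sys{A}\to\sys{A}'}$ gives $\T{T}_k=\varepsilon_{\sys{A}\to\sys{A}'}$ for all $k$. As the instrument was arbitrary, $\T{R}$ is deterministic.

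For the second assertion I would argue by conjugation with $\T{R}^{-1}$, using that sequential composition is linear on the spaces of generalised transformations. Let $\T{A}\in\Transf{X}{A}$ be atomic and suppose $\T{R}\T{A}=\T{S}_1+\T{S}_2$ with $\T{S}_1,\T{S}_2\in\Transf{X}{A'}$; left-composing with $\T{R}^{-1}$ gives $\T{A}=\T{R}^{-1}\T{S}_1+\T{R}^{-1}\T{S}_2$, so atomicity of $\T{A}$ yields $\T{R}^{-1}\T{S}_1\propto\T{R}^{-1}\T{S}_2$, and left-composing again with $\T{R}$ (linearity) propagates this to $\T{S}_1\propto\T{S}_2$, whence $\T{R}\T{A}$ is atomic. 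The identical computation with a convex decomposition $\T{R}\T{E}=p\T{S}_1+(1-p)\T{S}_2$, $p\in(0,1)$, and the equality case of the definition shows that $\T{R}$ preserves extremality. The symmetric statements for right-composition $\T{T}\mapsto\T{T}\T{R}$ follow by the same argument with the roles of $\T{R}$ and $\T{R}^{-1}$ exchanged.

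Finally, specialising the output system to $\sys{A}$ and the input of $\T{A},\T{E}$ to the trivial system $\sys{I}$, the transformation $\T{R}\in\RevTransf{A}{A}$ sends states to states, preserving normalisation by the first assertion (it maps $\StN{A}$ to $\StN{A}$) and atomicity and extremality by the second; hence it maps $\PurSt{A}$ into itself. Since $\T{R}^{-1}$ is itself reversible it has the same properties and also stabilises $\PurSt{A}$, while the relations $\T{R}^{-1}\T{R}=\T{R}\T{R}^{-1}=\T{I}_\sys{A}$ exhibit the two induced maps on $\PurSt{A}$ as mutual inverses; thus $\T{R}$ acts as a bijection, i.e.\ a permutation, of $\PurSt{A}$. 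As $\RevTransf{A}{A}$ is closed under $\circ$ (with $\T{R}\T{S}$ inverted by $\T{S}^{-1}\T{R}^{-1}$), contains $\T{I}_\sys{A}$, and contains inverses, it is a group, and $\T{R}\mapsto(\rket{\rho}\mapsto\T{R}\rket{\rho})$ is a homomorphism into $\Sym(\PurSt{A})$, realising $\RevTransf{A}{A}$ as a group of permutations on $\PurSt{A}$. Once the circularity in the first assertion is resolved by the composite-instrument trick, the remaining steps are routine.
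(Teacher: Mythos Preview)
Your proof is correct and follows essentially the same approach as the paper's: compose an instrument containing $\T{R}$ with one containing $\T{R}^{-1}$, observe that one outcome of the composite is $\T{I}_\sys{A}$, and use the characterisation of deterministic transformations to force the remaining outcomes to be null; the atomicity/extremality arguments by conjugation with $\T{R}^{-1}$ are identical to the paper's (which merely phrases the atomicity case by contradiction). The only cosmetic difference is that the paper picks specific two-outcome instruments $\{\T{R},\T{D}-\T{R}\}$ and $\{\T{R}^{-1},\T{D}'-\T{R}^{-1}\}$, while you work with an arbitrary instrument containing $\T{R}$ and then cancel $\T{R}^{-1}$ at the end---both lead to the same conclusion by the same mechanism.
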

\begin{proof}
	By hypothesis, there exists $\T{R}^{-1}\in\RevTransf{A'}{A}$ such that $\T{R}^{-1}\circ\T{R}=\T{I}_{\sys{A}}$ holds. By definition, there also exist $\T{D}\in\TransfN{A}{A'}$ and $\T{D}'\in\TransfN{A'}{A}$ such that $\mathsf{T}\coloneqq\{\T{R},\T{D}-\T{R}\}$ and $\mathsf{T}'\coloneqq\{\T{R}^{-1},\T{D}'-\T{R}^{-1}\}$ are instruments of $\Theta$. Their sequential composition $\mathsf{T}'\circ\mathsf{T} = \{\T{I}_{\sys{A}},\T{R}^{-1}\T{D}-\T{I}_{\sys{A}},\T{D}'\T{R}-\T{I}_{\sys{A}},\T{D}'\T{D}-\T{D}'\T{R}-\T{R}^{-1}\T{D}+\T{I}_{\sys{A}}\}$ is also an instrument. Since $\T{I}_{\sys{A}}$ is associated with a singleton instrument by definition (see Subsec.~\ref{subsec:operational}), $\T{I}_{\sys{A}}\in\TransfN{A}{A}$ holds. Accordingly, by direct inspection of $\mathsf{T}'\circ\mathsf{T}$ and using the characterisation of the deterministic transformations, one must have $\T{D}=\T{R}$, namely $\T{R}\in\TransfN{A}{A'}$. We now just prove that if $\T{A}\in\Transf{C}{A}$ is atomic, then $\T{R}\T{A}$ is also atomic. The case of \emph{$\T{B}\in\Transf{A'}{C}$ atomic implies $\T{B}\T{R}$ atomic} will be then a straightforward adaptation of the former case; on the other hand, preservation of extremality via sequential composition can be proven in an analogous manner, just by using the definition of extremal transformation instead of that of atomic one. Let then $\T{A}\in\Transf{C}{A}$ be an atomic transformation. By contradiction, suppose that $\T{R}\T{A}$ is not atomic. Then, by definition, there exist $\T{A}_1,\T{A}_2\in\Transf{C}{A}$ with $\T{A}_1\not\propto\T{A}_2$ such that:
	\begin{align*}
	\T{R}\T{A} = 
	\T{A}_1+\T{A}_2\ \Longrightarrow\ \T{A} =  \T{R}^{-1}\T{A}_1+ \T{R}^{-1}\T{A}_2.
	\end{align*}
	However, by atomicity of $\T{A}$, the above equation implies $\T{R}^{-1}\T{A}_1\propto\T{R}^{-1}\T{A}_2$, i.e., $\T{A}_1\propto\T{A}_2$, that is absurd.
\end{proof}
By Proposition~\ref{prop:pure_to_pure}, it is clear why in Subsec.~\ref{subsec:operational} also the braiding $\T{S}$ has been, by definition, associated with a family of singleton tests.

When $\St{A}$ coincides with 
its convex hull for every $\sys{A}\in\Sys{\Theta}$, the theory $\Theta$ is called \emph{convex}. We will denote by $\ExtStZ{A}$ the set of extremal points of the convex hull of $\St{A}$. Since $\rket{\varepsilon}_\sys{A}\in \ExtStZ{A}$ for every system $\sys{A}$, we also define the set of non-null extremal points $\ExtSt{A}\coloneqq\ExtStZ{A}\setminus \lbrace\rket{\varepsilon}_\sys{A}\rbrace$. The deterministic extremal states are those historically called \emph{the pure states}. Deterministic states which are not extremal are those historically called \emph{mixed states}. More generally, we will call \emph{mixed} those states which are neither 
extremal nor atomic. This nomenclature is extended to arbitrary transformations: accordingly, \emph{extremal channels} will be called \emph{pure (transformations)}. We will denote by $\PurSt{A}\subseteq\ExtSt{A}$ the set of pure states of a system $\sys{A}$. In a convex theory, every mixture of pure states can be \emph{deterministically prepared}. CT and QT are examples of convex theories.
\begin{remark}\label{rem:atomicity}
	One may argue that \emph{atomic}
	should be used as a synonym of \emph{pure} (see e.g.~Ref.~\cite{chiribella2015operational}). However, in general we keep the two notions of atomicity and purity distinct. The only case where there is no need for a distinction between atomicity and purity, at least for states, is that of theories where every state is proportional to a
	deterministic one.
	Moreover, the theory presented in this work provides an explicit motivation to take the two notions distinct. Indeed, as it will be discussed in Sec.~\ref{sec:features}, in the case of transformations of the theory which are not states or effects, the property of purity is preserved by parallel composition, while the one of atomicity is not.
\end{remark}
Figs.~\ref{fig:convex1} and~\ref{fig:convex2} illustrate the main notions related to atomicity and convexity introduced in the present subsection.

\begin{figure}
	\begin{center}
	\includegraphics[scale=0.32]{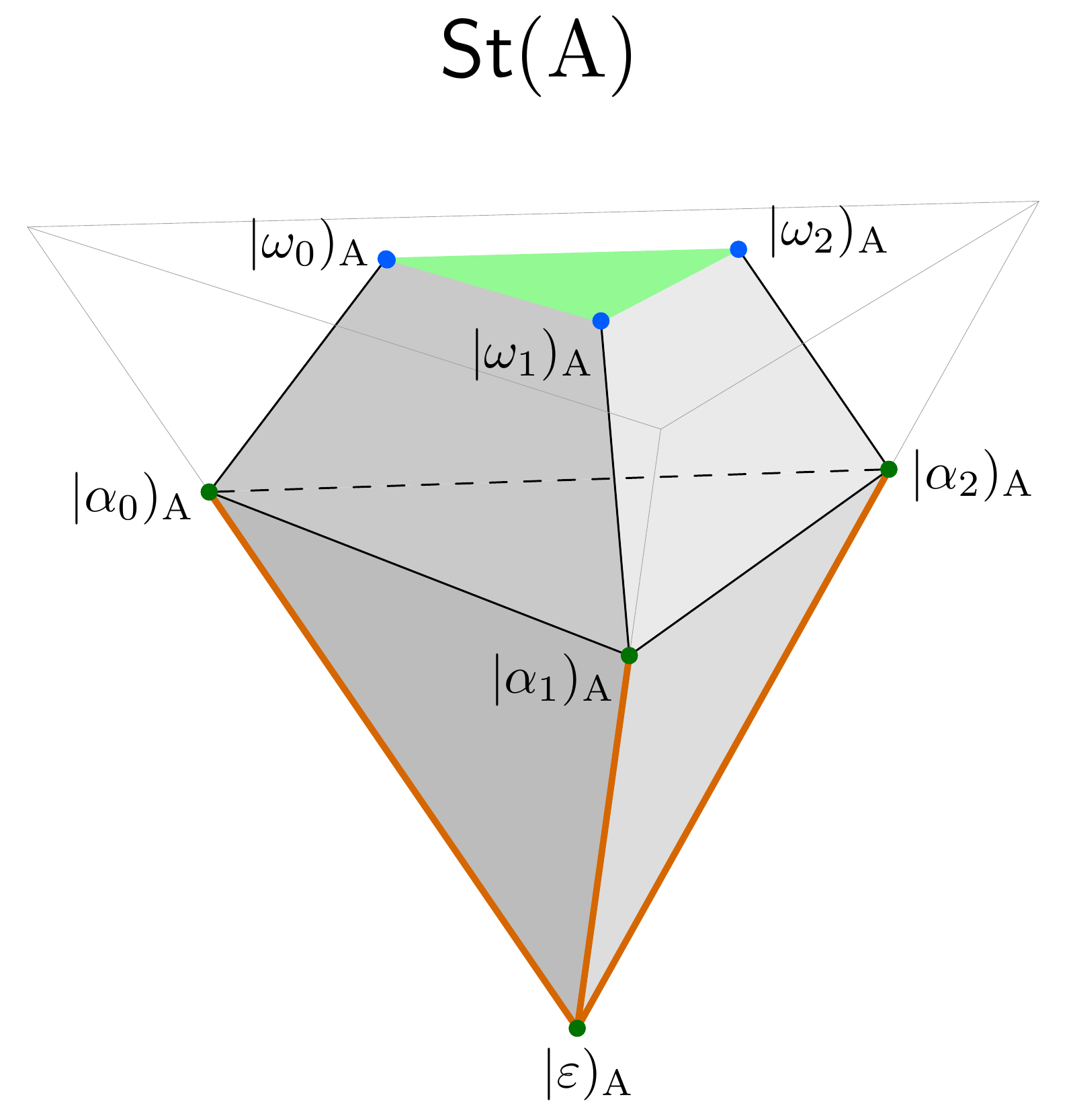}
    \end{center}
	\caption{Example of a set of states for a system $\sys{A}$ of size $3$. $\St{A}$ is convex and contained in the complete state-space of a classical trit---namely, a tetrahedron. The elements depicted in
	orange are the atomic ones, while those in dark green---the null state $\rket{\varepsilon}_{\sys{A}}$ and $\rket{\alpha_i}_{\sys{A}}$ for $i\in\{0,1,2\}$---are both atomic and extremal. The elements in blue---$\rket{\omega_i}_{\sys{A}}$ for $i\in\{0,1,2\}$---are the pure states of $\St{A}$, which are extremal but not atomic. The elements in light green---i.e.~those contained in the upper face---are the deterministic mixed states of $\sys{A}$. Finally, all the remaining states (in grey) are mixed.}
	\label{fig:convex1}
\end{figure}
\begin{figure}
	\begin{center}
	\includegraphics[scale=0.49]{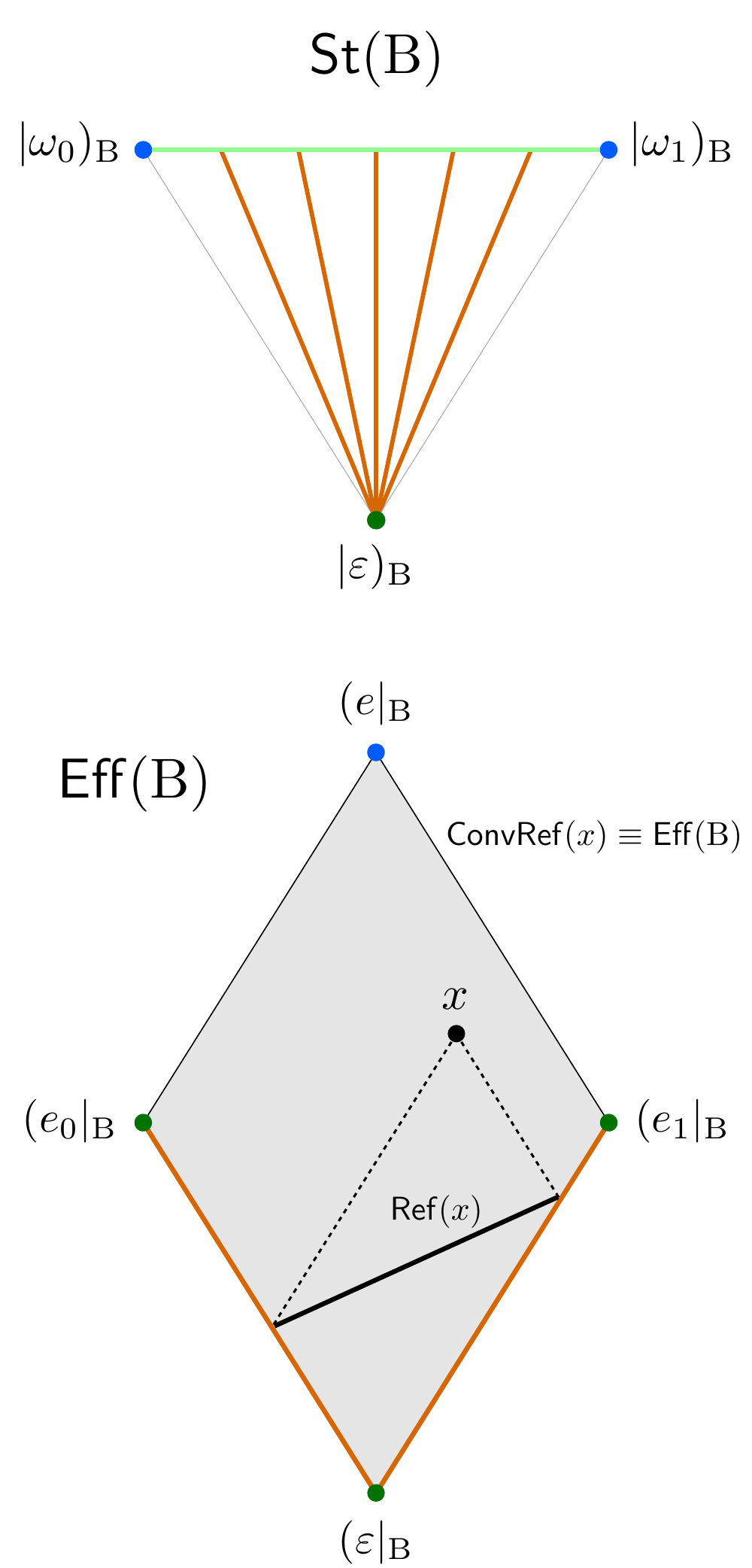}
	\end{center}
	\caption{Example of sets of states and effects of a system $\sys{B}$ of size $2$. $\St{B}$ is contained in the complete state-space of a classical bit---namely, a triangle---and is not convex. The colours are associated to the same properties as in Fig.~\ref{fig:convex1}. The atomic rays of $\St{B}$ provide the example of atomic elements which are not proportional to extremal ones. One has $\left( e_i | \omega_j \right)_{\sys{B}}=\delta_{ij}$ for every $i,j\in\{0,1\}$. $\Eff{B}$ contains a unique deterministic effect---namely, $\rbra{e}_{\sys{B}}=\rbra{e_0}_{\sys{B}}+\rbra{e_1}_{\sys{B}}$---which is pure and has a trivial convex refinement set. The refinement and convex refinement sets of a mixed (internal) point $x\in\Eff{B}$ are depicted.}
	\label{fig:convex2}
\end{figure}

\section{\textbf{The operational probabilistic structure: preliminary definitions and results}}\label{sec:structure_consistency}

In the present section, we will review some important properties in the scope of operational probabilistic theories, also proving a first relevant result related to the so-called \emph{bilocal-tomographic theories}. Then, we will set and discuss the notions of no-restriction hypothesis and causality in the OPT framework. Finally, we will focus on the family of \emph{simplicial theories}, which comprises the probabilistic theory introduced in the present work.

\subsection{Parallel composition and properties of $n$-local-tomographic theories}\label{subsec:preliminary_results}
We first recall some definitions and properties which are relevant to the present work. Let us start by defining the following map:
\begin{align*}
f\colon(\sys{A},\sys{B})\mapsto \D{AB}\coloneqq f(\sys{A},\sys{B}).
\end{align*}
The above map $f$ defines a rule for the size of composite systems. The following must always hold for all $\sys{A},\sys{B},\sys{C}\in\Sys{\Theta}$:
\begin{align}
&f(\sys I,\sys{A})=\D{IA}=\D{A}=\D{AI}=f(\sys{A},\sys I),\label{eq:fI}\\
&f(\sys{AB},\sys C)=f(\sys A,\sys{BC})\label{eq:fass}
\end{align}
Furthermore, since a theory is endowed with a braiding, one has the additional constraint:
\begin{align}
f(\sys A,\sys B)=f(\sys B,\sys A).\label{eq:fbraid}
\end{align}
We stress that when $f(\sys A,\sys B)=g(D_\sys A,D_\sys B)$, i.e., the dimension 
of the composite system depends only on the dimensions of the components, the map $f$ boils down to an operation $g$ on the dimensions of systems.
In this case, the above conditions~\eqref{eq:fI}, \eqref{eq:fass} and~\eqref{eq:fbraid} take the form:
\begin{align*}
&g(D_\sys A,1)=g(1,D_\sys A)=D_\sys A,\\
&g(g(D_{\sys A},D_{\sys B}),D_{\sys C})=g(D_{\sys A},g(D_{\sys B},D_{\sys C})),\\
&g(D_\sys A,D_\sys B)=g(D_\sys B,D_\sys A).
\end{align*}
In general, one has the following inequality~\cite{d2019classical}:
\begin{align}\label{eq:dimAB_geq}
\D{AB}\geq\D{A}\D{B},\quad \forall\sys{A},\sys{B}\in\Sys{\Theta}.
\end{align}
\begin{property}[$n$-local discriminability~\cite{d2019classical}]\label{prope:n_local}
	Let $n\leq m$. The effects obtained as a conical combination of the parallel compositions of effects $a_1,a_2,\ldots,a_l$, where $a_j$ is $k_j$-partite with $k_j\leq n$ for all values of $j$, are separating for the $m$-partite states.
\end{property}
The above property was introduced in Ref.~\cite{hardy2012limited}, where it was named \emph{$n$-local tomography}. The case of $n$-local discriminability for $n=1$ is called \emph{local discriminability}. A theory $\Theta$ satisfies local discriminability if and only if
\begin{align}\label{eq:dimensions_local}
\D{AB}=\D{A}\D{B},\quad \forall \sys{A},\sys{B}\in\Sys{\Theta}
\end{align}
(see Ref.~\cite{bookDCP2017} for a proof). In this case, for every composite system $\sys{AB}$, the vector space $\StR{AB}$ is spanned by the \emph{local (or product) states}, given by the parallel composition of single-system states. This is the case e.g.~of QT. Given two systems $\sys{A}$ and $\sys{B}$, the \emph{separable states} of the bipartite system $\sys{AB}$ are those of the form:
\begin{align}\label{eq:separable_states}
&\rket{\sigma}_{\sys{AB}} = \sum_{i\in I} \rket{\alpha_i}_{\sys{A}}\rket{\beta_i}_{\sys{B}},
\end{align}
where $\St{A}\ni\rket{\alpha_i}_\sys A\neq\rket\varepsilon_\sys A$, $\St{B}\ni\rket{\beta_i}_\sys B\neq\rket\varepsilon_\sys B$ for every $i\in I$. By negation, the \emph{entangled states} are those that are \emph{not separable}. Notice that the above definition is straightforwardly generalised to arbitrary transformations. If a theory does not satisfy local discriminability, then it necessarily has entangled states~\cite{d2019classical}. The converse is not true: QT is the example of a theory having entangled states and satisfying local discriminability.

Let now $\Theta$ be an OPT. In the light of Eq.~\eqref{eq:dimAB_geq}, we can define the non-negative integer quantity:
\begin{align*}
\Delta^{(2)}_{\sys{AB}}\coloneqq \D{AB}-\D{A}\D{B},\quad \forall \sys{A},\sys{B}\in\Sys{\Theta}.
\end{align*}
The ``excess'' $\Delta^{(2)}_{\sys{AB}}$ is the dimension of any subspace $\mathcal{B}_{\sys{AB}}$  of $\StR{AB}$ which is linearly independent of the span of the separable states---denoted by $\mathcal{S}_{\sys{AB}}$---and such that every element of $\StR{AB}$ can be uniquely decomposed as the sum of an element of $\mathcal{S}_{\sys{AB}}$ and one of $\mathcal{B}_{\sys{AB}}$. In formula:
\begin{align*}
\StR{AB}=\mathcal S_{\sys{AB}}\oplus\mathcal B_\sys{AB}.
\end{align*}
From now on, we will choose one space $\mathcal{B}_{\sys{AB}}$, where $\Delta^{(2)}_{\sys{AB}}=\dim \mathcal{B}_{\sys{AB}}$. For the sake of clarity, we will adopt the following convenient diagrammatic notation:
\begin{align*}
\tikzfig{separable_bipartite}\ \coloneqq\mathcal{S}_{\sys{AB}},\quad   \tikzfig{non_separable_bipartite}\ \coloneqq\mathcal{B}_{\sys{AB}}.
\end{align*}
For any tripartite system $\sys{ABC}$, we can introduce the subspaces:
\begin{align}\label{eq:separable_tripartite}
&\mathcal{S}_{\sys{ABC}}^{(1)}\coloneqq\tikzfig{separable_tripartite1}\ ,\  \mathcal{S}_{\sys{ABC}}^{(2)}\coloneqq\tikzfig{separable_tripartite3}\ ,\ \mathcal{S}_{\sys{ABC}}^{(3)}\coloneqq\tikzfig{separable_tripartite2}\ ,\ 
\mathcal{S}_{\sys{ABC}}^{(4)}\coloneqq\tikzfig{separable_tripartite4}\ .
\end{align}
Let us then define $I\coloneqq \lbrace1,2,3,4\rbrace$, and
\begin{align*}
\mathcal S^{(I)}_\sys{ABC}\coloneqq\Span_{\mathbb R}\bigcup_{i\in I}\mathcal S^{(i)}_\sys{ABC}.
\end{align*}
Then $\StR{ABC}$ is completely spanned by $\mathcal S^{(I)}_\sys{ABC}$ along with any subspace 
$\mathcal{T}_{\sys{ABC}}$ of $\StR{ABC}$ which is linearly independent of 
$\mathcal S^{(I)}_\sys{ABC}$, and such that every element of $\StR{ABC}$ can be uniquely 
decomposed as the sum of an element of 
$\mathcal{S}_{\sys{ABC}}^{(I)}$ and one of $\mathcal{T}_{\sys{ABC}}$. In formula:
\begin{align*}
\StR{ABC}=\mathcal S^{(I)}_\sys{ABC}\oplus\mathcal T_\sys{ABC}.
\end{align*}
We will arbitrarily 
choose $\mathcal{T}_{\sys{ABC}}$ once for all, and shall denote it diagrammatically as follows:
\begin{align*}
\tikzfig{non_separable_tripartite}\ \coloneqq\mathcal{T}_{\sys{ABC}}.
\end{align*}
The dimension of $\mathcal T_{\sys{ABC}}$ will be denoted by $\Delta^{(3)}_{\sys{ABC}}\coloneqq\dim \mathcal{T}_{\sys{ABC}}$. We are now in position to prove the following general result.
\begin{theorem}[\textbf{Composition rules for system-sizes in arbitrary theories}]\label{thm:delta3}
Let $\Theta$ be an OPT, and $\sys{ABC}$ any tripartite system in $\Theta$. Then the following 
identity holds:
\begin{align}
\begin{split}\label{eq:dimensions_tripartite}
\D{ABC} = &\D{A}\D{B}\D{C} + \Delta^{(2)}_\sys{AB} \D{C} + \Delta^{(2)}_\sys{BC} \D{A}+ \Delta^{(2)}_\sys{AC} \D{B}+ \\ &+
\Delta^{(3)}_{\sys{ABC}}.
\end{split}
\end{align}
\end{theorem}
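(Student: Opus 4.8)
The plan is to reduce~\eqref{eq:dimensions_tripartite} to a single statement about the dimension of the canonical subspace $\mathcal S^{(I)}_\sys{ABC}$, and then read off the contribution of $\Delta^{(3)}_\sys{ABC}$ from the complement. Since $\StR{ABC}=\mathcal S^{(I)}_\sys{ABC}\oplus\mathcal T_\sys{ABC}$ with $\Delta^{(3)}_\sys{ABC}=\dim\mathcal T_\sys{ABC}$, and since every term in~\eqref{eq:dimensions_tripartite} is a \emph{dimension}---hence independent of the arbitrary choices of the complements $\mathcal B$---it suffices to prove
\begin{align*}
\dim\mathcal S^{(I)}_\sys{ABC}=\D{A}\D{B}\D{C}+\Delta^{(2)}_\sys{AB}\D{C}+\Delta^{(2)}_\sys{BC}\D{A}+\Delta^{(2)}_\sys{AC}\D{B}.
\end{align*}
Recalling that $\mathcal S_\sys{AB}=\StR A\boxtimes\StR B$, I would introduce the four blocks $V_0\coloneqq\StR A\boxtimes\StR B\boxtimes\StR C$, $V_\sys{AB}\coloneqq\mathcal B_\sys{AB}\boxtimes\StR C$, $V_\sys{BC}\coloneqq\StR A\boxtimes\mathcal B_\sys{BC}$, and let $V_\sys{AC}$ be the image of $\mathcal B_\sys{AC}\boxtimes\StR B$ after reordering the wires with the braiding $\T S$. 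Using $\StR{AB}=\mathcal S_\sys{AB}\oplus\mathcal B_\sys{AB}$ and the distributivity of $\boxtimes$ over direct sums, each of the three pairwise groupings splits as $\StR{AB}\boxtimes\StR C=V_0\oplus V_\sys{AB}$ (and analogously for the other two cuts), so that $\mathcal S^{(I)}_\sys{ABC}=V_0+V_\sys{AB}+V_\sys{BC}+V_\sys{AC}$. The entire content of the theorem is therefore that this sum is \emph{direct}.

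I would prove directness by exhibiting, for each excess block, a family of generalised effects that separates it while annihilating $V_0$ and the other two. For $V_\sys{AB}$ take the effects $f_\sys{AB}\boxtimes\rbra c_\sys C$, with $f_\sys{AB}$ ranging over the annihilator $\mathcal S_\sys{AB}^{\perp}\subseteq\EffR{AB}$ of the separable states (a space of dimension $\Delta^{(2)}_\sys{AB}$), and $\rbra c_\sys C\in\EffR C$; cyclically for $V_\sys{BC}$ and $V_\sys{AC}$ (the latter again composed with the braiding). That such an effect kills $V_0$ is immediate, since any product vector restricts on the $\sys{AB}$ wires to an element of $\mathcal S_\sys{AB}$. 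The key---and I expect main---observation is that it also annihilates $V_\sys{BC}$ and $V_\sys{AC}$: applying $\T I_\sys{AB}\boxtimes\rbra c_\sys C$ to $\rket\alpha_\sys A\boxtimes\rket\zeta_\sys{BC}\in V_\sys{BC}$ contracts the $\sys C$ wire and leaves the \emph{product} state $\rket\alpha_\sys A\boxtimes[(\T I_\sys B\boxtimes\rbra c_\sys C)\rket\zeta]\in\mathcal S_\sys{AB}$, on which $f_\sys{AB}$ vanishes; the same contraction turns a vector of $V_\sys{AC}$ into a product state on $\sys{AB}$. Hence $f_\sys{AB}\boxtimes\rbra c_\sys C$ is sensitive only to $V_\sys{AB}$, and by symmetry each family detects exactly its own block. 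Directness follows: if $v_0+v_\sys{AB}+v_\sys{BC}+v_\sys{AC}=0$, applying all type-$\sys{AB}$ effects forces $v_\sys{AB}=0$, likewise $v_\sys{AC}=v_\sys{BC}=0$, and then $v_0=0$.

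It then remains to compute the four dimensions and add. Each block is the image under $\boxtimes$ of an ordinary tensor product of single-system (or complement) spaces, and the separating families above furnish, on each block, a nondegenerate pairing obtained as the tensor product of the nondegenerate pairings $\EffR X\times\StR X$ with the nondegenerate pairing $\mathcal S^{\perp}\times\mathcal B$ (the latter being nondegenerate precisely because $\mathcal B$ is a complement of the separable span). Nondegeneracy forces the relevant $\boxtimes$-map to be injective, giving $\dim V_0=\D{A}\D{B}\D{C}$ (via the product effects), $\dim V_\sys{AB}=\Delta^{(2)}_\sys{AB}\D{C}$, $\dim V_\sys{BC}=\Delta^{(2)}_\sys{BC}\D{A}$, and $\dim V_\sys{AC}=\Delta^{(2)}_\sys{AC}\D{B}$. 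Summing over the direct sum and restoring $\Delta^{(3)}_\sys{ABC}=\dim\mathcal T_\sys{ABC}$ through $\StR{ABC}=\mathcal S^{(I)}_\sys{ABC}\oplus\mathcal T_\sys{ABC}$ yields~\eqref{eq:dimensions_tripartite}. The hard part is the cross-cut vanishing in the second paragraph: it is the one place where one genuinely uses that contracting an entangled bipartite vector by an effect on the third system returns a \emph{separable} vector, together with the braiding, which is needed to place the non-adjacent $\sys{AC}$ cut on the same footing as the other two.
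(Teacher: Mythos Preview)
Your proposal is correct and takes essentially the same approach as the paper. Both arguments show that the four blocks are linearly independent by pairing them against product effects of the form $f_{\sys{XY}}\boxtimes\rbra c_\sys Z$ with $f_{\sys{XY}}\in\mathcal S_{\sys{XY}}^{\perp}$; your key step---that contracting the third wire of an excess-block vector leaves a \emph{product} state on the remaining two, hence is killed by $f_{\sys{XY}}$---is exactly the content of the paper's Eq.~\eqref{eq:bipartite_effect_zero} ($\rbra{b'}_{\sys{AB}}\rket\rho_\sys A=\rbra\varepsilon_\sys B$), stated dually.
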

\begin{proof}
From linear independence of $\mathcal{S}_{\sys{AB}}$ and $\mathcal{B}_{\sys{AB}}$, combined with the fact that effects separate states, we conclude that for every $\rket\sigma_{\sys{AB}}\in\mathcal S_{\sys{AB}}$ and for every $\rket{\beta}_{\sys{AB}}\in\mathcal B_{\sys{AB}}$
there exist generalised effects $\rbra{b}_\sys{AB},\rbra{b'}_\sys{AB}\in\EffR{AB}$ such that $\rbraket{b}{\sigma}_{\sys{AB}}=1$ and $\rbraket{b}{\beta'}_{\sys{AB}}=0$ for all $\rket{\beta'}_{\sys{AB}}\in\mathcal B_{\sys{AB}}$, and similarly
$\rbraket{b'}{\beta}_{\sys{AB}}=1$ and $\rbraket{b'}{\sigma'}_{\sys{AB}}=0$ for all $\rket{\sigma'}_{\sys{AB}}\in\mathcal S_{\sys{AB}}$. Using the above property,
combined with the facts that every $\rket{\sigma}_{\sys{AB}}\in\mathcal{S}_{\sys{AB}}$ is a 
linear combination of separable states, and that states separate effects, one has also that: 
$\rbra{b'}_{\sys{AB}}\rket{\rho}_{\sys{A}}=\rbra{\varepsilon}_{\sys{B}}$ and $\rbra{b'}_{\sys{AB}}\rket{\rho'}_{\sys{B}}=\rbra{\varepsilon}_{\sys{A}}$ for 
all $\rket{\rho}_{\sys{A}}\in\StR{A},\rket{\rho'}_{\sys{B}}\in\StR{B}$. Moreover, by 
construction, $\rbra{b'}_{\sys{AB}}$ cannot be a combination of separable effects. 
Diagrammatically:
\begin{align}
\begin{split}\label{eq:bipartite_effect_zero}
&\tikzfig{bipartite_effect}\coloneqq\rbra{b'}_{\sys{AB}},\quad
\tikzfig{bipartite_effect_zero}=0,\\ \ \\
&\implies\  \tikzfig{bipartite_effect_zero1}=\rbra{\varepsilon}_{\sys{B}},\quad \tikzfig{bipartite_effect_zero2}=\rbra{\varepsilon}_{\sys{A}}.
\end{split}
\end{align}
The four subspaces $\mathcal{S}_{\sys{ABC}}^{(i)}$ for $i\in I=\{1,2,3,4\}$ are spanned by the
(tripartite) product states, and their dimensions are given by:
\begin{align*}
	&\dim\mathcal{S}_{\sys{ABC}}^{(1)}=\D{A}\D{B}\D{C},\ \dim\mathcal{S}_{\sys{ABC}}^{(2)}=\Delta^{(2)}_\sys{AB}\D{C},\\ &\dim\mathcal{S}_{\sys{ABC}}^{(3)}=\Delta^{(2)}_\sys{BC}\D{A},\ \dim\mathcal{S}_{\sys{ABC}}^{(4)}=\Delta^{(2)}_\sys{AC}\D{B} .
\end{align*}
Now, it is easy to show that the five 
subspaces $\mathcal{S}_{\sys{ABC}}^{(i)}$ for $i\in I$ and $\mathcal T_\sys{ABC}$ are 
separated by effects, and then in fact they are linearly independent. This can be done using the following argument. Let us pick a complete linearly independent set 
$\{\rket{\alpha^{(5)}_j}_\sys{ABC}\}_{j\in J_5}$ in $\mathcal T_\sys{ABC}$, where $J_5\coloneqq\{1,2,\ldots,\Delta^{(3)}_\sys{ABC}\}$.
By linear independence of $\mathcal T_\sys{ABC}$ and 
$\mathcal S^{(I)}_\sys{ABC}$, combined with the fact that effects separate states,
one can construct a set of generalised effects $\{\rbra{a^{(5)}_j}_\sys{ABC}\}_{j\in J_5}\subseteq\EffR{ABC}$ such that 
$\rbraket{a^{(5)}_j}{\alpha^{(5)}_k}_\sys{ABC}=\delta_{jk}$ while $\rbraket{a^{(5)}_j}{\sigma}_\sys{ABC}=0$ for all 
$\rket\sigma_\sys{ABC}\in\mathcal S^{(I)}_\sys{ABC}$. 
Let us now pick a complete linearly independent set 
$\{\rket{\alpha^{(i)}_j}_\sys{ABC}\}_{j\in J_i}$ 
in every subspace $\mathcal S^{(i)}_\sys{ABC}$---where $J_i\coloneqq\{1,2,\ldots,\dim\mathcal S^{(i)}_\sys{ABC}\}$.
By construction, every $\rket\rho_\sys{ABC}\in\StR{ABC}$ can be written as
\begin{align*}
\rket\rho_\sys{ABC}=\sum_{i=1}^5\sum_{j\in J_i}r^{(i)}_j\rket{\alpha^{(i)}_j}_\sys{ABC}.
\end{align*}
Let us now suppose that $\rket\rho_\sys{ABC}=\rket\varepsilon_\sys{ABC}$. Then we have
\begin{align*}
0=\rbraket{a^{(5)}_j}{\rho}_\sys{ABC}=r^{(5)}_j,\quad\forall j\in J_5,
\end{align*}
namely:
\begin{align*}
\rket\rho_\sys{ABC}=\sum_{i=1}^4\sum_{j\in J_i}r^{(i)}_j\rket{\alpha^{(i)}_j}_\sys{ABC}.
\end{align*}
Now, using formulae~\eqref{eq:separable_tripartite} and~\eqref{eq:bipartite_effect_zero}, for all $i\in I$ 
one can construct a set of generalised effects $\rbra{a^{(i)}_j}_\sys{ABC}\in\EffR{ABC}$ 
such that 
$\rbraket{a^{(i)}_j}{\alpha^{(i)}_k}_{\sys{ABC}}=\delta_{jk}$, and $\rbraket{a^{(i)}_j}{\alpha^{(l)}_{k'}}_{\sys{ABC}}=0$ for all $i,l\in I$ with $l\neq i$, all $j,k\in J_i$, and all $k'\in J_l$. Thus we have:
\begin{align*}
0=\rbraket{a^{(i)}_j}{\rho}_\sys{ABC}=r^{(i)}_j,\quad \forall i\in I,\ \forall j\in J_i.
\end{align*}
Accordingly, $\rket\rho_\sys{ABC}=\rket\varepsilon_\sys{ABC}$ if and only if $r^{(i)}_j=0$ for all $1\leq i\leq 5$ 
and all $j\in J_i$. This shows that the five subspaces $\mathcal{S}_{\sys{ABC}}^{(i)}$ 
for $i\in I$ and 
$\mathcal{T}_{\sys{ABC}}$ are not only complete, but also linearly independent. Then, 
in general Eq.~\eqref{eq:dimensions_tripartite}
holds.
\end{proof}

Notice that, for every positive integer $n$, $n$-local discriminability implies $(n+1)$-local discriminability (see Property~\ref{prope:n_local}). For this reason, we will call the property of $n$-local discriminability, in the case where $(n-1)$-local discriminability does not hold, \emph{strict $n$-local discriminability}. The case of $n$-local discriminability for $n=2$ is called \emph{bilocal discriminability}. Relevantly, strict bilocal discriminability is a property satisfied by Real Quantum Theory and Fermionic Quantum Theory~\cite{hardy2012limited,doi:10.1142/S0217751X14300257}. The next result provides the general form of the composition rules for the system-dimensions in a theory satisfying bilocal discriminability.
\begin{theorem}[\textbf{Composition rules for system-sizes in bilocal-tomographic theories}]\label{thm:bilocal}
	An OPT $\Theta$ satisfies bilocal discriminability if and only if the following equality holds for any tripartite system $\sys{ABC}\in\Sys{\Theta}$:
	\begin{align}\label{eq:dimensions_bilocal}
	\D{ABC} = \D{A}\D{B}\D{C} + \Delta^{(2)}_\sys{AB} \D{C} + \Delta^{(2)}_\sys{BC} \D{A} + \Delta^{(2)}_\sys{AC} \D{B},
	\end{align}
or, equivalently, $\Delta^{(3)}_\sys{ABC}=0$.
\end{theorem}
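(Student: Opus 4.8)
The plan is to reduce the whole statement to the dimension identity just proved plus a state/effect duality. By Theorem~\ref{thm:delta3} we have $\D{ABC}=N_{\sys{ABC}}+\Delta^{(3)}_{\sys{ABC}}$ with
\[
N_{\sys{ABC}}:=\D{A}\D{B}\D{C}+\Delta^{(2)}_{\sys{AB}}\D{C}+\Delta^{(2)}_{\sys{BC}}\D{A}+\Delta^{(2)}_{\sys{AC}}\D{B},
\]
and its proof in fact shows $\dim\mathcal S^{(I)}_{\sys{ABC}}=N_{\sys{ABC}}$ (the four subspaces $\mathcal S^{(i)}_{\sys{ABC}}$ are linearly independent with the stated dimensions). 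Hence Eq.~\eqref{eq:dimensions_bilocal} is \emph{verbatim} the condition $\Delta^{(3)}_{\sys{ABC}}=0$, i.e.\ $\mathcal S^{(I)}_{\sys{ABC}}=\StR{ABC}$. So it suffices to prove that bilocal discriminability is equivalent to $\mathcal S^{(I)}_{\sys{ABC}}=\StR{ABC}$ holding for \emph{every} tripartite system.

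The pivotal tool is the following chain of equivalences, valid for any tripartite system $\sys{PQR}$ (with $\sys{P},\sys{Q},\sys{R}$ possibly composite). Let $\mathcal E^{(2)}_{\sys{PQR}}\subseteq\EffR{PQR}$ be the linear span of all parallel compositions of effects each acting on at most two of the three blocks $\sys{P},\sys{Q},\sys{R}$---the effect-side counterpart of $\mathcal S^{(I)}_{\sys{PQR}}$. Then $\Delta^{(3)}_{\sys{PQR}}=0$, $\mathcal S^{(I)}_{\sys{PQR}}=\StR{PQR}$, $\mathcal E^{(2)}_{\sys{PQR}}=\EffR{PQR}$, and ``the at-most-bipartite product effects separate $\St{PQR}$'' are all equivalent. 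The first two are equivalent by the paragraph above. Since the construction in the proof of Theorem~\ref{thm:delta3} only uses that effects separate states \emph{and} states separate effects---properties symmetric under exchanging states and effects, and the excess $\Delta^{(2)}$ is the same for both because $\dim\StR{S}=\dim\EffR{S}$---the identical argument run on effects yields $\dim\mathcal E^{(2)}_{\sys{PQR}}=N_{\sys{PQR}}$; combined with $\dim\EffR{PQR}=\D{PQR}$ and Theorem~\ref{thm:delta3} this gives $\mathcal E^{(2)}_{\sys{PQR}}=\EffR{PQR}\iff\Delta^{(3)}_{\sys{PQR}}=0$. Finally, in finite dimension a subspace of $\EffR{PQR}=\StR{PQR}^{\vee}$ separates the states iff it is the whole dual, and separation by the span coincides with separation by the individual product effects (a distinguishing linear combination forces some summand to distinguish), closing the loop.

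For the \emph{only if} part, assume bilocal discriminability. Specialising Property~\ref{prope:n_local} to $n=2$, $m=3$ with the three subsystems taken to be the individual $\sys{A},\sys{B},\sys{C}$, the at-most-bipartite product effects separate $\St{ABC}$; by the chain above $\Delta^{(3)}_{\sys{ABC}}=0$, hence Eq.~\eqref{eq:dimensions_bilocal} holds for the arbitrary tripartite $\sys{ABC}$. For the \emph{if} part, assume $\Delta^{(3)}=0$ for every tripartite system, including those whose blocks are themselves composite, and prove by strong induction on $m$ that the at-most-bipartite product effects separate the states of every $m$-partite system---which is exactly bilocal discriminability. The cases $m\le 3$ are immediate ($m\le 2$ trivial, $m=3$ is the chain). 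For $m\ge 4$, group $\sys{X}:=\sys{A_1\cdots A_{m-2}}$ and apply the chain to $\sys{X A_{m-1}A_m}$: the effects at most bipartite across the blocks $\sys{X},\sys{A_{m-1}},\sys{A_m}$ separate $\St{X A_{m-1}A_m}$. In each such effect the factor acting on a composite block ($\sys{X}$, $\sys{X A_{m-1}}$, or $\sys{X A_m}$, all of fewer than $m$ parts) can, by the induction hypothesis---which gives that the at-most-bipartite product effects span that block's effect space---be rewritten as a combination of at-most-bipartite product effects on the individual $\sys{A_i}$; tensoring with the remaining single-system factors (and reordering by braidings) preserves this form. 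Thus the separating family lies in $\mathcal E^{(2)}_{\sys{A_1\cdots A_m}}$, so the at-most-bipartite product effects separate $\St{A_1\cdots A_m}$.

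The main obstacle is the effect-side dimension count $\dim\mathcal E^{(2)}_{\sys{PQR}}=N_{\sys{PQR}}$: one must check that the linear-independence argument of Theorem~\ref{thm:delta3} transfers to effects unchanged, which is what converts the separating property into the purely dimensional condition $\Delta^{(3)}=0$. The structural subtlety lies in the inductive regrouping of the \emph{if} part, where one must verify that decomposing a composite block via the induction hypothesis never raises the partite-ness of any tensor factor above two.
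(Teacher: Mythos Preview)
Your proof is correct and follows the same strategy as the paper's: reduce to the dimension identity of Theorem~\ref{thm:delta3} and identify $\Delta^{(3)}_{\sys{ABC}}=0$ with the condition that at-most-bipartite product effects span the full effect space (equivalently, separate states). In fact your argument is more complete than the paper's, which stops after the tripartite case and simply asserts that Property~\ref{prope:n_local} follows; your explicit induction on $m$---regrouping $m-2$ parts into a single block and decomposing each composite-block factor via the inductive hypothesis---is precisely what is needed to pass from $\Delta^{(3)}=0$ on all tripartite systems to bilocal discriminability for arbitrary $m$-partite states.
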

\begin{proof}
We recall Property~\ref{prope:n_local}. On the one hand, if Eq.~\eqref{eq:dimensions_bilocal} holds, for 
all $\sys{A},\sys{B},\sys{C}\in\Sys{\Theta}$ the space $\StR{ABC}$ has the same dimension as  
$\mathcal S^{(I)}_\sys{ABC}$. On the other hand, since $\mathcal S^{(I)}_\sys{ABC}$ is a subspace of $\StR{ABC}$, being all its elements allowed by the composition rules of OPTs, it must be $\StR{ABC}\equiv\mathcal S^{(I)}_\sys{ABC}$. This implies that the collection of effects $\bigcup_{i=1}^4\{\rbra{a^{i}_{j}}\}_{j\in J_i}$ introduced in the proof of Theorem~\ref{thm:delta3} separates states in $\StR{ABC}$, namely $\Theta$ satisfies Property~\ref{prope:n_local} with $n=1$ or 
$n=2$, depending on whether $\Delta^{(2)}_\sys{XY}=0$ or $\Delta^{(2)}_\sys{XY}\neq0$, respectively. On the other hand, if $\Theta$ satisfies Property~\ref{prope:n_local} with $n=1$ or $n=2$, 
then, by definition, the effects generated by the collection $\bigcup_{i=1}^4\{\rbra{a^{i}_{j}}\}_{j\in J_i}$ separate states. By direct inspection of the collection $\bigcup_{i=1}^4\{\rbra{a^{i}_{j}}\}_{j\in J_i}$, and recalling that for all systems $\sys{A}$ one has $\D{A}=\dim{\EffR{A}}$, then Eq.~\eqref{eq:dimensions_bilocal} holds. 
We conclude that
$\Theta$ satisfies bilocal discriminability if and only if Eq.~\eqref{eq:dimensions_bilocal} holds, or, equivalently (see Eq.~\eqref{eq:dimensions_tripartite}), if and only if $\Delta^{(3)}_{\sys{ABC}}=0$.
\end{proof}
Eq.~\eqref{eq:dimensions_bilocal} generalises Eq.~\eqref{eq:dimensions_local}, holding for theories satisfying local discriminability. Indeed, when local discriminability holds 
in Eq.~\eqref{eq:dimensions_bilocal} one has $\Delta^{(2)}_{\sys{XY}}=0$ for all $\sys{X},\sys{Y}\in\Sys{\Theta}$. Then, in the latter case, Eq.~\eqref{eq:dimensions_bilocal} represents just a restatement of Eq.~\eqref{eq:dimensions_local} for the tripartite scenario. More importantly, when strict bilocal discriminability holds---implying $\Delta^{(2)}_{\sys{XY}}\neq0$ for some $\sys{X},\sys{Y}\in\Sys{\Theta}$---the statement of Theorem~\ref{thm:bilocal} is non-trivial. Interestingly, Theorem~\ref{thm:bilocal} implies that the upper bound given in Refs.~\cite{hardy2012limited,doi:10.1142/S0217751X14300257}---holding for the dimension of tripartite systems in an OPT with bilocal discriminability---is in fact always saturated. In Ref.~\cite{doi:10.1142/S0217751X14300257}, a theory was defined to be \emph{maximally bilocal-tomographic} if it satisfies strict bilocal discriminability {\em and} Eq.~\eqref{eq:dimensions_bilocal}. However, Theorem~\ref{thm:bilocal} states that if a theory satisfies strict bilocal discriminabilty then it necessarily satisfies also Eq.~\eqref{eq:dimensions_bilocal}, namely \emph{every theory satisfying strict bilocal discriminability is in fact maximally bilocal-tomographic}.

\subsection{On the notions of no-restriction hypothesis and causality}\label{subsec:causality}
The following two important properties, called \emph{the no-restriction hypothesis} and \emph{causality}~\cite{bookDCP2017}, are extensively assumed throughout the literature on probabilistic theories.
\begin{property}[No-restriction hypothesis]\label{prope:no_restriction}	
	Let $\Theta$ be an OPT, $\sys{A},\sys{B}\in\Sys{\Theta}$, and $\Instr{T}{X}\subset\TransfR{A}{B}$ be a collection of generalised transformations. If, for all $\sys{E}\in\Sys{\Theta}$, $\Instr{T}{X}\boxtimes\InstrC{I}{\star}{E}{E}$ maps preparation-instruments of $\sys{AE}$ to preparation-instruments of $\sys{BE}$, then $\Instr{T}{X}\in\InstrA{\Theta}$, namely, $\Instr{T}{X}$ is an instrument of the theory $\Theta$.
\end{property}
For any given theory $\Theta$, one is always able to check whether $\Theta$ satisfies Property~\ref{prope:no_restriction}, by definition. If a theory satisfies the no-restriction hypothesis, this means that it cannot be extended---\emph{in a consistent way}---by adding further transformations or effects.

A brief comment on the formulation of Property~\ref{prope:no_restriction} is in order. The no-restriction hypothesis is usually formulated stating that ``all the positive functionals on states are physical'', or something along these lines. First, we explicitly included also general transformations. Second, to make a sensible statement, in general one has to be sure that a collection of generalised transformations not only maps states to states, but also that this is the case when the collection is extended to a composite system. The previous observation is relevant in the absence of local discriminability (see Remark~\ref{rem:transformation_equiv_class}).
\begin{remark}\label{rem:no-restriction}
	From a constructive perspective---namely, when a theory $\Theta$ is constructed out of a number of postulates---one can use Property~\ref{prope:no_restriction} as a postulate, and this is indeed very common in the literature on probabilistic theories. However, despite being a viable constructive postulate, in general the no-restriction hypothesis is not \emph{selective}, in the sense that, by assigning the preparation-instruments and imposing Property~\ref{prope:no_restriction}, one does not generally end up with a single theory, but with \emph{a family of different theories}.\footnote{This is due to the fact that a class of generalised instruments may fail to be \emph{uniquely} determined by the mere geometry of states. A significant example is given by the theory of superselected qubits and Fermionic Quantum Theory~\cite{doi:10.1142/S0217751X14300257}---sharing the same class of preparation-instruments and both satisfying the no-restriction hypothesis, despite being different theories.} The only case where the no-restriction hypothesis can be safely taken as a selective postulate is when the choice of the preparation-instruments satisfies local discriminability (which is in fact the case in vast majority of the literature on probabilistic theories). Indeed, in the latter case, the local states of every composite system $\sys{AB}$ generate the whole space $\StR{AB}$---parallel composition being given by the standard tensor product, i.e., $\boxtimes\equiv\otimes$. As a consequence, every transformation $\T{T}\in\Transf{A}{C}$ is solely defined by its action on $\St{A}$ (see Remark~\ref{rem:transformation_equiv_class}) and the braiding $\T{S}$ is \emph{uniquely determined} by the mere geometry of states, so that $\T{T}\in\Transf{A}{C}$ maps $\St{A}$ to $\St{C}$ if and only if $\T{T}\boxtimes\T{I}_{\sys{B}}$ maps $\St{AB}$ to $\St{CB}$ for all $\sys{B}\in\Sys{\Theta}$.
\end{remark}
\begin{property}[Causality]\label{prope:causality}
	The probability distributions of preparation-instruments do not depend on the choice of
	observation-instrument at their output.
\end{property}
We remind that the above statement of causality is equivalent to the uniqueness of the deterministic effect for every system~\cite{bookDCP2017}, namely, an OPT is causal if and only if every system has a unique deterministic effect. The unique deterministic effect is often called \emph{the unit effect (or measure)}~\cite{PhysRevA.87.052131}. The causality condition in Property~\ref{prope:causality} is known as \emph{no signalling from the future}. The consequences of this requirement have been extensively studied in Ref.~\cite{guryanova2019exploring}. Incidentally, it also implies another kind of no-signalling, namely, the \emph{no signalling at a distance (without interaction)}~\cite{bookDCP2017}, which is the standard assumption of \emph{spatial no-signalling} made in the spirit of nonlocal boxes~\cite{Popescu1994}. Indeed, in a bipartite scenario for, say, system $\sys{AB}$, the uniqueness of the deterministic effect always allows one to define a unique marginal state for both $\sys{A}$ and $\sys{B}$.
\begin{property}[{Conditional instruments}]\label{postulate:conditional}
	For all systems $\sys{A},\sys{B},\sys{C}\in\Sys{\Theta}$, all instruments $\Instr{A}{X}\in\InstrAB{A}{B}$, and all choices of maps
	\begin{align*}
	\begin{split}
	\mathsf{X}&\longrightarrow\InstrAB{B}{C}\\
	x& \longmapsto\mathsf{B}_{\mathsf{Y}^{(x)}}^{(x)}\;,
	\end{split}
	\end{align*}
	the following holds:
	\begin{align*}
	\bigcup_{x\in\mathsf{X}}\lbrace\T{B}_{y}^{(x)}\circ\T{A}_x\rbrace_{y\in\mathsf{Y}^{(x)}}\in\InstrA{\Theta}.
	\end{align*}
	In other words, the generalised conditional instrument associated to every $\Instr{A}{X}\in\InstrAB{A}{B}$ and every choice of labelled collection $\{\mathsf{B}_{\mathsf{Y}^{(x)}}^{(x)}\}_{x\in\mathsf{X}}\subset\InstrAB{B}{C}$ is an instrument of the theory.
\end{property}
Property~\ref{postulate:conditional} affects the geometry of instruments of a theory as follows. Suppose that an arbitrary instrument $\Instr{A}{X}\in\InstrAB{A}{B}$ is performed. Property~\ref{postulate:conditional} guarantees that it is always possible to use any outcome $x\in\mathsf{X}$ of $\Instr{A}{X}$ in order to freely choose a second instrument $\mathsf{B}_{\mathsf{Y}^{(x)}}^{(x)}\in\InstrAB{B}{C}$ to be performed \emph{at the output} of the transformation $\T{A}_{x}\in\Instr{A}{X}$. This has two major consequences~\cite{bookDCP2017,Perinotti2020cellularautomatain}. In any theory satisfying Property~\ref{postulate:conditional}, the existence of a probability which is not $0$ or $1$ implies that every convex combination of instruments is itself an instrument. As a consequence, the theory is convex.
Moreover, Property~\ref{postulate:conditional} selects theories where signals can propagate \emph{solely} in the input-output direction (see Property~\ref{prope:causality}), namely, causality is implied. By the above reasons, in Ref.~\cite{Perinotti2020cellularautomatain} it is argued that Property~\ref{postulate:conditional} should be considered as a (strong) notion of causality.

The no-restriction hypothesis and causality are often \emph{simultaneously} assumed throughout the literature on probabilistic theories, being a defining part of the framework itself. However, we observe that these properties might be incompatible, even in the presence of local discriminability. Indeed, assuming the no-restriction hypothesis for a class of states may give rise---depending on their geometry---to a class of effects containing more than one deterministic effect, namely, to a theory which is not causal. The possibility of formulating probabilistic theories without the no-restriction hypothesis has been explored in the literature~\cite{PhysRevA.87.052131}, whereas causality is therein implicitly assumed.

\subsection{Simplicial and classical theories}\label{subsec:simplicial}
We conclude this subsection introducing some final definitions and results which are particularly relevant to the scope of the present work.
\begin{definition}[Simplicial theories]
	A \emph{simplicial theory} $\Theta$ is a finite-dimensional OPT where the extremal states of every system $\sys{A}\in\Sys{\Theta}$ are the vertices of a $\D{A}$-simplex.
\end{definition}
Notice that a $\D{A}$-simplex is the convex hull of $\D{A}+1$ affinely independent vertices, being in the present context the elements of $\ExtStZ{A}$, which includes the null state $\rket{\varepsilon}_\sys{A}$.
\begin{property}[Joint perfect discriminability in causal theories]\label{prope:jpd}
	Let $\Theta$ be a causal OPT and $\sys{A}\in\Sys{\Theta}$. A set of states $\{\rket{\rho_i}_\sys A\}_{i=1}^n$ is jointly perfectly discriminable if there exists an observation-instrument $\{ \rbra{a_i}_{\sys{A}}\}_{i=1}^n$ such that:
	\begin{align*}
	\rbraket{a_i}{\rho_{i'}}_\sys{A} = \delta_{ii'},\quad\forall i,i'\in\lbrace 1,2,\ldots, n\rbrace.
	\end{align*}
\end{property}
Notice that, in causal theories, a necessary condition for a set of states $\{\rket{\rho_i}_\sys A\}_{i=1}^n$ to satisfy Property~\ref{prope:jpd} is that $\{\rket{\rho_i}_\sys A\}_{i=1}^n\subseteq\StN{A}$.
\begin{definition}[classical theories]\label{def:classical_theory}
	A \emph{classical theory} is a simplicial theory where the pure states of every system are jointly perfectly discriminable.
\end{definition}
Simplicial theories are not necessarily convex. However, the set of states of a simplicial theory is relatively simple to treat, since every state admits of a unique decomposition into non-null extremal states. Moreover, simplicial theories are inherently causal, and for such theories $\ExtSt{A}\equiv\PurSt{A}$ for every system $\sys{A}$ (see Theorem 1 and its proof in Ref.~\cite{d2019classical}). In the following, when dealing with simplicial theories, we will denote the unique deterministic effect of any system $\sys{A}$ by $\rbra{e}_{\sys{A}}$.
\begin{definition}[Classical Theory]
	\emph{Classical Theory (CT)} is the OPT $\Theta$ satisfying the following properties: (i) $\Theta$ is simplicial and convex, (ii) local discriminability holds, (iii) the preparation-instruments of every system $\sys{A}\in\Sys{\Theta}$ are all the collections of states of $\sys{A}$ that add up to a point in the convex hull of $\PurSt{A}$, and (iv) the no-restriction hypothesis holds.
\end{definition}
Notice that, by Remark~\ref{rem:no-restriction}, in the above case the no-restriction hypothesis singles out a unique theory. CT and QT, despite being very different theories, feature some relevant common properties, most notably causality and local discriminability. They also share convexity and the no-restriction hypothesis. Indeed, both CT and QT also satisfy a stronger property, known as \emph{(strong) self-duality}~\cite{PhysRevLett.108.130401}. This means that, for every system $\sys{A}$, one has the equality $\EffC{A}=\StC{A}^{\vee}$ (using Riesz's representation).

We finally discuss another property which also holds in both CT and QT.
\begin{property}[Atomicity of parallel composition]\label{prope:atomicity}
	The parallel composition of two atomic transformations is atomic.
\end{property}
Analogously to the case of local discriminability, if a theory does not satisfy atomicity of \emph{state-composition}, then it necessarily has entangled states~\cite{d2019classical}. In the simplicial case, local disciminability and atomicity of \emph{state-composition} (provided that $n$-local discriminability is satisfied for some $n$) are in fact equivalent~\cite{d2019classical}. Indeed, the simplicial theory presented in this manuscript satisfies strict bilocal discriminability, and not atomicity of parallel composition. In Appendix~\ref{app:simplicial} we prove a structure theorem for the parallel composition rule of strictly bilocal simplicial theories satisfying a certain property on the reversible transformations (see Property~\ref{prope:uniqueness_purification} in Subsec.~\ref{subsec:purity_preservation}).

\section{\textbf{Constructing an OPT}}\label{sec:constructing_an_opt}
We begin the present section treating the consistency of an OPT. We will provide an account of the general consistency conditions which any OPT must comply with in order to be coherent and well-posed. In the light of Remark~\ref{rem:transformation_equiv_class}, the careful assessment of coherence conditions, indeed, is particularly relevant in the absence of local discriminability, as it will be clear in the following. Then we shall present an exhaustive procedure to construct an OPT and check its consistency.

\subsection{Coherence and well-posedness}\label{subsec:coherence}
When a mathematical structure is constructed, one needs to make sure that it is 
consistent. Accordingly, this section will be devoted to present the consistency 
conditions that must be imposed on the construction of an OPT: as a category, the 
OPT must abide by coherence conditions~\cite{lane2013categories}, while the further probabilistic 
structure must be well-posed, and the two structures must be 
compatible.

Both sequential and parallel composition of tests are required to be associative operations. The reason is that when an experimenter considers composed physical tests, the choice of a particular association is just a formal action, not corresponding to a physical operation. In an OPT, transformations are defined as maps between sets: two transformations are the same if they are represented by the same function. Accordingly, sequential composition coincides with \emph{function composition}, which is always associative. Therefore, in the construction of an OPT---as long as transformations are defined as linear maps between vector spaces---the following always holds:
\begin{align*}
&\begin{aligned}
\Qcircuit @C=1.2em @R=1.3em
{
	&\s{A}&\qw&\gate{\T{T}_1}&\s{B}\qw&\gate{\T{T}_2}&\s{C}\qw&\gate{\T{T}_3}&\qw&\s{D}\qw&
	\relax\gategroup{1}{4}{1}{8}{2em}{--}
	\relax\gategroup{1}{4}{1}{6}{1em}{--}
}
\end{aligned}
=\\[2.5ex]
=&\begin{aligned}
\Qcircuit @C=1.2em @R=1.3em
{
	&\s{A}&\qw&\gate{\T{T}_1}&\s{B}\qw&\gate{\T{T}_2}&\s{C}\qw&\gate{\T{T}_3}&\qw&\s{D}\qw&
	\relax\gategroup{1}{4}{1}{8}{2em}{--}
	\relax\gategroup{1}{6}{1}{8}{1em}{--}
}
\end{aligned}.
\end{align*}
The previous argument does not apply to parallel composition. Accordingly, in general, when one chooses a rule for parallel composition, one has to assign an invertible map $\alpha$ from $\left(\Sys{\Theta}\boxtimes\Sys{\Theta}\right)\boxtimes\Sys{\Theta}$ to $\Sys{\Theta}\boxtimes\left(\Sys{\Theta}\boxtimes\Sys{\Theta}\right)$, called \emph{the associator}, such that:
\begin{align}\label{eq:naturality_associator}
\begin{aligned}
\Qcircuit @C=0.9em @R=1.3em
{
	&\s{AC}&\gate{\T{T}_1\boxtimes\T{T}_2}&\s{BD}\qw&\multigate{1}{\alpha}&\s{B}\qw&
	\\
	&\s{E}&\gate{\T{T}_3}&\s{F}\qw&\ghost{\alpha}&\s{DF}\qw&
}
\end{aligned}=
\begin{aligned}
\Qcircuit @C=0.9em @R=1.3em
{
	&\s{AC}&\multigate{1}{\alpha}&\s{A}\qw&\gate{\T{T}_1}&\s{B}\qw&
	\\
	&\s{E}&\ghost{\alpha}&\s{CE}\qw&\gate{\T{T}_2\boxtimes\T{T}_3}&\s{DF}\qw&
}
\end{aligned}.
\end{align}
This map allows one to switch between different associations respecting the associativity condition. Namely, the associator is given by the identity transformation itself:
\begin{align}\label{eq:association_rule}
\begin{aligned}
\Qcircuit @C=1.3em @R=1.3em
{
	&\qw&\s{(AB)C}\qw&\qw&\qw&
}
\end{aligned}=
\begin{aligned}
\Qcircuit @C=1em @R=1.3em
{
	&\s{AB}&\multigate{1}{\alpha}&\s{A}\qw&
	\\
	&\s{C}&\ghost{\alpha}&\s{BC}\qw&
}
\end{aligned}=\ \ 
\begin{aligned}
\Qcircuit @C=1.3em @R=1.3em
{
	&\qw&\s{A(BC)}\qw&\qw&\qw&
}
\end{aligned}.
\end{align}
Now, from a constructive perspective, when an associative parallel composition rule is chosen, not only one has to assign an associator, but also one needs to check whether it is \emph{coherent}---namely, self-consistent when extended to the composition of more than three objects. In order to verify this important requirement, it is sufficient to check the so-called \emph{pentagon identity}:\footnote{This nomenclature is due to the fact that the corresponding commutative diagram, generally holding for a non-strict monoidal category, has five vertices (see e.g.~Ref.~\cite{lane2013categories}).}
\begin{align}
\begin{split}\label{eq:pentagon_parallel}
&\begin{aligned}
\Qcircuit @C=1.6em @R=1.3em
{
	&\s{(AB)C}&\multigate{1}{\alpha}&\s{AB}\qw&\multigate{1}{\alpha}&\s{A}\qw&
	\\
	&\s{D}&\ghost{\alpha}&\s{CD}\qw&\ghost{\alpha}&\s{B(CD)}\qw&
}
\end{aligned}=
\\[2.5ex]
=&\begin{aligned}
\Qcircuit @C=1.6em @R=1.3em
{
	&\s{(AB)C}&\gate{\alpha}&\s{A(BC)}\qw&\multigate{1}{\alpha}&\qw&\s{A}\qw&\qw&
	\\
	&&\s{D}\qw&\qw&\ghost{\alpha}&\s{(BC)D}\qw&\gate{\alpha}&\s{B(CD)}\qw&
}
\end{aligned},
\end{split}
\end{align}
for all $\sys{A},\sys{B},\sys{C},\sys{D}\in\Sys{\Theta}$, where one sequentially uses the appropriate assigned association rule. Eqs.~\eqref{eq:naturality_associator} and~\eqref{eq:association_rule}, along with the pentagon identity~\eqref{eq:pentagon_parallel}, ensure that the operation of parallel composition is associative in a coherent way.

The second coherence requirement is related to the possibility of exchanging systems between agents. For instance, one demands that the single exchange
\begin{align*}
\sys{A}(\sys{B}\sys{C})\mapsto(\sys{B}\sys{C})\sys{A},
\end{align*}
is equivalent, for operational consistency, to the two sequential exchanges
\begin{align*}
\sys{A}(\sys{B}\sys{C})\!\equiv\!(\sys{A}\sys{B})\sys{C}\mapsto(\sys{B}\sys{A})\sys{C}\!\equiv\!\sys{B}(\sys{A}\sys{C})\mapsto\sys{B}(\sys{C}\sys{A})\!\equiv\!(\sys{B}\sys{C})\sys{A}.
\end{align*}
In other words, the braiding $\T{S}$ is required to compose as in the \emph{braid group}. In order to verify this, it is sufficient to check the two so-called \emph{hexagon identities}:\footnote{Analogously to the case of the pentagon identity, this terminology is related to the fact that the corresponding commutative diagrams have six vertices (see e.g.~Ref.~\cite{lane2013categories}).}
\begin{align}
\label{eq:hexagon1_diagram}
\tikzfig{braiding_hexagon1}\!\!\!=\!\!\tikzfig{braiding_threewires1}\!\!\!,
\\[1ex]\label{eq:hexagon2_diagram}
\tikzfig{braiding_hexagon2}\!\!\!=\!\!\tikzfig{braiding_threewires2}\!\!\!.
\end{align}
for all $\sys{A},\sys{B},\sys{C}\in\Sys{\Theta}$ (where we omitted an explicit graphical representation of the associator). However, in the case where the theory is {\em symmetric}, it is easy to verify that the two hexagon identities~\eqref{eq:hexagon1_diagram} and~\eqref{eq:hexagon2_diagram} are in fact equivalent~\cite{lane2013categories}.

In a general category, one also needs to assign two suitably defined invertible maps $\lambda\colon \sys{IA}\mapsto\sys{A}$ and $\rho\colon \sys{AI}\mapsto\sys{A}$, called, respectively, \emph{the left} and \emph{the right unitors} (see e.g.~Ref.~\cite{lane2013categories}). Accordingly, $\lambda$ and $\rho$ must be not only well-posed, but also abide by some coherence conditions involving the associator $\alpha$ and the braiding $\T{S}$. In fact, in the light of the linear structure derived in Subsec.~\ref{subsec:probabilistic}---in particular, by virtue of Eqs.~\eqref{eq:scalar_multiplication},~\eqref{eq:scalar_multiplication_consistency} and what follows---well-posedness and coherence for the unitors are respected, and do not need to be checked.

We conclude with some final requirement of compatibility with the probabilistic structure. First, states must be separating for effects and \emph{vice versa}. Instruments $\Test{P}{X}\in\InstrAB{I}{I}$ must be probability distributions, and the null transformation $\varepsilon_{\sys{A}\to\sys{B}}$ must be included in $\Transf{A}{B}$ for every $\sys{A},\sys{B}$. The coarse-graining operation must be well-posed for every instrument. Finally, for compatibility with sequential composition, every instrument $\TestC{T}{X}{A}{B}\boxtimes\InstrC{I}{\star}{E}{E}$ must map preparation-instruments of $\sys{AE}$ to preparation-instruments of $\sys{BE}$ for every system $\sys{A},\sys{B},\sys{E}$.

We are now in position to present a possible procedure for the construction of an OPT $\Theta$, and for the check of its well-posedness.

\subsection{Setting the postulates}\label{subsec:postulates}
In the present subsection, we elaborate the procedure sketched in Remark~\ref{rem:OPT}. First, we can specify the class $\Sys{\Theta}$, endowed with the binary composition operation $(\sys{A},\sys{B})\mapsto\sys{A}\sys{B}\in\Sys{\Theta}$ for each pair $\sys{A},\sys{B}\in\Sys{\Theta}$. A dimension $\D{A}$ for the real vector space $\StR{A}$ and a class of states $\St{A}\subset\StR{A}$ are associated with each system $\sys{A}\in\Sys{\Theta}$.
Now we can assign a composition rule for systems, choosing a map $f\colon(\sys{A},\sys{B})\mapsto\D{AB}$ for each pair $\sys{A},\sys{B}\in\Sys{\Theta}$.

We can then proceed by choosing, for all $\sys{A},\sys{B},\sys{E}\in\Sys{\Theta}$, the action on $\St{AE}$ of the local transformations from $\sys{A}$ to $\sys{B}$:
\begin{align*}
\begin{aligned}
\Qcircuit @C=1em @R=1em
{
	&\s{AE}&\gate{\T{T}\boxtimes\T{I}_\sys{E}}&\s{BE}\qw&
}
\end{aligned}
=
\begin{aligned}
\Qcircuit @C=1em @R=1.5em
{
	&\s{A}&\gate{\T{T}}&\s{B}\qw&
	\\
	&&\s{E}\qw&\qw&
}
\end{aligned}.
\end{align*}
These span a real vector space of linear functions from $\StR{AE}$ to $\StR{BE}$. On the one hand, one should always include the identity family $\T{I}$. On the other hand, the two elementary cases $\sys{E}=\sys{I}$ and $\sys{A}=\sys{B}=\sys{I}$, corresponding to scalar multiplication, have been already set in Subsec.~\ref{subsec:probabilistic} (see, in particular, Eq.~\eqref{eq:scalar_multiplication_consistency}).
Then, we can proceed with the cases $\sys{A}=\sys{I}$ and $\sys{B}\neq\sys{I}$, specifying the action of all the local transformations of the form $\rket{\sigma}_{\sys{B}}\boxtimes\T{I}_\sys{E}\in\Transf{E}{BE}$ on all $\rho\in\St{E}$, namely a rule for composing states in parallel:
\begin{align*}
\begin{aligned}
\Qcircuit @C=1em @R=1em
{   
	&&\prepareC{\sigma}&\s{B}\qw&
	\\
	&\prepareC{\rho}&\s{E}\qw&\qw
}
\end{aligned}\in\St{BE}.
\end{align*}
This is done by choosing a decomposition of every product state into linearly independent vectors of the composite system. Since it must be $\St{((AB)C)}=\St{(A(BC))}$, we need also to specify---by choosing an associator $\alpha$ as in Eq.~\eqref{eq:association_rule}---the map identifying the linearly independent vectors in $\StR{(\left(AB\right)C)}$ with those in $\StR{(A\left(BC\right))}$. Also, we specify the action of all the local transformation of the form $\rbra{a}_{\sys{A}}\boxtimes\T{I}_\sys{E}\in\TransfR{AE}{E}$ on all $\Sigma\in\St{AE}$, namely a rule allowing one to construct the conditioning of bipartite states:
\begin{align*}
\begin{aligned}
\Qcircuit @C=1em @R=1em
{   \multiprepareC{1}{\Sigma}&\s{A}\qw&\measureD{a}\\
	\pureghost{\Sigma}&\qw&\s{E}\qw&\qw
}
\end{aligned}\in\StR{E}.
\end{align*}
This defines how the local generalised effects of every $\sys{AB}$ embed in $\EffR{AB}$, but not yet the class of effects, nor that of the observation-instruments, of $\Theta$. A very common choice throughout the literature is to assume the no-restriction hypothesis in the form ``all the positive functionals on states are physical'' (see Remark~\ref{rem:no-restriction}).

Then, we can define the action of arbitrary $\T{T}\boxtimes\T{I}_\sys{E}\in \Transf{AE}{BE}$ on $\St{AE}$ for all systems $\sys{A},\sys{B},\sys{E}\neq\sys{I}$, and thus set a braiding $\Out{S}$ for the theory. Finally, we are left with specifying the instruments of $\Theta$. One possibility is to choose the preparation-instruments of the theory, and then postulate that the class $\InstrA{\Theta}$ includes a collection $\InstrC{T}{X}{A}{B}$ of transformations if and only if $\InstrC{T}{X}{A}{B}\boxtimes\InstrC{I}{\star}{E}{E}$ maps preparation-instruments of $\sys{AE}$ to preparation-instruments $\sys{BE}$ for all $\sys{E}\in\Sys{\Theta}$. Notice that such a requirement does not amount to postulating the no-restriction hypothesis (see Property~\ref{prope:no_restriction} and Remark~\ref{rem:no-restriction}).

We are now in position to show that the above construction is sufficient to check whether $\Theta$ is in fact a consistent OPT.

\subsection{Checking the consistency}\label{subsec:consistency}
By having set the postulates, we are now provided with the action of the associator $\alpha$ and of all transformations---including the braiding $\T{S}$. Accordingly, we can first derive some fundamental expressions which are needed to check whether $\Theta$ is in fact a consistent OPT.

Sequential composition distributes over sums and is defined in the following way:
\begin{align}
\label{eq:sequential_extended}
\begin{aligned}
\Qcircuit @C=1em @R=1.5em
{
	&\s{A}&\gate{\T{T}_2\T{T}_1}&\s{C}\qw&\\
	&&\s{E}\qw&\qw&
}
\end{aligned}
\coloneqq
\begin{aligned}
\Qcircuit @C=1em @R=1.5em
{
	&\s{A}&\gate{\T{T}_1}&\s{B}\qw&\gate{\T{T}_2}&\s{C}\qw&\\
	&&\s{E}\qw&\qw&\s{E}\qw&\qw&
}
\end{aligned},
\end{align}
for all systems $\sys{A},\sys{B},\sys{C},\sys{E}$ and all $\T{T}_1\in \Transf{A}{B},\T{T}_2\in \Transf{B}{C}$. The rule for extending local transformations can be derived by resorting to the action of the associator $\alpha$, namely, via the following identification:
\begin{align}\label{eq:associativity_extension}
\begin{aligned}
\Qcircuit @C=1em @R=1.5em
{
	&\s{AE}&\gate{\T{T}\boxtimes\T{I}_{\sys{E}}}&\s{BE}\qw&
	\\
	&&\s{D}\qw&\qw&
}
\end{aligned}
\coloneqq
\begin{aligned}
\Qcircuit @C=1em @R=1.5em
{
	&\s{AE}&\multigate{1}{\alpha}&\s{A}\qw&\gate{\T{T}}&\s{B}\qw&\multigate{1}{\alpha^{-1}}&\s{BE}\qw&
	\\
	&\s{D}&\ghost{\alpha}&\qw&\s{ED}\qw&\qw&\ghost{\alpha^{-1}}&\s{D}\qw&
}
\end{aligned}.
\end{align}
Notice that the above expression is a particular instance of Eq.~\eqref{eq:naturality_associator}. The action of the local transformations of the form $\T{I}_\sys{E}\boxtimes\T{T}\in \Transf{EA}{EB}$ can be derived by posing the following expression:
\begin{align}\label{eq:low_transformations}
\begin{aligned}
\Qcircuit @C=1em @R=2em
{
	&&\s{E}\qw&\qw&
	\\
	&\s{A}&\gate{\T{T}}&\s{B}\qw&
}
\end{aligned}
\coloneqq
\tikzfig{transformation_low}.
\end{align}
We notice that the latter is a particular instance of Eq.~\eqref{eq:braiding_naturality_diagram}. Parallel composition of transformations can be derived by exploiting Eqs.~\eqref{eq:sequential_extended},~\eqref{eq:associativity_extension}, and~\eqref{eq:low_transformations}, and then posing the following expression:
\begin{align}
\begin{aligned}\label{eq:parallel_composition}
\Qcircuit @C=1em @R=1.5em
{
	&\s{AC}&\gate{\T{T}\boxtimes\T{T}'}&\s{BD}\qw&
}
\end{aligned}
\coloneqq
\begin{aligned}
\Qcircuit @C=1em @R=1.5em
{
	&&\s{A}\qw&\gate{\T{T}}&\s{B}\qw&\\
	&\s{C}&\gate{\T{T}'}&\s{D}\qw&\qw&
}
\end{aligned}.
\end{align}
We can now proceed by verifying whether a theory $\Theta$ is a consistent OPT.
\paragraph*{\textbf{Identity process.}} As long as $\T{I}\subseteq\RevTransfA{\Theta}$, in the light of Eqs.~\eqref{eq:low_transformations} and~\eqref{eq:parallel_composition}, also Eq.~\eqref{eq:identity} is always satisfied.

\paragraph*{\textbf{Equations~\eqref{eq:bifunctoriality_diagrammatic} and~\eqref{eq:braiding_naturality_diagram}.}} Using Eq.~\eqref{eq:low_transformations}, one can check whether the following hold:
\begin{align}\label{eq:local_commute}
&\begin{aligned}
\Qcircuit @C=1em @R=1em
{
	&\s{A}&\gate{\T{T}}&\s{B}\qw&\qw&
	\\
	&&\s{C}\qw&\gate{\T{T}'}&\s{D}\qw&
}
\end{aligned}
=
\begin{aligned}
\Qcircuit @C=1em @R=1em
{
	&&\s{A}\qw&\gate{\T{T}}&\s{B}\qw&\\
	&\s{C}&\gate{\T{T}'}&\s{D}\qw&\qw&
}
\end{aligned},\\
&\tikzfig{transformation_low2}
=
\begin{aligned}\label{eq:low_transformations2}
\Qcircuit @C=1em @R=2em
{
	&&\s{E}\qw&\qw&
	\\
	&\s{A}&\gate{\T{T}}&\s{B}\qw&
}
\end{aligned}.
\end{align}
Thus, Eqs.~\eqref{eq:bifunctoriality_diagrammatic} and~\eqref{eq:braiding_naturality_diagram} straightforwardly follow from Eqs.~\eqref{eq:local_commute} and~\eqref{eq:low_transformations2}, using Eqs.~\eqref{eq:identity},~\eqref{eq:sequential_extended},~\eqref{eq:associativity_extension},~\eqref{eq:low_transformations}, and~\eqref{eq:parallel_composition}, along with the associativity of sequential composition. When the OPT is symmetric, Eq.~\eqref{eq:low_transformations2} does not need to be checked, being equivalent to Eq.~\eqref{eq:low_transformations}.

\paragraph*{\textbf{Associativity.}} After verifying whether the associator $\alpha$ is invertible, we can check the pentagon identity~\eqref{eq:pentagon_parallel}.
This is done by using the associator $\alpha$ as the rule for identifying the linearly independent vectors in $\StR{(\left(AB\right)C)D}$ with those in $\StR{(A\left(BC\right))D}$. Since the OPT is required to be associative, the following must hold (see Eq.~\eqref{eq:naturality_associator}):
\begin{align}\label{eq:associativity_transformations}
\begin{aligned}
\Qcircuit @C=1em @R=1em
{
	&\s{A}&\qw&\gate{\T{T}_1}&\qw&\s{B}\qw&\\
	&\s{C}&\qw&\gate{\T{T}_2}&\qw&\s{D}\qw&\\
	&\s{E}&\qw&\gate{\T{T}_3}&\qw&\s{F}\qw&
	\relax\gategroup{1}{4}{3}{4}{2em}{--}
	\relax\gategroup{1}{4}{2}{4}{.8em}{--}
}
\end{aligned}
=
\begin{aligned}
\Qcircuit @C=1em @R=1em
{
	&\s{A}&\qw&\gate{\T{T}_1}&\qw&\s{B}\qw&\\
	&\s{C}&\qw&\gate{\T{T}_2}&\qw&\s{D}\qw&\\
	&\s{E}&\qw&\gate{\T{T}_3}&\qw&\s{F}\qw&
	\relax\gategroup{1}{4}{3}{4}{2em}{--}
	\relax\gategroup{2}{4}{3}{4}{.8em}{--}
}
\end{aligned},
\end{align}
for all $\sys{A},\sys{B},\sys{C},\sys{D},\sys{E},\sys{F}\in\Sys{\Theta}$ and all $\T{T}_1\in\Transf{A}{B},\T{T}_2\in\Transf{C}{D},\T{T}_3\in\Transf{E}{F}$. In fact, using Eqs.~\eqref{eq:identity},~\eqref{eq:association_rule},~\eqref{eq:sequential_extended},~\eqref{eq:associativity_extension},~\eqref{eq:low_transformations}, and~\eqref{eq:parallel_composition}, one verifies that Eq.~\eqref{eq:associativity_transformations} holds by construction.

\paragraph*{\textbf{Braiding.}} One can verify whether the transformations associated with the braiding $\Out{S}$ are invertible and satisfy the two hexagon identities~\eqref{eq:hexagon1_diagram} and~\eqref{eq:hexagon2_diagram}. When the OPT is symmetric, the two hexagon identities~\eqref{eq:hexagon1_diagram} and~\eqref{eq:hexagon2_diagram} are equivalent, and then just one of them needs to be checked.

\paragraph*{\textbf{Compatibility with the probabilistic structure.}} Finally, one can check whether: (i) states separate effects and \emph{vice versa}, (ii) the instruments $\Test{P}{X}\in\InstrAB{I}{I}$ are probability distributions, (iii) the null transformation $\varepsilon_{\sys{A}\to\sys{B}}$ is included in $\Transf{A}{B}$ for every $\sys{A},\sys{B}$, (iv) the coarse-graining operation is well-posed for every instrument, and (v) every instrument $\TestC{T}{X}{A}{B}\boxtimes\InstrC{I}{\star}{E}{E}$ maps preparation-instruments of $\sys{AE}$ to preparation-instruments of $\sys{BE}$ for every system $\sys{A},\sys{B},\sys{E}$.

Once we have done the above checks, we are done. Indeed, the above conditions are exhaustive, being compliant with the definitions and the coherence results given in Sec.~\ref{sec:OPTs} and Subsec.~\ref{subsec:coherence}. We remind at this point that all of the above consistency checks need to be performed extending both sides of every equation with the identity wire of an arbitrary system (see Remark~\ref{rem:transformation_equiv_class}). In the next section, we shall first introduce a theory $\Theta$ in the axiomatic way described in Subsec.~\ref{subsec:postulates}. Subsequently, we show that $\Theta$ is indeed a consistent OPT by resorting to the above-described consistency checks.

\section{\textbf{Bilocal Classical Theory}}\label{sec:BCT}
We now present a classical theory $\tilde{\Theta}$ with entanglement, which we call Bilocal Classical Theory (BCT).
%
%
%

\subsection{Postulates}\label{subsec:postulates_BCT}
\begin{postulate}[{Classicality, convexity, and types of systems}]\label{postulate:classical}
	The theory $\tilde{\Theta}$ is classical and convex. For every integer $D>1$, $\Sys{\tilde{\Theta}}$ contains a type of system having dimension $D$.
\end{postulate}
Classical theories have been defined in Subsec.~\ref{subsec:simplicial}. For every system $\mathrm{A}$, the elements of $\ExtStZ{A}$ are $D_\sys{A}+1$ vertices of a simplex. Moreover, the set of the non-null extremal states $\ExtSt{A}$ coincides with $\PurSt{A}=\lbrace\rket{i}_{\sys{A}}\rbrace_{i=1}^{\D{A}}$. The pure states of every systems are jointly perfectly discriminable (see Property~\ref{prope:jpd}). In addition, by convexity, the set of deterministic states $\StN{A}$ is the convex hull of the pure states.
\begin{postulate}[{Parallel composition of systems and states, associator}]\label{postulate:parallel}
	For any two systems $\sys{A},\sys{B}\in\Sys{\tilde{\Theta}}$, the dimension of the composite system $\sys{AB}$ is given by the following rule:
	\begin{align}\label{eq:dimensions}
	\D{AB} =\D{BA}= \left\{
	\begin{aligned}
	&2\D{A}\D{B},&\mbox{if}\ \sys A\neq\sys I\neq\sys B,\\
	&\D A,&\mbox{if}\ \sys B=\sys I.
	\end{aligned}
	\right.
	\end{align}
	Let $\sys{I}\neq\sys{A},\sys{B},\sys{C}\in\Sys{\tilde{\Theta}}$. Denoting the pure states of any composite system $\sys{AB}$ as $\PurSt{AB}=\lbrace\rket{(ij)_-}_{\sys{AB}},\rket{(ij)_+}_{\sys{AB}} \left.|\right. 1\leq i\leq\D{\sys{A}},1\leq j\leq\D{\sys{B}}\rbrace$, for all states $\rket{i}_\sys{A}\in\PurSt{A},\rket{j}_{\sys{B}}\in\PurSt{B}$ the following parallel composition rule holds:
	\begin{align}\label{eq:state_composition}
	\begin{aligned}
	\Qcircuit @C=1em @R=1em
	{   
		&&\prepareC{i}&\s{A}\qw&
		\\
		&\prepareC{j}&\s{B}\qw&\qw
	}
	\end{aligned}
	=
	\frac{1}{2}\sum_{s=-,+}
	\begin{aligned}
	\Qcircuit @C=1em @R=1em
	{   \multiprepareC{1}{(ij)_{s}}&\s{A}\qw&\\
		\pureghost{(ij)_{s}}&\s{B}\qw&
	}
	\end{aligned}.
	\end{align}
	The associator map is given by the following identification:
	\begin{align}\label{eq:quadripartite}
	((ij)_{s_1}k)_{s_2} = (i(jk)_{s_1s_2})_{s_1},
	\end{align}
	for all local indices $i,j,k$ and signs $s_1,s_2$.
\end{postulate}
Notice that Postulate~\ref{postulate:parallel} complies with the classification of the sets of states in simplicial theories with $n$-local discriminability provided in Ref.~\cite{d2019classical}. In the light of Postulate~\ref{postulate:parallel}, one sees that the information carriers of the theory, despite being classical, compose differently from the ones of CT. In Sec.~\ref{sec:features} we will discuss some consequences of this fact. Accordingly, we will call \emph{bibits} the elementary information carriers of BCT---as a shorthand for \emph{bilocal bits}.
\begin{postulate}[{Reversible transformations}]\label{postulate:reversible}
	For all $\sys{I}\neq\sys{A},\sys{A}'\in\Sys{\tilde{\Theta}}$ with $\D{A}=\D{A'}$, let $\T{R}\in\TransfR{A}{A'}$. Then $\T{R}\in\RevTransf{A}{A'}$ if and only if there exist a permutation $\pi$ of $\D{A}$ elements and a sign $\sigma_i$ such that the following holds for all $\sys{I}\neq\sys{E}\in\Sys{\tilde{\Theta}}$ and $\rket{(ij)_s}\in\PurSt{AB}$:
	\begin{align}\label{eq:reversible_transformations}
	&\begin{aligned}
	\Qcircuit @C=1em @R=1.2em
	{
		\multiprepareC{1}{(ij)_{s}}&\s{A}\qw&\gate{\T{R}}&\s{A'}\qw&
		\\
		\pureghost{(ij)_{s}}&\qw&\s{E}\qw&\qw&
	}
	\end{aligned}
	=
	\begin{aligned}
	\Qcircuit @C=1em @R=1.2em
	{
		&\multiprepareC{1}{\left(\pi(i)j\right)_{\sigma_{i}s}}&\s{A'}\qw&
		\\
		&\pureghost{\left(\pi(i)j\right)_{\sigma_{i}s}}&\s{E}\qw&
	}
	\end{aligned}.
	\end{align}
\end{postulate}
In accordance with Proposition~\ref{prop:pure_to_pure}, the class of reversible transformations is well-posed, since every member---being a permutation of pure states---has an inverse, as it should be by definition. Moreover, as required, $\T{I}\subseteq\RevTransfA{\tilde{\Theta}}$.
\begin{postulate}[{Reversible dilation for arbitrary transformations}]\label{postulate:dilation}
For all $\sys{A},\sys{B}\in\Sys{\tilde{\Theta}}$, a map $\T{T}\in\TransfR{A}{B}$ is contained in $\TransfA{\tilde{\Theta}}$ if and only if $\T{T}$ admits of a reversible dilation as follows:
	\begin{align}\label{eq:dilation}
	\begin{aligned}
	\Qcircuit @C=1em @R=1.3em
	{
		&\s{A}&\gate{\T{T}}&\s{B}\qw&
	}
	\end{aligned}
	=
	\begin{aligned}
	\Qcircuit @C=1em @R=1.3em
	{
		&\prepareC{\Sigma}&\s{B'}\qw&\multigate{1}{\T{R}}&\s{A'}\qw&\measureD{H}&
		\\
		&\s{A}\qw&\qw&\ghost{\T{R}}&\qw&\s{B}\qw&\qw&
	}
	\end{aligned},
	\end{align}
	for some $\T R\in\RevTransf{B'A}{A'B}$ and some $\rket\Sigma_\sys{B'}\in\St{B'},\rbra{H}_{\sys{A'}}\in \Eff{A'}$.
\end{postulate}
The effects of the theory will be defined in Postulate~\ref{postulate:instruments}. In Subsec.~\ref{subsec:characterisation}, we will verify that Postulate~\ref{postulate:dilation} is compatible with Postulate~\ref{postulate:reversible}. Notice that Postulate~\ref{postulate:dilation} is also satisfied by both CT and QT.
\begin{postulate}[{Braiding}]
	Let $\sys{I}\neq\sys{A},\sys{B},\sys{E}\in\Sys{\tilde{\Theta}}$. The braiding $\Out{S}$ of $\tilde{\Theta}$ is given by the family of transformations $\T{S}\subset\RevTransfA{\tilde{\Theta}}$ defined as follows:
	\begin{align}\label{eq:swap}
	\tikzfig{swap_on_state}\ \ 
	=\  \ \tikzfig{swapped_state}\ \ .
	\end{align}
\end{postulate}
It is clear that, for every pair $\sys{A},\sys{B}\in\Sys{\tilde{\Theta}}$, $\tS_{\sys{B},\sys{A}}=\tS^{-1}_{\sys{A},\sys{B}}$ holds, namely, the theory $\tilde{\Theta}$ is symmetric.
\begin{postulate}[{Preparation- and observation-instruments}]\label{postulate:instruments}
	The preparation-instruments of every system $\sys A\in\Sys{\tilde{\Theta}}$ contains a collection $\{\rket{\rho_i}_\sys{A}\}_{i\in\Out{I}}$ of states of $\sys A$ if and only if $\sum_{i\in\Out{I}}\rbraket{e}{\rho_i}=1$. The observation-instruments of every system $\sys A\in\Sys{\tilde{\Theta}}$ are all the collections $\{\rbra{a_x}_{\sys{A}}\}_{x\in\Out{X}}\subset\EffR{A}$ of generalised effects of $\sys{A}$ such that $\{\rbra{a_x}_{\sys{A}}\boxtimes\T{I}_{\sys{E}}\}_{x\in\Out{X}}$ maps preparation-instruments of $\sys{AE}$ to preparation-instruments of $\sys{E}$ for all $\sys{E}\in\Sys{\tilde{\Theta}}$.
\end{postulate}
Notice that the first part of Postulate~\ref{postulate:instruments} is well-posed, since, by Postulate~\ref{postulate:classical}, $\St{A}$ is defined for every system $\sys{A}\in\Sys{\tilde{\Theta}}$. As we will see in Subsec.~\ref{subsec:characterisation}, also the nature of effects of a classical system is compatible with the second part of Postulate~\ref{postulate:instruments}. For every system $\sys{A}$, $\EffR{A}$ is defined by Postulate~\ref{postulate:classical} (via Property~\ref{prope:jpd}).

\subsection{Characterisation}\label{subsec:characterisation}
We will use some of the following characterisation results to prove the coherence of the theory. Accordingly, in order to prove them, we will solely make use of the postulates and of the linear structure.

\subsubsection*{{Bilocal tomography and entangled states}}
\begin{proposition}[\textbf{BCT is strictly bilocal-tomographic}]\label{prop:bilocal}
	BCT satisfies Property~\ref{prope:n_local} if and only if $n\geq 2$.
\end{proposition}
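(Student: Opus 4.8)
The plan is to split the biconditional into two independent claims: that local discriminability ($n=1$) fails, and that bilocal discriminability ($n=2$) holds. The general statement then follows from the monotonicity of $n$-local discriminability recalled after Theorem~\ref{thm:bilocal}, namely that $n$-local discriminability implies $(n+1)$-local discriminability. Both claims are essentially dimensional, so the whole argument reduces to arithmetic on the composition rule for system-sizes.

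First I would dispose of the case $n=1$. By Postulate~\ref{postulate:parallel}, Eq.~\eqref{eq:dimensions}, one has $\D{AB}=2\D{A}\D{B}$ whenever $\sys{A}\neq\sys{I}\neq\sys{B}$, and since $\Sys{\tilde{\Theta}}$ contains only systems of dimension $D>1$ (Postulate~\ref{postulate:classical}), this is strictly larger than $\D{A}\D{B}$. As recalled in Eq.~\eqref{eq:dimensions_local}, a theory satisfies local discriminability if and only if $\D{AB}=\D{A}\D{B}$ for all $\sys{A},\sys{B}$; hence BCT violates local discriminability, i.e.\ it fails Property~\ref{prope:n_local} for $n=1$. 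Operationally this reflects the fact that the product rule~\eqref{eq:state_composition} produces only the symmetric combinations $\tfrac12(\rket{(ij)_-}+\rket{(ij)_+})$, so the span $\mathcal{S}_{\sys{AB}}$ of the separable states has dimension $\D{A}\D{B}$, exactly half of $\D{AB}$.

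Next I would establish bilocal discriminability via Theorem~\ref{thm:bilocal}, which reduces the task to the purely dimensional identity~\eqref{eq:dimensions_bilocal}. From the previous step the excess is $\Delta^{(2)}_{\sys{XY}}=\D{XY}-\D{X}\D{Y}=\D{X}\D{Y}$ for every nontrivial pair. Using Eq.~\eqref{eq:dimensions} twice gives $\D{ABC}=2\D{AB}\D{C}=4\D{A}\D{B}\D{C}$ for the left-hand side, while the right-hand side of~\eqref{eq:dimensions_bilocal} evaluates to $\D{A}\D{B}\D{C}+\Delta^{(2)}_{\sys{AB}}\D{C}+\Delta^{(2)}_{\sys{BC}}\D{A}+\Delta^{(2)}_{\sys{AC}}\D{B}=4\D{A}\D{B}\D{C}$. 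The two sides coincide, so $\Delta^{(3)}_{\sys{ABC}}=0$, and Theorem~\ref{thm:bilocal} yields bilocal discriminability, i.e.\ Property~\ref{prope:n_local} for $n=2$.

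Finally, combining the two steps: since $n$-local discriminability implies $(n+1)$-local discriminability, validity at $n=2$ propagates to every $n\geq 2$, whereas failure at $n=1$ rules out $n<2$. Hence BCT satisfies Property~\ref{prope:n_local} exactly when $n\geq 2$. I expect no genuine obstacle here; the only point requiring care is justifying $\dim\mathcal{S}_{\sys{AB}}=\D{A}\D{B}$ (equivalently $\Delta^{(2)}_{\sys{AB}}=\D{A}\D{B}$) directly from~\eqref{eq:state_composition}, but this is immediate once one notes that the vectors $\rket{(ij)_-}+\rket{(ij)_+}$ are linearly independent, being sums over disjoint pairs of distinct simplex vertices; everything else is arithmetic licensed by Theorem~\ref{thm:bilocal}.
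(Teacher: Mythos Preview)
Your proposal is correct and follows essentially the same route as the paper's proof: both arguments invoke Theorem~\ref{thm:bilocal} and reduce the question to checking that the composition rule~\eqref{eq:dimensions} violates Eq.~\eqref{eq:dimensions_local} but satisfies Eq.~\eqref{eq:dimensions_bilocal}. You simply spell out the arithmetic (computing $\Delta^{(2)}_{\sys{XY}}=\D{X}\D{Y}$ and verifying $4\D{A}\D{B}\D{C}=4\D{A}\D{B}\D{C}$) that the paper leaves implicit under ``by direct inspection''. One minor inaccuracy: Postulate~\ref{postulate:classical} asserts that $\Sys{\tilde{\Theta}}$ contains systems of every dimension $D>1$, not that it contains \emph{only} such systems; but your argument only needs the existence of one nontrivial pair, so this does not affect validity.
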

\Proof
We shall prove that BCT satisfies bilocal discriminability but not local discriminability---namely, that BCT satisfies strict bilocal discriminability---via Theorem~\ref{thm:bilocal}. On the one hand, the composition rule~\eqref{eq:dimensions} on the dimensions clearly violates Eq.~\eqref{eq:dimensions_local}. On the other hand, it is straightforward to verify, by direct inspection, that the rule~\eqref{eq:dimensions} satisfies Eq.~\eqref{eq:dimensions_bilocal}.
\qed
As it has been already recalled, any OPT without local discriminability necessarily has entangled states~\cite{d2019classical}. Indeed, all the pure states of a composite system in our theory are entangled.

\subsubsection*{{Conditioning of entangled states and effects, classification of the effects of the theory}}
For given systems $\sys{A},\sys{B}\in\Sys{BCT}$, we did not postulate the action of the local effects of $\sys{A}$ on the bipartite entangled states of $\sys{AB}$. The reason is that this action is not independent from Postulates~\ref{postulate:classical} and~\ref{postulate:parallel}, and can be actually derived from them. This is done by using two facts following from the definition of a classical theory (see Def.~\ref{def:classical_theory}): (i) uniqueness of the decomposition of states into pure states in a simplicial theory, and (ii) the joint perfect discriminability of the pure states (see Property~\ref{prope:jpd}). Furthermore, using the fact that states are separating for effects, we can also derive the action of the local states of $\sys{A}$ on the bipartite entangled effects of $\sys{AB}$. Let $\rbra{i'}_{\sys A}\in\Eff{A},\rbra{(i'j')_{s'}}_{\sys{AB}}\in\Eff{AB}$ denote the effects such that, for any $\rket{i}_{\sys A}\in\PurSt{A},\rket{(ij)_s}_{\sys{AB}}\in\PurSt{AB}$, one has:
\begin{align*}
&\rbraket{i'}{i}_{\sys{A}}=\delta_{i'i},\\
&\rbraket{(i'j')_{s'}}{(ij)_s}_{\sys{AB}}=\delta_{i'i}\delta_{j'j}\delta_{s's}.
\end{align*}
Their existence is guaranteed by Postulate~\ref{postulate:classical} via Property~\ref{prope:jpd}. Then, for all systems $\sys{A},\sys{B}\in\Sys{BCT}$ and states $\rket{i}_\sys{A}\in\PurSt{A},\rket{(ij)_s}_\sys{AB}\in\PurSt{AB}$, the following holds for every $i'$, $j'$ and $s'$:
\begin{align}
\label{eq:steering}\begin{aligned}
\Qcircuit @C=1em @R=1em
{   &\multiprepareC{1}{(ij)_{s}}&\s{A}\qw&\measureD{i'}&\\
	&\pureghost{(ij)_{s}}&\qw&\s{B}\qw&\qw&
}
\end{aligned}
&=
\ \delta_{ii'}\begin{aligned}
\Qcircuit @C=1em @R=1em
{   \prepareC{j}&\s{A}\qw&
}
\end{aligned},
\\[2.5ex]
\label{eq:steering_state}\begin{aligned}
\Qcircuit @C=1em @R=1em
{
	&\prepareC{i}&\s{A}\qw&\multimeasureD{1}{(i'j')_{s'}}&\\
	&\s{B}\qw&\qw&\ghost{(i'j')_{s'}}&
}
\end{aligned}
&=
\ \delta_{ii'}\frac{1}{2}\ 
\begin{aligned}
\Qcircuit @C=1em @R=1em
{   
	\s{A}&\measureD{j'}&
}
\end{aligned}.
\end{align}
A functional on the states of a classical system in CT is an effect if and only if it is a conic combination of those functionals, that perfectly discriminate pure states (see Property~\ref{prope:jpd}), which maps states to (generally subnormalized) probability distributions. As a consequence of Eq.~\eqref{eq:steering}, one can easily verify that all the functionals that would correspond to effects of a classical system comply with the requirement of Postulate~\ref{postulate:instruments}. It is also easy to verify that actually these are the only effects of BCT. Thus, Postulate~\ref{postulate:classical} 
and Postulate~\ref{postulate:instruments} are compatible.

\subsubsection*{{Classification of transformations, operational realisation scheme for arbitrary instruments}}
We did not explicitly provide the class $\InstrA{BCT}$. The reason is that the class of BCT's instruments can be actually derived from the postulates of the theory. The following results---whose proof is given in Appendices~\ref{app:atomic} and~\ref{app:BCT_no-restriction}---classify the transformations of the theory, and provide an operational reversible dilation scheme for arbitrary instruments.
\begin{proposition}[\textbf{Transformations in BCT}]\label{prop:atomic_conclusive}
	Let $\sys{I}\neq\sys{A}\in\Sys{BCT}$ and $\sys{B}\in\Sys{BCT}$. Then the following holds.
	\begin{itemize}
		\item \textbf{Atomic transformations}. $\T{A}\in\TransfR{A}{B}$ is an atomic transformation if and only if $\T{A}\boxtimes\T{I}_{\sys{E}}$ is of the following form for every $\sys{E}\in\Sys{BCT}$:
		\begin{align}\label{eq:atomic2}
		\begin{aligned}
		\Qcircuit @C=1em @R=1.3em
		{
			\multiprepareC{1}{(ij)_{s}}&\s{A}\qw&\gate{\T{A}}&\s{B}\qw&\\
			\pureghost{(ij)_{s}}&\qw&\s{E}\qw&\qw&
		}
		\end{aligned}
		=
		\lambda\delta_{ii_0}\begin{aligned}
		\Qcircuit @C=1em @R=1.3em
		{
			\multiprepareC{1}{(lj)_{\tau s}}&\s{B}\qw&
			\\
			\pureghost{(lj)_{\tau s}}&\s{E}\qw&
		}
		\end{aligned},
		\end{align}
		for some $\lambda\in[0,1]$, $1\leq i_0\leq D_\sys A$, $1\leq l\leq D_\sys B$, and $\tau=\pm$, when $\sys B\neq\sys I$, and
		\begin{align}\label{eq:atomiceff}
		\begin{aligned}
		\Qcircuit @C=1em @R=1.3em
		{
			\multiprepareC{1}{(ij)_{s}}&\s{A}\qw&\measureD{a}&\\
			\pureghost{(ij)_{s}}&\qw&\s{E}\qw&\qw&
		}
		\end{aligned}
		=
		\lambda\delta_{ii_0}\begin{aligned}
		\Qcircuit @C=1em @R=1.3em
		{
			\prepareC{j}&\s{E}\qw&
		}
		\end{aligned},
		\end{align}
		for some $\lambda\in[0,1]$, $1\leq i_0\leq D_\sys A$ when $\sys B=\sys I$.
		\item \textbf{Arbitrary transformations.} Let $\T{T}\in\TransfR{A}{B}$. Then 	$\T{T}\in\Transf{A}{B}$ if and only if, for every $\sys E\in\Sys{BCT}$, $\T{T}\boxtimes\T{I}_{\sys{E}}$ is a conical combination of elements $\T{A}\boxtimes\T{I}_{\sys{E}}$ of the form~\eqref{eq:atomic2} when $\sys B\neq\sys I$, or of elements $a\boxtimes\T{I}_{\sys E}$ of the form~\eqref{eq:atomiceff} when $\sys B=\sys I$,  that maps $\St{AE}$ to $\St{BE}$.
		\item\textbf{Deterministic transformations.} $\T{D}\in\TransfN{A}{B}$ if and only if there exists a reversible dilation for $\T{D}$ of the form~\eqref{eq:dilation} with $\rket\Sigma_\sys{B'}\in\StN{B'}$ and $\rbra{H}_{\sys{A'}}=\rbra{e}_{\sys{A'}}$.
	\end{itemize}
\end{proposition}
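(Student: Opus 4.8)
The plan is to reduce the whole statement to the action of a transformation on the bipartite pure states $\rket{(ij)_s}_{\sys{AE}}$, which by strict bilocal tomography (Proposition~\ref{prop:bilocal}) together with Remark~\ref{rem:transformation_equiv_class} completely determine any $\T{T}\in\Transf{A}{B}$. First I would prove a lemma computing this action for a generic reversible dilation~\eqref{eq:dilation}, i.e.\ for $\T{T}=\rbra{H}_{\sys{A'}}\,\T{R}\,\rket{\Sigma}_{\sys{B'}}$ with $\T{R}\in\RevTransf{B'A}{A'B}$, as guaranteed by Postulate~\ref{postulate:dilation}. Writing $\rket{\Sigma}_{\sys{B'}}=\sum_m p_m\rket{m}_{\sys{B'}}$ and $\rbra{H}_{\sys{A'}}=\sum_n h_n\rbra{n}_{\sys{A'}}$ in terms of pure states and perfectly-discriminating effects (the effect characterisation preceding the statement), I would propagate $\rket{(ij)_s}_{\sys{AE}}$ through the circuit one gate at a time: compose with $\rket{m}_{\sys{B'}}$ via the state-composition rule~\eqref{eq:state_composition}, re-associate through~\eqref{eq:quadripartite}, apply $\T{R}$ using Postulate~\ref{postulate:reversible} read on the composite $\sys{B'A}$ with spectator $\sys{E}$, re-associate again, and finally collapse the $\sys{A'}$ wire with the steering identity~\eqref{eq:steering}. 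The decisive feature is that~\eqref{eq:steering} erases the outer entanglement sign while leaving a single $\delta$-selection, so the outcome is a conical combination of elementary maps $\rket{(ij)_s}_{\sys{AE}}\mapsto\lambda\,\delta_{ii_0}\rket{(lj)_{\tau s}}_{\sys{BE}}$.

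Next I would identify the atomic transformations with these elementary maps. For the realisability and for the ``if'' direction of the atomic characterisation, I would exhibit, for each datum $(i_0,l,\tau,\lambda)$, an explicit dilation of~\eqref{eq:dilation} — a pure $\rket{\Sigma}=\rket{m_0}$, a single scaled effect $\rbra{H}=\lambda\rbra{n_0}$, and a reversible $\T{R}$ of Postulate~\ref{postulate:reversible} producing the index map and the sign $\tau$ — thereby certifying that the form~\eqref{eq:atomic2} lies in $\Transf{A}{B}$. For atomicity I would use that, on the single surviving input $i=i_0$, the right-hand side of~\eqref{eq:atomic2} is one pure state, hence an extreme ray of $\StC{BE}$: in any splitting $\T{A}=\T{A}_1+\T{A}_2$ the two summands take values in $\StC{BE}$, so they vanish for $i\neq i_0$ and are nonnegative multiples of $\rket{(lj)_{\tau s}}_{\sys{BE}}$ for $i=i_0$; since each $\T{A}_k$ is itself a conical combination of elementary maps, it must be a single multiple of $\T{A}$, whence $\T{A}_1\propto\T{A}_2$. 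Conversely, an atomic $\T{A}$ has, by the lemma, a conical decomposition into elementary maps, and atomicity forbids more than one independent term, leaving exactly~\eqref{eq:atomic2}. The case $\sys B=\sys I$ is separate and easier: here~\eqref{eq:steering} directly yields~\eqref{eq:atomiceff}, the atomic effects being the scaled discriminating functionals $\lambda\rbra{i_0}_{\sys A}$ supplied by Property~\ref{prope:jpd}.

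For the arbitrary-transformation part, the ``only if'' direction is immediate from the lemma: expanding a generic dilation produces a conical combination of elementary maps with nonnegative weights $p_m h_n$, and it maps $\St{AE}$ to $\St{BE}$ because $\T{T}$ is a transformation. For the ``if'' direction I would merge the individual dilations of the elementary summands into a single reversible dilation — carrying the index/convex data on an enlarged preparation ancilla and implementing the whole map by one reversible permutation — so that Postulate~\ref{postulate:dilation} certifies membership in $\Transf{A}{B}$; this is essentially the no-restriction content treated in Appendix~\ref{app:BCT_no-restriction}. Finally, for the deterministic case I would combine the above with the determinism criterion derived in Subsec.~\ref{subsec:probabilistic}, namely that $\T{D}\boxtimes\T{I}_{\sys E}$ maps $\StN{AE}$ to $\StN{BE}$ for all $\sys E$. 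The easy direction: if $\rket{\Sigma}_{\sys{B'}}\in\StN{B'}$ and $\rbra{H}_{\sys{A'}}=\rbra{e}_{\sys{A'}}$, then $\T{D}=\rbra{e}_{\sys{A'}}\,\T{R}\,\rket{\Sigma}_{\sys{B'}}$ is a composition of deterministic pieces — preparation of a normalised state, the reversible $\T{R}$ which is deterministic by Proposition~\ref{prop:pure_to_pure}, and the unique deterministic effect guaranteed by causality (Property~\ref{prope:causality}) — hence deterministic. For the converse I would show that normalisation-preservation fixes the elementary weights on every pure input so that the dilation can be rearranged with $\rket{\Sigma}$ normalised and $\rbra{H}$ the unit effect.

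The hard part will be the sign bookkeeping in the lemma together with the claim that the elementary maps exhaust the atomic transformations. Tracking the entanglement sign through two uses of the associator~\eqref{eq:quadripartite} and the composite reversible requires checking that, for the reversibles admitted by Postulate~\ref{postulate:reversible}, the two contributions $t=\pm$ either collapse to one term or recombine into a single pure state with a well-defined sign $\tau s$; this is exactly what forces an extreme transformation into the clean single-index shape~\eqref{eq:atomic2} rather than a genuine two-term superposition. The other delicate point is the ``if'' direction of the arbitrary-transformation part, where the closure of $\Transf{A}{B}$ under conical combinations that map states to states must be reconciled with the dilation form of Postulate~\ref{postulate:dilation} by assembling a single reversible dilation.
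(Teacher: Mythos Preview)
Your plan is correct and tracks the paper's own proof in Appendix~\ref{app:atomic} closely: the paper likewise starts from a dilation computation (its Lemma~\ref{lem:reversible}, yielding the general action~\eqref{eq:arbitrary_transformation}), then shows every transformation is a conical combination of the elementary maps~\eqref{eq:atomic} (Lemma~\ref{lem:char_transf}), establishes realisability and the deterministic characterisation via an explicit construction (Lemmas~\ref{lem:realisation_channels} and~\ref{lem:realisation_transf}, Corollary~\ref{cor:channels}), and finally proves that atomic equals elementary (Lemma~\ref{lem:atomic}) by essentially your extreme-ray argument combined with the conical decomposition.

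The one substantive difference is in how realisability is handled. You propose to exhibit, for each datum $(i_0,l,\tau,\lambda)$, a small bespoke dilation with pure ancilla, pure scaled effect, and a hand-picked reversible; this works, but note that the composition rule~\eqref{eq:state_composition} introduces a genuine two-term sum over $s'=\pm$, so to land on a single elementary map you must choose the reversible so that both branches recombine on the same output pure state (which is possible, but is exactly the ``sign bookkeeping'' you flag as the hard part). The paper instead builds one universal processor $\tilde{\T{R}}_{\sys{A,B}}$ (Eq.~\eqref{eq:atomic_atomic}) large enough to program every transformation from $\sys A$ to $\sys B$ via the ancillary state, and then reads off both the atomic realisation (Eq.~\eqref{eq:atomic_programmed}) and the conical closure in one stroke. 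This buys a cleaner proof of the ``if'' direction of the arbitrary-transformation clause---where your sketch defers to ``merging dilations''---and it is also the engine behind the instrument dilation of Proposition~\ref{prop:dilation_tests}; conversely, your per-map construction is more elementary and avoids the somewhat heavy bookkeeping of the iterative $\Sigma$-construction in the proof of Lemma~\ref{lem:realisation_channels}.
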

First of all, by the classification given in Proposition~\ref{prop:atomic_conclusive}, one can now easily verify that Postulate~\ref{postulate:dilation} is compatible with Postulate~\ref{postulate:reversible}. Indeed, if a transformation $\T{R}$ is reversible 
(namely it is invertible and its inverse is itself a transformation),
using Eq.~\eqref{eq:arbitrary_transformation} in Appendix~\ref{app:atomic} and remembering that a reversible transformation is deterministic and preserves purity and atomicity (see Proposition~\ref{prop:pure_to_pure}), one can prove that $\T{R}$ is of the form~\eqref{eq:reversible_transformations}. On the other hand, every reversible transformation trivially admits a dilation of the form~\eqref{eq:dilation}. Notice that, in the light of Eqs.~\eqref{eq:atomic2} and~\eqref{eq:atomiceff}, the atomic transformations of BCT retain atomicity under sequential composition, and that those transformations which are not effects preserve entanglement whenever their action is not vanishing. In a sense, they are the bilocal-tomographic counterpart of those of CT---however, they are not measure-and-prepare as in CT. We will see some relevant consequences of this fact in Sec.~\ref{sec:features}.

The following result provides a classification of the instruments of BCT.
\begin{proposition}[\textbf{Reversible dilation for the instruments in BCT}]\label{prop:dilation_tests}
	For every $\sys{A},\sys{B}\in\Sys{BCT}$ there exist some systems $\sys{A'},\sys{B'}\in\Sys{BCT}$ and a reversible transformation $\tilde{\T{R}}_{\sys{A,B}}\in\RevTransf{B'A}{A'B}$ such that the following holds. Let $\InstrC{T}{X}{A}{B}\subset\TransfR{A}{B}$. The following conditions are equivalent:
	\begin{enumerate}[label=(\roman*)]
		\item\label{item:1} $\InstrC{T}{X}{A}{B}\boxtimes\InstrC{I}{\star}{E}{E}$ maps preparation-instruments of $\sys{AE}$ to preparation-instruments of $\sys{BE}$ for all $\sys{E}\in\Sys{BCT}$;
		\item\label{item:2} $\InstrC{T}{X}{A}{B}\in\InstrAB{A}{B}$;
		\item\label{item:3} 
		There exist a deterministic state $\rket{\Sigma}_{\sys{B'}}\in\StN{B'}$ and an observation-instrument $\InstrC{a}{X}{A'}{I}$,
		such that:
		\begin{align}\label{eq:dilation_instruments}
		\begin{aligned}
		\Qcircuit @C=1em @R=1.3em
		{
			&\s{A}&\gate{\Test{T}{X}}&\s{B}\qw&
		}
		\end{aligned}
		=
		\begin{aligned}
		\Qcircuit @C=1em @R=1.3em
		{
			&\prepareC{\Sigma}&\s{B'}\qw&\multigate{1}{\tilde{\T{R}}_{\sys{A,B}}}&\s{A'}\qw&\measureD{\Test{a}{X}}&
			\\
			&\s{A}\qw&\qw&\ghost{\tilde{\T{R}}_{\sys{A,B}}}&\qw&\s{B}\qw&\qw&
		}
		\end{aligned}.
		\end{align}
	\end{enumerate}
\end{proposition}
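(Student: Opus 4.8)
The plan is to prove the cycle \ref{item:3} $\Rightarrow$ \ref{item:1} $\Leftrightarrow$ \ref{item:2} $\Rightarrow$ \ref{item:3}. The equivalence \ref{item:1} $\Leftrightarrow$ \ref{item:2} I would take as essentially definitional: by the way the instruments of BCT are specified in the construction procedure of Subsec.~\ref{subsec:postulates} (there being no direct postulate for the full class $\InstrA{BCT}$), a collection $\InstrC{T}{X}{A}{B}$ belongs to $\InstrAB{A}{B}$ precisely when it maps preparation-instruments of $\sys{AE}$ to preparation-instruments of $\sys{BE}$ for every $\sys{E}$, which is \ref{item:1}; combined with the classification of transformations in Proposition~\ref{prop:atomic_conclusive} this pins down $\InstrAB{A}{B}$ unambiguously.

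For \ref{item:3} $\Rightarrow$ \ref{item:1} I would argue stagewise along the dilation~\eqref{eq:dilation_instruments}. The ancilla $\rket{\Sigma}_{\sys{B'}}$ is deterministic; $\tilde{\T{R}}_{\sys{A,B}}$ is reversible, hence deterministic by Proposition~\ref{prop:pure_to_pure}, so $\tilde{\T{R}}_{\sys{A,B}}\boxtimes\T{I}_{\sys{E}}$ sends $\StN{B'AE}$ to $\StN{A'BE}$; and $\InstrC{a}{X}{A'}{I}$ is an observation-instrument, which by Postulate~\ref{postulate:instruments} sends preparation-instruments of $\sys{A'E}$ to preparation-instruments of $\sys{E}$. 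Composing a preparation-instrument of $\sys{AE}$ with $\rket{\Sigma}_{\sys{B'}}$, then with $\tilde{\T{R}}_{\sys{A,B}}\boxtimes\T{I}_{\sys{E}}$, then with $\{\rbra{a_x}_{\sys{A'}}\}$, yields a preparation-instrument of $\sys{BE}$, which is exactly \ref{item:1}.

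The substance lies in \ref{item:2} $\Rightarrow$ \ref{item:3}. First I would feed the instrument through Proposition~\ref{prop:atomic_conclusive}: for $\sys{B}\neq\sys{I}$ each $\T{T}_x$ acts on pure states as $\rket{(ij)_s}\mapsto\sum_{l,\tau}\lambda^{(x)}_{i,l,\tau}\rket{(lj)_{\tau s}}$ with $\lambda^{(x)}_{i,l,\tau}\geq0$, the instrument condition forcing $\sum_{x}\sum_{l,\tau}\lambda^{(x)}_{i,l,\tau}=1$ for each $i$ (the case $\sys{B}=\sys{I}$ being treated identically with~\eqref{eq:atomiceff}). Then I would fix, depending only on $\sys{A},\sys{B}$, ancillary systems $\sys{A'},\sys{B'}$ of sufficiently large size with $\D{B'}\D{A}=\D{A'}\D{B}$ (so that $\tilde{\T{R}}_{\sys{A,B}}\in\RevTransf{B'A}{A'B}$ is admissible), and a single sign-decorated permutation $\tilde{\T{R}}_{\sys{A,B}}$ of pure states, allowed by Postulate~\ref{postulate:reversible}, engineered so that it transfers a created output index $l$ and a sign flip $\tau$ onto $\sys{B}$ while routing the consumed input index $i$ and the transition label onto $\sys{A'}$. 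Given the instrument, I would choose a deterministic $\rket{\Sigma}_{\sys{B'}}$ recovering the base channel $\sum_x\T{T}_x$ under the deterministic effect, and an observation-instrument $\{\rbra{a_x}_{\sys{A'}}\}$ whose effects carry the weights $\lambda^{(x)}_{i,l,\tau}$, verifying through the steering identities~\eqref{eq:steering}--\eqref{eq:steering_state} that $\rbra{a_x}_{\sys{A'}}\tilde{\T{R}}_{\sys{A,B}}(\rket{\Sigma}_{\sys{B'}}\boxtimes\T{I}_{\sys{A}})=\T{T}_x$.

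To close the loop I would check that the passage $\{\rbra{a_x}_{\sys{A'}}\}\leftrightarrow\{\T{T}_x\}$ is normalization-preserving: since $\rket{\Sigma}_{\sys{B'}}$ and $\tilde{\T{R}}_{\sys{A,B}}$ are deterministic, $\{\rbra{a_x}_{\sys{A'}}\}$ sums to the unit effect and sends preparation-instruments to preparation-instruments exactly when $\{\T{T}_x\}$ is an instrument, using the characterization of BCT's effects in Subsec.~\ref{subsec:characterisation}. I expect the main obstacle to be the explicit construction of the \emph{single universal} reversible $\tilde{\T{R}}_{\sys{A,B}}$ — independent of the particular instrument — together with the proof that the induced linear map from $\EffR{A'}$ onto $\TransfR{A}{B}$ is surjective and cone-preserving; the delicate point is faithfully reproducing the sign-flip degrees of freedom $\tau$, which originate in the uniform sign-mixing of the parallel-composition rule~\eqref{eq:state_composition} and encode precisely the bilocal, non-local-tomographic content of the transformations.
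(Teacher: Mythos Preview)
Your sketch of the hard direction is essentially the paper's argument: a universal sign-decorated permutation $\tilde{\T R}_{\sys A,\sys B}$ is fixed once and for all (the paper gives it explicitly in Eq.~\eqref{eq:atomic_atomic}, with $\sys B'=\sys B''\sys B$, $\sys A'=\sys B''\sys A$, $\D{B''}=2\D{B}^{\D{A}}$), the deterministic ancilla $\rket\Sigma_{\sys B'}$ is built so as to reproduce the channel $\sum_x\T T_x$ (Lemma~\ref{lem:realisation_channels}), and then an observation-instrument on $\sys A'$ is read off from the atomic decomposition of the $\T T_x$'s (Lemma~\ref{lem:realisation_transf}). The paper also treats the degenerate case $\sys A=\sys I$ separately, which you should not omit.

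There is, however, a genuine gap in your cycle. Taking \ref{item:1}~$\Leftrightarrow$~\ref{item:2} as ``definitional'' is not licensed by the BCT postulates: the general class $\InstrA{BCT}$ is \emph{not} defined via condition~\ref{item:1}. Postulate~\ref{postulate:dilation} fixes which maps are \emph{transformations}, Postulate~\ref{postulate:instruments} fixes only preparation- and observation-instruments, and general instruments are those obtainable from these by the OPT compositional rules. The statement that \ref{item:1}~$\Rightarrow$~\ref{item:2} is precisely Property~\ref{prope:unrestricted_instruments}, which the paper presents as a \emph{corollary} of Proposition~\ref{prop:dilation_tests}; invoking it here would be circular. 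This is why the paper runs the cycle as $\ref{item:3}\Rightarrow\ref{item:2}\Rightarrow\ref{item:1}\Rightarrow\ref{item:3}$ and proves the hard implication directly as $\ref{item:1}\Rightarrow\ref{item:3}$. For that, one must first show that each $\T T_x$ appearing in a collection satisfying~\ref{item:1} is actually a transformation of BCT, i.e.\ admits a reversible dilation in the sense of Postulate~\ref{postulate:dilation}. The paper does this via Lemma~\ref{lem:pseudo_no-restriction}, whose proof exploits the commutation constraint~\eqref{eq:local_commute} with local effects and with all local reversible maps of Postulate~\ref{postulate:reversible} to force the action of $\T T_x$ into the form~\eqref{eq:arbitr_transf2}. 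Only after that step can you ``feed the instrument through Proposition~\ref{prop:atomic_conclusive}'' as you propose. With Lemma~\ref{lem:pseudo_no-restriction} supplied, your $\ref{item:2}\Rightarrow\ref{item:3}$ argument becomes the paper's $\ref{item:1}\Rightarrow\ref{item:3}$ verbatim.
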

Consider Eq.~\eqref{eq:dilation_instruments}: $\rket{\Sigma}_{\sys{B'}}\in\StN{B'}$ and $\T{R}$ is deterministic by Proposition~\ref{prop:pure_to_pure}. Accordingly, every instrument of BCT can be realised by a channel followed by an observation-instrument. Notice that a reversible dilation scheme similar to~\eqref{eq:dilation_instruments} holds in CT and QT as well. However, differently from the case of CT and BCT, in QT the system-sizes of the ancillae $\sys{B'}$ and $\sys{A'}$ depend not only on the input and output systems $\sys{A}$ and $\sys{B}$, but also on the instrument $\Test{T}{X}$ itself.

As a first corollary of Proposition~\eqref{prop:dilation_tests}, BCT satisfies the following general property for a probabilistic theory.
\begin{property}[{Unrestricted class of instruments}]\label{prope:unrestricted_instruments}
	Let $\Theta$ be an OPT. For all systems $\sys{A},\sys{B}\in\Sys{\Theta}$, the class $\InstrA{\Theta}$ includes a collection $\InstrC{T}{X}{A}{B}\subset\TransfA{\Theta}$ of transformations if and only if $\InstrC{T}{X}{A}{B}\boxtimes\InstrC{I}{\star}{E}{E}$ maps preparation-instruments of $\sys{AE}$ to preparation-instruments of $\sys{BE}$ for all $\sys{E}\in\Sys{\Theta}$.
\end{property}

\subsubsection*{{No-restriction hypothesis, causality and conditional instruments}}
Both Postulate~\ref{postulate:instruments} and Property~\ref{prope:unrestricted_instruments} have a different content from the (possible formulations of the) no-restriction hypothesis (see Property~\ref{prope:no_restriction} and what follows). However, an immediate consequence of Proposition~\ref{prop:dilation_tests}---in particular, by implication $\ref{item:1}\Rightarrow\ref{item:2}$---is that BCT satisfies also the no-restriction hypothesis (see Property~\ref{prope:no_restriction}). In fact, Property~\ref{prope:unrestricted_instruments} follows from Property~\ref{prope:no_restriction}.

Moreover, as we already mentioned (see Subsec.~\ref{subsec:simplicial}), in Ref.~\cite{d2019classical} it is proven that every simplicial theory is causal (see Property~\ref{prope:causality}). It follows that BCT is also causal.
From now on, we will denote the unique deterministic effect of any system $\sys{A}\in\Sys{BCT}$ by $\rbra{e}_{\sys{A}}\equiv\sum_{i=1}^{\D{A}}\rbra{i}_{\sys{A}}$, where $\lbrace\rbra{i}_{\sys{A}}\rbrace_{i=1}^{\D{A}}$ is an observation-instrument jointly perfectly discriminating the pure states of $\sys{A}$. Furthermore, BCT is a convex theory by Postulate~\ref{postulate:classical}. Accordingly, $\StN{A}$ coincides with the convex hull of $\PurSt{A}$ for every system $\sys{A}\in\Sys{BCT}$, namely, every mixture of pure states can be \emph{deterministically prepared} in the theory. More importantly, by Theorem~\ref{thm:conditional} (see Appendix~\ref{app:conditional}), every causal theory satisfying Postulate~\ref{postulate:instruments} and Property~\ref{prope:unrestricted_instruments} contains all conditional instruments (see Property~\ref{postulate:conditional}). Thus, BCT also enjoys this important property.

\subsection{Consistency check}
In order to check the consistency of the theory, we follow the procedure established in Subsec.~\ref{subsec:coherence}. We remind that, by Remark~\ref{rem:transformation_equiv_class}, all the consistency equations must be verified extending both sides of every equation with the identity transformation of an arbitrary system, namely, Eq.~\eqref{eq:identity_extended} holds.
%
%

By Eq.~\eqref{eq:quadripartite}, the associator of the theory is invertible. We now verify the pentagon identity~\eqref{eq:pentagon_parallel} for state-composition via consecutive applications of Eq.~\eqref{eq:quadripartite} on the pure states of a pentapartite system. On the one hand, one has:
\begin{align*}
	((((ij)_{s_1}k)_{s_2}l)_{s_3}m)_{s_4}=((((ij)_{s_1}(kl)_{s_2s_3})_{s_2}m)_{s_4}=\\=((i(j(kl)_{s_2s_3})_{s_1s_2})_{s_1}m)_{s_4}.
\end{align*}
On the other hand, one also has:
\begin{align*}
	((((ij)_{s_1}k)_{s_2}l)_{s_3}m)_{s_4}&
	=(((i(jk)_{s_1s_2})_{s_1}l)_{s_3}m)_{s_4}=\\
	&=((i((jk)_{s_1s_2}l)_{s_1s_3})_{s_1}m)_{s_4}=\\
	&=((i(j(kl)_{s_2s_3})_{s_1s_2})_{s_1})m)_{s_4}.
\end{align*}
In the case of states and effects, Eq.~\eqref{eq:local_commute} can be verified using Eqs.~\eqref{eq:steering} and~\eqref{eq:steering_state}. In the case of arbitrary transformations,
Eq.~\eqref{eq:local_commute} can be easily verified just for the atomic transformations~\eqref{eq:atomic2}, and then extended by linearity in the light of Proposition~\ref{prop:atomic_conclusive}. The family of transformations $\T{S}$ defined in Eq.~\eqref{eq:swap} is manifestly invertible. Moreover, since the theory is symmetric, Eq.~\eqref{eq:low_transformations2} is equivalent to Eq.~\eqref{eq:low_transformations}, and then it does not need to be checked. Finally, for the same reason, one can verify just one of the hexagon identities, e.g.~Eq.~\eqref{eq:hexagon1_diagram}. This is simply done by an iterative application of the associator~\eqref{eq:quadripartite} and of the braiding~\eqref{eq:swap} on the tetrapartite pure states of the theory.

By Postulates~\ref{postulate:classical} and~\ref{postulate:instruments}, and by implications $\ref{item:2}\Leftrightarrow\ref{item:3}$ and $\ref{item:2}\Rightarrow\ref{item:1}$ in Proposition~\ref{prop:dilation_tests}, the following final requirements hold: (i) states separate effects and \emph{vice versa}, (ii) the instruments $\Test{P}{X}\in\InstrAB{I}{I}$ are probability distributions, (iii) the null transformation $\varepsilon_{\sys{A}\to\sys{B}}$ is included in $\Transf{A}{B}$ for every $\sys{A},\sys{B}$, (iv) the coarse-graining operation is well-posed, and (v) every instrument $\TestC{T}{X}{A}{B}\boxtimes\InstrC{I}{\star}{E}{E}$ maps preparation-instruments of $\sys{AE}$ to preparation-instruments of $\sys{BE}$ for every system $\sys{A},\sys{B},\sys{E}$.

Accordingly, the postulates of the theory, along with the classification of the atomic transformations, lead to a straightforward check of its consistency.

\section{\textbf{Features of the theory}}\label{sec:features}
\subsection{Entanglement is independent of complementarity}
As a consequence of Proposition~\ref{prop:atomic_conclusive} and Property~\ref{prope:unrestricted_instruments}, BCT satisfies the \emph{full-information without disturbance principle} (FIWD)~\cite{d2019information}, i.e., every test can be simulated via a non-disturbing test. This happens in spite of the presence of entanglement. In a theory with FIWD the identity transformation cannot be atomic~\cite{d2019information}. Moreover, a theory satisfies FIWD only if the pure states of every system are jointly perfectly discriminable (see Property~\ref{prope:jpd})~\cite{d2019information}. Accordingly, BCT does not admit the existence of incompatible observables---or, in other words,  the theory does not satisfy the \emph{principle of complementarity}. As a first consequence, in BCT it is clearly impossible to violate Bell's Inequalities. However, the theory admits of entangled states: every pure state of a composite system is in fact entangled. The above two features allow one to show that complementarity and entanglement are two independent properties. On the one hand, BCT provides the explicit example of a theory without complementarity, but endowed with entangled states. Conversely, take a modified version of QT whose parallel composition is given by the minimal tensor product on both states and effects---i.e.~the theory where every system is quantum, except that the set of states for every composite system is the convex hull of the product states only. The latter is an example of a theory with complementarity but without entanglement (indeed, complementarity is a single-system property). This shows that complementarity and entanglement are in fact two independent properties in a probabilistic theory. Furthermore, the existence of a unique joint probability distribution for the outcomes of every possible set of measurements implies that BCT is also \emph{noncontextual}.\footnote{For the definition of a generalised-noncontextual ontological model, see Ref.~\cite{schmid2020structure}. Therein, the existence of such a model, using a frame representation which is not overcomplete~\cite{Ferrie_2009}, is proved---under the hypotheses of convexity, causality, and local discriminability---to be equivalent to a property called \emph{simplex-embeddability}, and to the existence of a non-negative quasiprobabilistic model. In fact, the authors also exhibit a counterexample showing that in general (in particular, without local discriminability) the result does not hold.} It is indeed already known that noncontextuality, being a single-system property, is decoupled from entanglement. On the one hand, Spekkens toy model~\cite{PhysRevA.75.032110} is noncontextual, but has entanglement. On the other hand, the abovementioned version of QT with minimal tensor product is contextual, but has no entanglement. However, we remind that Spekkens toy model is not a simplicial theory.

\subsection{Violation of purity of state-composition and of purification, absence of superpositions}\label{subsec:purity_preservation}
In Ref.~\cite{d2019classical} it is proven that, if a simplicial theory violates local discriminability but $n$-local discriminability holds for some $n$, then the theory violates atomicity of parallel composition of states, as well. In BCT, this fact is manifest from the state-composition rule~\eqref{eq:state_composition}, where one sees that the parallel composition of any pair of pure states is not pure, namely, also the principle of \emph{purity of parallel composition of states} is violated. Notice that the principle called \emph{purity preservation} in Ref.~\cite{chiribella2015operational} coincides in fact with atomicity of both parallel and sequential composition (see Remark~\ref{rem:atomicity}). Moreover, state-composition rule~\eqref{eq:state_composition} is clearly a particular instance of the classification of composite states for simplicial theories with $n$-local discriminability provided in Ref.~\cite{d2019classical}. The atomic transformations of BCT (see Proposition~\ref{prop:atomic_conclusive}) are the bilocal-tomographic counterpart of the ones of CT. However, resorting to Proposition~\ref{prop:atomic_conclusive}---in particular using expression~\eqref{eq:atomic2}---one can easily verify that atomicity of parallel composition in BCT is not even satisfied for arbitrary transformations. Interestingly, using again Proposition~\ref{prop:atomic_conclusive}, it is also easy to verify that, differently for the case of states, purity of parallel composition of channels is satisfied. Finally, Proposition~\ref{prop:atomic_conclusive} guarantees that in BCT atomicity and purity are both preserved under \emph{sequential} composition.

Moreover, in BCT no mixed state has a purification, and superpositions of states (in an operational sense) are not admitted. The latter features are a consequence of the no-go results proven in Ref.~\cite{d2019classical} for general simplicial theories.\footnote{However, notice that, in principle, the existence of simplicial theories where at least some mixed states can be purified~\cite{winczewski2018no}, or satisfying an operational formulation of the superposition principle given in Ref.~\cite{d2019classical}, cannot be excluded.} Interestingly, this means that entanglement implies neither the purification nor the superposition principles (nor weaker formulations of these, holding just for a finite number of states). However, BCT satisfies the \emph{(essential) uniqueness of purification principle}~\cite{bookDCP2017}:
\begin{property}[Essential uniqueness of purification]\label{prope:uniqueness_purification}
Let $\Theta$ be an OPT, and $\sys{A},\sys{B}\in\Sys{\Theta}$. If there exist $\rket{\Sigma_1}_{\sys{AB}},\rket{\Sigma_2}_{\sys{AB}}\in\PurSt{AB}$ and $\rbra{\tilde{e}_1}_{\sys{B}},\rbra{\tilde{e}_2}_{\sys{B}}\in\EffN{B}$ such that
\begin{align*}
	&\begin{aligned}
	\Qcircuit @C=1em @R=1em
	{   \multiprepareC{1}{\Sigma_1}&\qw&\s{A}\qw&\qw
	\\
	\pureghost{\Sigma_1}&\s{B}\qw&\measureD{\tilde{e}_1}
	}
	\end{aligned}\ =\ \begin{aligned}
	\Qcircuit @C=1em @R=1em
	{   \multiprepareC{1}{\Sigma_2}&\qw&\s{A}\qw&\qw
	\\
	\pureghost{\Sigma_2}&\s{B}\qw&\measureD{\tilde{e}_2}
	}
	\end{aligned},
\end{align*}
then there exists $\T{R}\in\RevTransf{B}{B}$ such that:
\begin{align*}
\begin{aligned}
\Qcircuit @C=1em @R=1em
{   \multiprepareC{1}{\Sigma_2}&\s{A}\qw&\\
	\pureghost{\Sigma_2}&\s{B}\qw&
}
\end{aligned}=\ \ \begin{aligned}
\Qcircuit @C=1em @R=1em
{   \multiprepareC{1}{\Sigma_1}&\qw&\s{A}\qw&\qw&\\
	\pureghost{\Sigma_1}&\s{B}\qw&\gate{\T{R}}&\s{B}\qw&
}
\end{aligned}.
\end{align*}
\end{property}
In other words, \emph{if there exists} a purification $\rket{\Sigma}_{\sys{AB}}$ of some given deterministic state, then such $\rket{\Sigma}_{\sys{AB}}$ is \emph{essentially unique} in $\sys{AB}$, namely, unique up to reversible local transformations of $\sys{B}$. Notice that Property~\ref{prope:uniqueness_purification}---choosing $\sys{A}=\sys{I}$---implies \emph{transitivity of reversible channels on pure states}. We observe that Property~\ref{prope:uniqueness_purification} is independent of the purification principle, since, given any OPT with local discriminability, one can always restrict: (i) the class of reversible transformations to the identity $\T{I}$ and the braiding $\T{S}$ families (and compositions thereof), and (ii) the class of effects to the separable ones---and this can be done in a consistent way, ending up with a mere prepare-and-measure scenario. Thus, such a prepare-and-measure version of e.g.~QT satisfies the purification principle, but violates essential uniqueness.

\subsection{Dense coding and additivity of classical capacities}
An important communication task that can be performed in BCT is the \emph{dense coding}~\cite{PhysRevLett.69.2881,Werner_2001}. In quantum dense coding, Alice and Bob share a maximally entangled state. Bob detains his qubit and Alice, after performing  on her qubit some local operations in order to encode a two-bit message, sends her local qubit to Bob. The decoding part is then performed by Bob performing a joint measurement in a suitable basis. Via this protocol, Bob manages to decode the message and gains two classical bits of information receiving just one qubit---whose classical capacity, according to Holevo's bound, is one bit. The protocol in BCT retraces the quantum one in the following way. Alice and Bob share a known entangled state, say $\rket{(0b)_{-}}_{\sys{AB}}$, where $\D{A}=\D{B}=2$ and $b$ is an arbitrary local state of the \bibit\  detained by Bob. Now, Alice can encode her two-bit message. She performs the following encoding via entanglement-preserving local operations (recall the reversible transformations in Eq.~\eqref{eq:reversible_transformations}):
\begin{align*}
	&00\mapsto\rket{(0b)_{-}}_{\sys{AB}},\quad 01\mapsto\rket{(0b)_{+}}_{\sys{AB}},\\ &10\mapsto\rket{(1b)_{-}}_{\sys{AB}},\quad 11\mapsto\rket{(1b)_{+}}_{\sys{AB}}.
\end{align*}
Then Alice sends her local \bibit\  to Bob, who can measure the global state, and thus directly decode Alice's message. Notice that, as far as the success of the protocol is concerned, the specific value $b$ of Bob's local state is not relevant. In our BCT dense coding protocol, Alice and Bob share an entangled state, but differently from the quantum case, their local marginal states are always pure. Then, the two-bit message is encoded by Alice into her local bit and into the global degree of freedom (the sign). 
In this way, by sending one \bibit , 
Alice is able communicate two bits of information to Bob. We now have a look at the case when dense coding is realized optimally, i.e.,
with minimal resources. 
In QT, when $\log_2M$ qubits, with $M=d$, are sent from Alice to Bob, a bipartite system with local dimensions $\D{A}=\D{B}=d^2$ is optimal, and the latter case is called \emph{tight}~\cite{Werner_2001}.

In BCT, regardless of Bob's local system's size, the maximum attainable number of distinguished signals is always given by $M=2^{2n-1}$, where $n$ is the number of \bibit s that Alice sends to Bob. Accordingly, by receiving $n$ elementary information carriers, Bob is able to distinguish $2n$ bits, both in the QT and in the BCT protocol.  This means that BCT achieves the same performances as QT in dense coding. 
We remark that, apparently,
BCT seems to exhibit superadditivity of classical capacities, since a bipartite system consisting of two \bibit s---each of which, alone, carries at most one bit---is able to carry three bits. However, if one sticks to the asymptotic definition of classical capacity, every \bibit\ has classical capacity of 2 bits. Indeed, the bits carried by a system of $n$ \bibit s are $2n-1$, and thus the asymptotic capacity is 2 bits per \bibit . Moreover, the same analysis reveals that classical capacity of BCT systems is additive.

The above analysis shows that BCT does not exhibit \emph{hyperdense coding}, i.e., a coding protocol able to overcome the dense coding limit given by QT. Indeed, the possibility of performing a hyperdense coding would imply \emph{superadditive classical capacities}~\cite{massar2015hyperdense}.

\subsection{Violation of entanglement monogamy and of the no-hypersignaling principle}
A quite peculiar feature is that entanglement, in BCT, is not \emph{monogamous}, namely, BCT violates \emph{monogamy of entanglement}. This means that there exist (maximally entangled) states $\rket{\Sigma}_{\sys{ABC}}$ such that the marginal states $\rbra{e}_{\sys{B}}\rket{\Sigma}_{\sys{ABC}},\rbra{e}_{\sys{C}}\rket{\Sigma}_{\sys{ABC}}$ are both maximally entangled. Equivalently, a system can be entangled with more than one other system at the same time. On the contrary, in QT entanglement is monogamous. A violation of entanglement monogamy is known to hold in other theories, such as Real Quantum Theory and Fermionic Quantum Theory~\cite{D_Ariano_2014}. However, in the latter cases those states which violate entanglement monogamy are mixed, while in BCT the violation holds \emph{for every $m$-partite pure state with $m\geq 3$}.

In Ref.~\cite{PhysRevLett.119.020401}, the \emph{no-hypersignaling principle} in probabilistic theories is introduced and analysed. Roughly speaking, the principle states that
any input-output correlation which can be obtained by
transmitting a composite system should also be obtainable by independently transmitting its constituents. For instance, hypersignaling is exhibited if a probabilistic theory, while not contradicting CT and QT at the level of space-like
correlations, displays an anomalous behaviour in its time-like correlations. Both CT and QT satisfy the no-hypersignaling principle. In the case of simplicial theories, a theory $\Theta$ is hypersignaling if and only if there exist $\sys{A},\sys{B}\in\Sys{\Theta}$ such that $\D{AB}>\D{A}\D{B}$ (the technical definition of hypersignaling can be found in Ref.~\cite{PhysRevLett.119.020401}). Namely, a simplicial theory is hypersignaling if and only if it has entanglement. BCT, to the best of our knowledge, is the first example of a complete hypersignaling theory (Ref.~\cite{PhysRevLett.119.020401} analyses the correlations of a model consisting of a two-system scenario). Differently from the model presented in Ref.~\cite{PhysRevLett.119.020401}, in BCT the existence of anomalous behaviours in time-like correlations requires the action of bipartite effects on entangled states. We notice that the no-hypersignaling principle is neither sufficient nor necessary for local discriminability~\cite{PhysRevLett.119.020401}.

\subsection{Entanglement swapping,  cloning/teleportation, and non-null discord states}\label{subsec:entanglement_swapping}
In BCT it is possible to perform \emph{entanglement swapping}~\cite{PhysRevLett.71.4287}, namely the task of transferring entanglement to two remote systems---which are initially uncorrelated---\emph{without interaction}. It is easy to see how this can be done using Eqs.~\eqref{eq:low_transformations},~\eqref{eq:state_composition}, and~\eqref{eq:steering}. Let us pick an arbitrary pure entangled state $\rket{(ij)_s}_{\sys{AB}}$. We want to transfer the entanglement from system $\sys{B}$ to a remote, uncorrelated system $\sys{D}$. Now, fix a chosen pure entangled state $\rket{(kl)_t}_{\sys{CD}}$ shared by system $\sys{D}$ and an ancillary system $\sys{C}$, and perform the following instrument:
\begin{align}\label{eq:entanglement_swapping}
\begin{split}
&\begin{aligned}
\Qcircuit @C=1.5em @R=2em
{
	\multiprepareC{1}{(ij)_{s}}&\qw&\qw&\qw&\s{A}\qw&\qw&\qw&\qw&     \\
	\pureghost{(ij)_{s}}&\s{B}\qw&\qw&\s{B}\qw&\multimeasureD{1}{\lbrace(j'k')_{r}\rbrace_{r=+,-}^{j',k'}}&&&&      \\
	\multiprepareC{1}{(kl)_{t}}&\s{C}\qw&\qw&\s{C}\qw&\ghost{\lbrace(j'k')_{r}\rbrace_{r=+,-}^{j',k'}}&&&&     \\
	\pureghost{(kl)_{t}}&\qw&\qw&\qw&\s{D}\qw&\qw&\qw&\qw&
	\relax\gategroup{2}{4}{3}{6}{2.6em}{--}
}
\end{aligned}
\!\!\!=\\[2.5ex]&=
\lbrace \frac{1}{2}\sum_{s'=+,-}\delta_{r,ss'}\ \begin{aligned}
\Qcircuit @C=1em @R=1em
{
	\prepareC{(il)_{s't}}&\s{AD}\qw&
}
\end{aligned}\rbrace_{r=+,-}\ =
\\[2.5ex]&=
\lbrace \frac{1}{2}\ \begin{aligned}
\Qcircuit @C=1em @R=1em
{
	\prepareC{(il)_{rst}}&\s{AD}\qw&
}
\end{aligned}\rbrace_{r=+,-}\ ,
\end{split}
\end{align}
where the possible outcomes of the instrument are $j,k$ with probability $1$, and $r=+,-$, each with probability $1/2$. As a result, system $\sys{A}$ and system $\sys{D}$ become entangled. This entanglement swapping protocol in BCT clearly retraces the quantum one.

Despite the existence of an entanglement-swapping protocol, the analysis of the teleportation scenario arising from it would make poor sense. Indeed, by the FIWD principle and Property~\ref{postulate:conditional} it is possible to arbitrarily clone every unknown state, namely, teleportation can be reduced to making a copy of a state and sending it to the receiver (e.g.~using a measure-and-prepare channel). For an in-depth discussion about the dependence of teleportation on other properties (such as nonlocality) of a probabilistic theory, we refer the reader to Refs.~\cite{hardy1999disentangling,barnum2012teleportation}.

Finally, the mere existence of entangled states in BCT implies that the theory contains states having \emph{non-null discord}~\cite{PhysRevLett.108.120502} (in an operational sense). However, one may be wondering whether BCT has also some non-null discord separable states. This is actually not the case, due to the FIWD principle and to the absence of delocalized information in any separable state.

\subsection{Programming and information-theoretically insecure cryptography}
In CT it is possible to perform the task of \emph{programming} any desired channel from a system to any another, due to the existence of a \emph{universal processor}. More precisely, CT satisfies the following property.
\begin{property}[\textbf{Programming}]\label{property:programming}
	Let $\Theta$ be a causal OPT. For every pair of systems $\sys{A},\sys{B}\in\Sys{\Theta}$, there exist a system $\sys{P}$ and a channel $\T{P}_{\sys{A},\sys{B}}\in\TransfN{PA}{PB}$ such that the following holds. For every target channel $\T{C}\in\TransfN{A}{B}$, there exists a program state $\rket{\sigma}_{\sys{P}}\in\StN{P}$ such that:
	\begin{align}\label{eq:programming}
	\begin{aligned}
	\Qcircuit @C=1em @R=1.3em
	{
		&\s{A}&\gate{\T{C}}&\s{B}\qw&
	}
	\end{aligned}
	=
	\begin{aligned}
	\Qcircuit @C=1em @R=1.3em
	{
		&\prepareC{\sigma}&\s{P}\qw&\multigate{1}{\T{P}_{\sys{A},\sys{B}}}&\s{P}\qw&\measureD{e}&
		\\
		&\s{A}\qw&\qw&\ghost{\T{P}_{\sys{A},\sys{B}}}&\qw&\s{B}\qw&\qw&
	}
	\end{aligned}.
	\end{align}
\end{property}
As a straightforward corollary of Proposition~\ref{prop:dilation_tests}---in particular, see the dilation scheme~\eqref{eq:dilation_instruments} in~\ref{item:3}---in BCT the task of programming is possible as well.
In QT, despite the \emph{No-programming theorem}, the task of \emph{probabilistic programming} is possible---where the error probability can be made arbitrarily small, provided that \emph{$\D{P}$ (see Eq.~\eqref{eq:programming}) becomes arbitrarily large}~\cite{Kitaev_1997,PhysRevLett.79.321,PhysRevLett.94.090401}.

Finally, one may be wondering whether entanglement in BCT grants information-theoretical security, in a cryptographic scenario, against the attacks of malicious adversaries. For instance, one can think of the possibility to implement protocols such as secure key generation or distribution. It is easy to see why this is not the case. Despite the existence of entangled states with global information, which remains inaccessible unless all the parties collaborate, such protocols in BCT are intrinsically not secure due to the FIWD principle.

\section{\textbf{Discussion}}\label{sec:discussion}
Table~\ref{tab:properties} provides a survey of the operational features of BCT, along with a list of tasks which can and cannot be performed in the theory.
\setlength{\tabcolsep}{10pt}
\renewcommand{\arraystretch}{1}
\begin{table*}[t]
	\begin{tabular}{|p{0.44\linewidth}|p{0.44\linewidth}|}
		\hline
		\textbf{BCT features \color{darkgreen}\ding{51}} & \textbf{BCT does not feature \color{red}\ding{55}}  \\
		\hline
		Causality & Local tomography\\
		 Atomicity and purity of sequential composition
		 &  Atomicity of parallel composition
        \\
        Purity of parallel composition for channels& Purity of parallel composition for states  \\
		Essential uniqueness of purification   & Purification  \\
		Bilocal tomography & Complementarity \\
		Noncontextuality & Monogamy of entanglement \\
		Full-information without disturbance  & No-hypersignaling \\
		Non-null discord states & Secure key generation and distribution \\
		Dense coding  & Hyperdense coding  \\
		Entanglement swapping / Teleportation & Superpositions  \\
		Universal programmability & No-cloning \\
		Transitivity of local reversible channels & Superadditive classical capacity \\
		Joint perfect discriminability of pure states  & \\
		No-restriction hypothesis &   \\
		Strong self-duality & \\
		Pure conditionalization &   \\
		\hline
	\end{tabular}
	\caption{Survey of the operational properties of BCT.}\label{tab:properties}
\end{table*}

We adopted the notion of classicality defined from the perspective of states: a theory is classical if and only if the sets of states for every system are those of the systems of CT---namely, simplicial sets with jointly perfectly discriminable pure states. However, the theory presented also abides by the notion of classicality proposed in Ref.~\cite{schmid2020structure}, admitting of a noncontextual ontological model. In a recent work~\cite{aubrun2019entangleability}, it is proved the non-trivial result that, under the hypothesis of local tomography, given two sets of states $\St{A}$ and $\St{B}$, the composite space $\St{AB}$ can admit of entangled states if and only if neither $\St{A}$ nor $\St{B}$ is a simplex. On the other hand, a theory without local tomography necessarily features entangled states~\cite{d2019classical}. BCT provides the concrete example that indeed two classical systems can give rise to entangled states (at the expense of local tomography). In principle, it is not obvious that this could be done in a consistent way. Our result shows that entanglement and incompatibility of measurements (complementarity) are two independent properties in a theory. Interestingly, BCT thus also shows that a violation of local discriminability does not necessarily imply nonlocality in a Bell-like scenario, namely, stronger-than-classical space-like correlations. As we will discuss in Sec.~\ref{sec:conclusions}, however, it is not clear at the moment whether BCT is local, i.e.~admits of a local ontological model.

We have seen that in BCT entanglement is not monogamous. The same happens in other non-local-tomographic theories, such as Real Quantum Theory and Fermionic Quantum Theory~\cite{Caves:2001aa,D_Ariano_2014}. One may conjecture this property to be a consequence of the violation of local tomography. Another common trait shared by BCT and Fermionic Quantum Theory is the possibility of activating local discrimination of states with entangled ancillary systems~\cite{D_Ariano_2014,lugli2020fermionic},
as can be easily derived from Eq.~\eqref{eq:entanglement_swapping}.

From an axiomatic viewpoint, the existence of BCT allows one to draw some interesting consequences. One can verify that atomicity of parallel composition---being violated by BCT---is independent from: causality, ideal compression, perfect discriminability, and bilocal discriminability~\cite{bookDCP2017}. The same holds for purity of parallel composition of states, since its violation, in this case, is equivalent to the violation of atomicity of state-composition. Remarkably, purity of parallel composition is satisfied for channels and effects (i.e.~excluding states), while atomicity of parallel composition is violated for any kind of transformation. This fact provides a concrete example motivating the distinction between the two notions of purity and atomicity (see Remark~\ref{rem:atomicity}). In passing by, we notice that the principle of \emph{pure conditionalization}~\cite{Wilce:2010aa} is satisfied by BCT. As a consequence, pure conditionalization implies neither atomicity nor purity of parallel state-composition.
Moreover, atomicity (or purity) of parallel state-composition is neither necessary nor sufficient for the FIWD principle. BCT also shows that the presence of entanglement is compatible with the absence of purification for \emph{any} mixed state, or of \emph{any} kind of superposition~\cite{d2019classical}. Finally, BCT shows that the presence of entanglement is compatible with the FIWD principle, namely, entanglement does not imply information-theoretically secure key generation or distribution.

Interestingly, looking at BCT as a \emph{process theory}~\cite{selby2017process}, one realises that the proposed process-theoretic definition of purity~\cite{chiribella2014distinguishability,selby2017leaks} would imply that the theory has no pure state. One may then argue that the above definition is not tenable as a notion of purity in a probabilistic scope---at least when purity of state-composition is violated.

One may be wondering whether, in principle, BCT is the only classical theory satisfying strict bilocal tomography. In Appendix~\ref{app:simplicial} [Theorem~\ref{thm:classification} and Corollary~\ref{cor:classification_bilocal}], we prove that, for any simplicial theory, rules~\eqref{eq:dimensions} and~\eqref{eq:state_composition} come as a consequence of the following assumptions: (a) Homogeneous strict bilocal tomography: the theory is strictly bilocal-tomographic, with no composite system being local-tomographic; (b) Essential uniqueness of purification (Property~\ref{prope:uniqueness_purification}). In particular, assumption (a) alone is not sufficient to select rule~\eqref{eq:dimensions}, since, as a counterexample, one can verify (using Theorem~\ref{thm:bilocal}) that also the rule
\begin{align*}
\begin{split}
\D{AB}&=\D{A}\D{B}+(\D{A}-1)(\D{B}-1)
\end{split}
\end{align*}
satisfies (a), being also associative. How the reversible dynamics [assumption (b), see Property~\ref{prope:uniqueness_purification}] affects probabilities in a theory has been recently explored in Ref.~\cite{galley2020dynamics}. However, there is strong evidence that assumptions (a) and (b) may be not sufficient to single out BCT among classical theories. Indeed, the reversible transformations admitted by the simplicial structure are not in principle exhausted by those defined in Postulate~\ref{postulate:reversible}, and one can actually define disjoint families of transformations which both are reversible and obey to Eq.~\eqref{eq:local_commute}. On the other hand, coherence (see Subsec.~\ref{subsec:coherence}) is also key to single out BCT. For instance, it is easy to verify that some choices of the associator (see Postulate~\ref{postulate:parallel}), despite being invertible, lead to violations of the pentagon identity. As for the choices of the associator $\alpha$ and the braiding $\T{S}$, modifying Postulate~\ref{postulate:reversible} would lead in principle to conceivable choices differing from~\eqref{eq:quadripartite} and~\eqref{eq:swap}. Nevertheless, Postulate~\ref{postulate:reversible}, along with assumptions (a) and (b), would single out precisely those $\alpha$ and $\T{S}$ postulated for BCT. 


Finally, as well as CT and QT, also BCT satisfies both the no-restriction hypothesis (see Property~\ref{prope:no_restriction}) and strong self-duality~\cite{PhysRevLett.108.130401}. Notice that BCT also satisfies the usual formulation of the no-restriction hypothesis---the one which is valid in a context where local tomography holds, see Subsec.~\ref{subsec:causality}---namely, ``all the positive functionals on states are effects''.


\section{\textbf{Conclusions and outlook}}\label{sec:conclusions}
We conclude the present work drawing some final remarks and pointing out some open problems.

In Sec.~\ref{sec:OPTs} we provided a broad-scope presentation of the operational probabilistic framework. In Subsec.~\ref{subsec:preliminary_results} we proved Theorem~\ref{thm:delta3}, providing a characterisation of the composition rules for system-sizes in arbitrary theories, and then we specialised it to strictly bilocal-tomographic theories, proving Theorem~\ref{thm:bilocal}. In Sec.~\ref{sec:constructing_an_opt}, we set the problem of the consistency of an operational probabilistic theory, identifying a set of sufficient conditions to construct a theory, and to verify its well-posedness and coherence. Finally, we have presented BCT---a classical theory featuring entanglement and complete with a non-trivial set of transformations---providing an extensive characterisation of it.

BCT violates the principle of local tomography. However, the degree of holism required by the theory in the task of \emph{state tomography} is still limited, due to the property of (strict) bilocal tomography. What is more, notice that in BCT, in order to perform \emph{process tomography} of any $\T{T}\in\Transf{A}{B}$ [Proposition~\ref{prop:atomic_conclusive}], one just needs a limited set of bipartite pure states, namely, a set $\{\rket{\rho_i}_{\sys{AE}}\}_{i=1}^{\D{A}}\subset\PurSt{AE}$ such that the set of reduced states $\{\rbra{e}_{\sys{E}}\rket{\rho_i}_{\sys{AE}}\}_{i=1}^{\D{A}}=\PurSt{A}$, while the ancilla $\sys{E}\neq\sys{I}$ and all reduced states $\{\rbra{e}_{\sys{A}}\rket{\rho_i}_{\sys{AE}}\}_{i=1}^{\D{A}}$ can be arbitrarily chosen. One may still argue that the principle of $n$-local discriminability [Property~\ref{prope:n_local}] does not sufficiently bind the degree of holism of a physical theory. Nevertheless, there are examples of theories which, while violating local discriminability, have a strong physical motivation, such as Fermionic Quantum Theory. In addition, generic non-local-tomographic theories provide a sandbox to investigate the logical interdependence of physical properties, which is a key aspect in view of formulating new physical theories.

Indeed, BCT provides two important proofs of concept. First, entanglement and complementarity are decoupled, i.e., they are independent properties of a physical theory. Second, the set of states---or of correlations---is not sufficient to determine the full theory. This is by the way close to the spirit of the no-hypersignaling principle~\cite{PhysRevLett.119.020401}. Interestingly, since BCT enjoys the same sets of states of CT, it looks like it cannot violate any device-independent principle~\cite{scarani2012device,Navascues:2015aa}. This leads to the following question: is it even logically possible to formulate device-independent principles ruling out BCT, but not CT?

As the correlations attainable by a physical theory do not determine the theory itself---and, in fact, may even give rise to quite different theories---one may be questioning \emph{what does it mean to be classical}, or, more specifically along the lines of Ref.~\cite{schmid2020structure}, \emph{what does it mean to be classically explainable}. As we extensively discussed in Sec.~\ref{sec:discussion}, we left the problem of an information-theoretic axiomatisation of BCT open. Such an axiomatisation may shed some light on a meaningful notion of classicality even in a context allowing for non-local tomography. For instance, both Real and Fermionic Quantum Theory are superselected versions of QT: might it be the case that BCT arises as the superselection of CT? We conjecture that this is not the case, as we argue in the following.

By direct inspection of the state-composition rule in BCT [Postulate~\ref{postulate:parallel}, Eq.~\eqref{eq:state_composition}], one sees that every product of pure states is the flat statistical mixture of two entangled states carrying a ``delocalized'' variable, i.e.~the sign $s$ in Eq.~\eqref{eq:state_composition}. 
One is tempted to assign an \emph{element of reality} to the global degree of freedom $s$. However, is this intuition correct? In fact, there are two ontologically distinct---although operationally equivalent---preparation procedures for the parallel composition of preparation-tests in BCT: one is to generate the mixture using classical randomness for the composite system; the other one is to simply prepare two local pure states in parallel. Therefore, is it meaningful to say that an entangled classical measurement on a product state reveals a pre-existing value $s$ in both cases? This would amount to say that the product state was entangled in the first place. Surprisingly, such a paradox is reminiscent of the question whether measurements reveal a pre-defined value, albeit in a measurement context which is, in all respects, classical. Being the value measurable without disturbance, the issue seems to be irrelevant: there is no contradiction in thinking of $s$ as an ``element of reality'' in the sense of EPR. However, this position opens a deeper question: can the global degree of freedom $s$ be interpreted as a function of suitably defined local (possibly hidden) variables? We have evidence that the primary difficulty in devising such a model is having hidden variables of finite sizes, i.e., hidden variables which can be stored in a limited memory. This would mean to require that each system cannot retain the values of global data for all the possible choices of remote systems and experiments considered. In Ref.~\cite{PhysRevA.94.052127}, a similar situation was investigated as far as QT is concerned. However, in that case, the hidden register would store the values of the intrinsic properties for all possible sequences of measurements that the observer can perform. The motivation to demand such a property is that abandoning it would allow
for some form of \emph{superdeterminism}, where e.g.~all that (infinite amount of) information, which is relevant to predict the outcomes of any experiment, is possibly stored in each system. Nevertheless, such a requirement seemingly clashes with the action of the associator and braiding of the theory. Might it be that BCT does not admit of a local-realistic hidden variable model, in spite of having classical correlations?\footnote{On top of this, at this stage it is not clear whether a (noncontextual) ontological model for BCT may possibly satisfy \emph{diagram preservation}---namely, the property of preserving the compositional structure of the OPT---or whether it may preserve the linear (convex and coarse-graining) relations of the OPT~\cite{schmid2020structure}.}

Finally, the present study provides an adequate toolbox for the comprehensive construction of complete and consistent physical theories. In particular, Sec.~\ref{sec:constructing_an_opt} sets an exploitable constructive procedure which can be used in a generic context, while in Sec.~\ref{sec:BCT} we provided an explicit concrete application of the latter. This paves the way for a direct employment of the developed techniques in enhancing the investigations on post-quantum theories.

\section*{Acknowledgements}
This paper was made possible through the support of a grant from the John Templeton Foundation, Grant No. 60609 ``Quantum Causal Structures.'' The opinions expressed in this publication are those of the authors and do not necessarily reflect the views of the John Templeton Foundation. M. Erba wishes to thank A. Tosini and M. Pl\'avala for stimulating and insightful discussions.

\bibliography{opt_bib2}

\appendix

\section{Classification of BCT's transformations [Proof of Proposition~\ref{prop:atomic_conclusive}]}\label{app:atomic}
In the present appendix, in order to make sense of expressions of the form $\rket{(l_1l_2)_{u}}_{\sys{S}_1\sys{S}_2}$ with $\sys{S}_1=\sys{I}$ or $\sys{S}_2=\sys{I}$, these can be safely replaced by the expressions $\rket{(*l_2)_{+}}_{\sys{S}_2}$ or $\rket{(l_1*)_{+}}_{\sys{S}_1}$, respectively. Moreover, we will extensively exploit the classification of BCT's effects given in Subsec.~\ref{subsec:characterisation}.
\begin{lemma}\label{lem:reversible}
	Let $\sys{I}\neq\sys{A}\in\Sys{BCT}$, $\sys{B}\in\Sys{BCT}$, and $\T{T}\in\Transf{A}{B}$ be defined as
	\begin{align}\label{eq:atomic_dilation}
	\begin{aligned}
	\Qcircuit @C=1em @R=1.3em
	{
		&\s{A}&\gate{\T{T}}&\s{B}\qw&
	}
	\end{aligned}
	\coloneqq 
	\begin{aligned}
	\Qcircuit @C=1em @R=1.3em
	{
		&\prepareC{k}&\s{B'}\qw&\multigate{1}{\T{R}}&\s{A'}\qw&\measureD{\tilde{k}}&
		\\
		&\s{A}\qw&\qw&\ghost{\T{R}}&\qw&\s{B}\qw&\qw&
	}
	\end{aligned},
	\end{align}
	with $\rket{k}_{\sys{B'}}\in \PurSt{B'},\rbra{\tilde{k}}_{\sys{A'}}\in \Eff{A'}$ such that $\rbraket{k'}{\tilde{k}}_{\sys{A'}}=\delta_{k'\tilde{k}}$ for any $\rket{k'}_{\sys{A'}}\in\PurSt{A'}$, and $\T{R}\in\RevTransf{B'A}{A'B}$ such that for all $\sys{E}\in\Sys{BCT}$:
	\begin{align}\label{eq:reversible_action}
	\begin{aligned}
	\Qcircuit @C=1em @R=1.3em
	{
		\multiprepareC{1}{((ki)_{s_1}j)_{s_2}}&\s{B'A}\qw&\gate{\T{R}}&\s{A'B}\qw&\\
		\pureghost{((ki)_{s_1}j)_{s_2}'}&\qw&\s{E}\qw&\qw&
	}
	\end{aligned}
	\coloneqq
	\begin{aligned}
	\Qcircuit @C=1em @R=1.3em
	{
		\multiprepareC{1}{((\kappa_{s_1}^{ki}\iota_{s_1}^{ki})_{r_{s_1}^{ki}}j)_{t_{s_1}^{ki}s_2}}&\s{A'B}\qw&
		\\
		\pureghost{((\kappa_{s_1}^{ki}\iota_{s_1}^{ki})_{r_{s_1}^{ki}}j)_{t_{s_1}^{ki}s_2}}&\s{E}\qw&
	}
	\end{aligned}.
	\end{align}
	[In Eq.~\eqref{eq:reversible_action}, there is no dependence on $k$ and $s_1$ in the case $\sys{B'}=\sys{I}$, while in the case $\sys{A'}=\sys{I}$ or $\sys{B}=\sys{I}$ we just set $r_{s_1}^{ki}=+$ for all $k,i,s_1$.] Then, for all $\rket{(ij)_s}_{\sys{AE}}\in\St{AE}$, the following holds:
	\begin{align}\label{eq:dilation_action}
	\begin{split}
	&\begin{aligned}
	\Qcircuit @C=1em @R=1.3em
	{
		\multiprepareC{1}{(ij)_{s}}&\s{A}\qw&\gate{\T{T}}&\s{B}\qw&\\
		\pureghost{(ij)_{s}}&\qw&\s{E}\qw&\qw&
	}
	\end{aligned}
	= \\[2.5ex] =&
	\frac{1}{2}\sum_{s'=+,-}\delta_{\tilde{k},\kappa_{s'}^{ki}}
	\begin{aligned}
	\Qcircuit @C=1em @R=1.3em
	{
		\multiprepareC{1}{(\iota_{s'}^{ki}j)_{r_{s'}^{ki}t_{s'}^{ki} s's}}&\s{B}\qw&
		\\
		\pureghost{(\iota_{s'}^{ki}j)_{r_{s'}^{ki}t_{s'}^{ki} s's}}&\s{E}\qw
	}
	\end{aligned}.
	\end{split}
	\end{align}
\end{lemma}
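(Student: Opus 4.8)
The plan is to evaluate the right-hand circuit in~\eqref{eq:atomic_dilation}, extended by the identity on the reference system $\sys{E}$, by propagating the input pure state $\rket{(ij)_s}_{\sys{AE}}$ through the three stages of the dilation --- the preparation of $\rket{k}_{\sys{B'}}$, the reversible box $\T{R}\boxtimes\T{I}_{\sys{E}}$, and the local effect $\rbra{\tilde{k}}_{\sys{A'}}$ --- using only the state-composition rule~\eqref{eq:state_composition}, the associator~\eqref{eq:quadripartite}, the steering rule~\eqref{eq:steering}, and the assumed action~\eqref{eq:reversible_action} of $\T{R}$. The whole argument is a termwise computation on pure states, which is legitimate since every state decomposes uniquely into pure states in a simplicial theory and all the maps involved are linear.

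First I would compose the ancilla $\rket{k}_{\sys{B'}}$ with the input $\rket{(ij)_s}_{\sys{AE}}$. Since the box $\T{R}$ groups the wires as $(\sys{B'A})\sys{E}$, I compute the product in the $\sys{B'}(\sys{AE})$ association --- where~\eqref{eq:state_composition}, read as the composition rule between the system $\sys{B'}$ and the composite system $\sys{AE}$, applies directly --- and then transport the result into the $(\sys{B'A})\sys{E}$ association via the associator~\eqref{eq:quadripartite}. This yields $\rket{k}_{\sys{B'}}\boxtimes\rket{(ij)_s}_{\sys{AE}}=\tfrac12\sum_{s_1=+,-}\rket{((ki)_{s_1}j)_{s_1 s}}_{\sys{B'AE}}$, i.e. a flat mixture over the inner sign $s_1$ with the outer sign locked to $s_1 s$.

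Next I apply $\T{R}\boxtimes\T{I}_{\sys{E}}$ termwise through~\eqref{eq:reversible_action} with $s_2=s_1 s$, sending each summand to $\rket{((\kappa^{ki}_{s_1}\iota^{ki}_{s_1})_{r^{ki}_{s_1}}j)_{t^{ki}_{s_1} s_1 s}}_{\sys{A'BE}}$. Finally I apply the discriminating effect $\rbra{\tilde{k}}_{\sys{A'}}$: re-associating a generic summand $((\kappa\iota)_r j)_w$ into the $\sys{A'}(\sys{BE})$ form $(\kappa(\iota j)_{rw})_r$ via~\eqref{eq:quadripartite} exposes $\sys{A'}$ as the outermost single system, so the steering rule~\eqref{eq:steering} collapses it to $\delta_{\tilde{k},\kappa^{ki}_{s_1}}\rket{(\iota^{ki}_{s_1}j)_{r^{ki}_{s_1}t^{ki}_{s_1}s_1 s}}_{\sys{BE}}$, with the outer sign washed out as in~\eqref{eq:steering}. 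Summing over $s_1$ and relabelling $s_1\to s'$ reproduces~\eqref{eq:dilation_action}.

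The computation is otherwise routine; the only delicate points are the consistent sign bookkeeping through the two applications of the associator~\eqref{eq:quadripartite} --- tracking how the inner product-sign $s_1 s_2$ and the outer sign $s_1$ interchange under re-association --- and the degenerate cases $\sys{B'}=\sys{I}$, $\sys{A'}=\sys{I}$, or $\sys{B}=\sys{I}$, for which one invokes the conventions fixed in the statement (dropping the $k,s_1$ dependence, respectively setting $r^{ki}_{s_1}=+$). I expect the sign-tracking through the associator to be the main source of potential error, but no genuine conceptual obstacle arises.
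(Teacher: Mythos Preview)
Your proposal is correct and follows essentially the same route as the paper's proof: both compute by first expanding $\rket{k}_{\sys{B'}}\boxtimes\rket{(ij)_s}_{\sys{AE}}$ via the state-composition rule, re-associating with~\eqref{eq:quadripartite} to the $(\sys{B'A})\sys{E}$ form, applying~\eqref{eq:reversible_action} termwise, re-associating once more to expose $\sys{A'}$, and then using the steering rule~\eqref{eq:steering} to discharge $\rbra{\tilde{k}}_{\sys{A'}}$. Your sign bookkeeping through the two associator moves matches the paper's exactly.
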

\begin{proof}
	By direct computation, using Postulate~\ref{postulate:parallel} and formula~\eqref{eq:steering}, one has:
	\begin{align*}
	&\begin{aligned}
	\Qcircuit @C=1em @R=1.3em
	{
	\multiprepareC{1}{(ij)_{s}}&\s{A}\qw&\gate{\T{T}}&\s{B}\qw&\\
	\pureghost{(ij)_{s}}&\qw&\s{E}\qw&\qw&
	}
	\end{aligned}
	=\\[2.5ex]=&
	\frac{1}{2}\sum_{s'=+,-}\begin{aligned}
	\Qcircuit @C=1em @R=1.3em
	{
		\multiprepareC{1}{(k(ij)_{s})_{s'}}&\s{B'}\qw&\multigate{1}{\T{R}\boxtimes\T{I}_{\sys{E}}}&\s{A'}\qw&\measureD{\tilde{k}}&\\
		\pureghost{(k(ij)_{s})_{s'}}&\s{AE}\qw&\ghost{\T{R}\boxtimes\T{I}_{\sys{E}}}&\qw&\s{BE}\qw&\qw&
	}
	\end{aligned}
	=\\[2.5ex]=&
	\frac{1}{2}\sum_{s'=+,-}\begin{aligned}
	\Qcircuit @C=1em @R=1.3em
	{
		\multiprepareC{1}{((ki)_{s'}j)_{s's}}&\s{B'A}\qw&\multigate{1}{\T{R}\boxtimes\T{I}_{\sys{E}}}&\s{A'}\qw&\measureD{\tilde{k}}&\\
		\pureghost{((ki)_{s'}j)_{s's}}&\s{E}\qw&\ghost{\T{R}\boxtimes\T{I}_{\sys{E}}}&\qw&\s{BE}\qw&\qw&
	}
	\end{aligned}
	=\\[2.5ex]=&
	\frac{1}{2}\sum_{s'=+,-}\begin{aligned}
	\Qcircuit @C=1em @R=.85em
	{
		\multiprepareC{2}{((\kappa_{s'}^{ki}\iota_{s'}^{ki})_{r_{s'}^{ki}}j)_{t_{s'}^{ki}s's}}&\s{A'}\qw&\measureD{\tilde{k}}&
		\\
		\pureghost{((\kappa_{s'}^{ki}\iota_{s'}^{ki})_{r_{s'}^{ki}}j)_{t_{s'}^{ki}s's}}&\qw&\s{B}\qw&\qw&
		\\
		\pureghost{((\kappa_{s'}^{ki}\iota_{s'}^{ki})_{r_{s'}^{ki}}j)_{t_{s'}^{ki}s's}}&\qw&\s{E}\qw&\qw&
	}
	\end{aligned}
	=\\[2.5ex]=&
	\frac{1}{2}\sum_{s'=+,-}\begin{aligned}
	\Qcircuit @C=1em @R=1.3em
	{
		\multiprepareC{1}{(\kappa_{s'}^{ki}(\iota_{s'}^{ki}j)_{r_{s'}^{ki}t_{s'}^{ki}s's})_{r_{s'}^{ki}}}&\s{A'}\qw&\measureD{\tilde{k}}&
		\\
		\pureghost{(\kappa_{s'}^{ki}(\iota_{s'}^{ki}j)_{r_{s'}^{ki}t_{s'}^{ki}s's})_{r_{s'}^{ki}}}&\qw&\s{BE}\qw&\qw&
	}
	\end{aligned}
	=\\[2.5ex]=&
	\frac{1}{2}\sum_{s'=+,-}\delta_{\tilde{k},\kappa_{s'}^{ik}}
	\begin{aligned}
	\Qcircuit @C=1em @R=1.3em
	{
	\multiprepareC{1}{(\iota_{s'}^{ki}j)_{r_{s'}^{ki}t_{s'}^{ki} s's}}&\s{B}\qw&\\
	\pureghost{(\iota_{s'}^{ki}j)_{r_{s'}^{ki}t_{s'}^{ki} s's}}&\s{E}\qw&
	}
	\end{aligned}.
	\end{align*}
\end{proof}
From Postulate~\ref{postulate:dilation} and Eqs.~\eqref{eq:atomic_dilation},~\eqref{eq:reversible_action}, and \eqref{eq:dilation_action}, posing $\rket{\Sigma}_{\sys{B'}}\coloneqq\sum_{k=1}^{\D{B'}}\alpha_{k}\rket{k}_{\sys{B'}},\rbra{H}_{\sys{A'}}\coloneqq\sum_{\tilde{k}=1}^{\D{A'}}\beta_{\tilde{k}}\rbra{\tilde{k}}_{\sys{A'}}$---where for all $k,\tilde{k}$  one has $0\leq\alpha_{k},\beta_{\tilde{k}}\leq 1$ such that $\sum_{k=1}^{\D{B'}}\alpha_{k}\in[0,1]$ and $\sum_{\tilde{k}=1}^{\D{A'}}\beta_{\tilde{k}}\in[0,\D{A'}]$---the action of an arbitrary transformation $\T{T}\in\Transf{A}{B}$ with $\sys{A}\neq\sys{I}$ has the following form:
\begin{align}\label{eq:arbitrary_transformation}
\begin{split}
&\begin{aligned}
\Qcircuit @C=1em @R=1.3em
{
\multiprepareC{1}{(ij)_{s}}&\s{A}\qw&\gate{\T{T}}&\s{B}\qw&\\
\pureghost{(ij)_{s}}&\qw&\s{E}\qw&\qw&
}
\end{aligned}
=\\[2.5ex]=&
\frac{1}{2}\sum_{k=1}^{\D{B'}}\sum_{s'=+,-}\alpha_{k}\beta_{\kappa_{s'}^{ki}}
\begin{aligned}
\Qcircuit @C=1em @R=1.3em
{
	\multiprepareC{1}{(\iota_{s'}^{ki}j)_{r_{s'}^{ki}t_{s'}^{ki} s's}}&\s{B}\qw&\\
	\pureghost{(\iota_{s'}^{ki}j)_{r_{s'}^{ki}t_{s'}^{ki} s's}}&\s{E}\qw&
}
\end{aligned}.
\end{split}
\end{align}

Let $\sys{I}\neq\sys{A}\in\Sys{BCT}$, $\sys{B}\in\Sys{BCT}$, and $\T{A}$ be a map defined by the following action:
\begin{align}\label{eq:atomic}
\begin{aligned}
\Qcircuit @C=1em @R=1.3em
{
	\multiprepareC{1}{(ij)_{s}}&\s{A}\qw&\gate{\T{A}}&\s{B}\qw&\\
	\pureghost{(ij)_{s}}&\qw&\s{E}\qw&\qw&
}
\end{aligned}
=
\lambda\delta_{ii_0}\begin{aligned}
\Qcircuit @C=1em @R=1.3em
{
	\multiprepareC{1}{(lj)_{\tau s}}&\s{B}\qw&
	\\
	\pureghost{(lj)_{\tau s}}&\s{E}\qw&
}
\end{aligned},
\end{align}
for some $\lambda\in[0,1]$, $1\leq i_0\leq D_\sys A$, $1\leq l\leq D_\sys B$, and $\tau=\pm$. Notice that such functions $\T{A}$ map states to states.
\begin{lemma}[Characterisation of transformations]\label{lem:char_transf}
	Let $\sys{I}\neq\sys{A}\in\Sys{BCT}$, $\sys{B}\in\Sys{BCT}$, and $\T{A}$ be a map defined as in Eq.~\eqref{eq:atomic}. Then every $\T{T}\in\Transf{A}{B}$ is a conical combination of elements of the form~\eqref{eq:atomic}.
\end{lemma}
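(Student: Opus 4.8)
The plan is to build directly on the explicit action~\eqref{eq:arbitrary_transformation}, which has already been derived for any $\T{T}\in\Transf{A}{B}$ with $\sys{I}\neq\sys{A}$ by invoking the reversible dilation of Postulate~\ref{postulate:dilation} and expanding $\rket{\Sigma}_{\sys{B'}}$ and $\rbra{H}_{\sys{A'}}$ over the pure states $\rket{k}_{\sys{B'}}$ and the discriminating effects $\rbra{\tilde{k}}_{\sys{A'}}$ through Lemma~\ref{lem:reversible}. Granted that expression, all that remains is to regroup the right-hand side of~\eqref{eq:arbitrary_transformation} so that each summand is manifestly of the atomic form~\eqref{eq:atomic}.

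The observation that makes this work is that the permutation data $\kappa_{s'}^{ki},\iota_{s'}^{ki},r_{s'}^{ki},t_{s'}^{ki}$ occurring in~\eqref{eq:arbitrary_transformation} are fixed by the reversible transformation $\T{R}$ via~\eqref{eq:reversible_action}, and therefore depend only on the triple $(k,i,s')$---never on the spectator index $j$, the input sign $s$, or the ancilla $\sys{E}$, which are all carried along passively by $\T{R}$. Hence, fixing an input index $i_0$ together with a pair $(k,s')$, the single-term map
\begin{align*}
\rket{(ij)_{s}}_{\sys{AE}}\longmapsto \frac{1}{2}\,\alpha_{k}\,\beta_{\kappa_{s'}^{ki_0}}\,\delta_{ii_0}\,\rket{(\iota_{s'}^{ki_0}j)_{r_{s'}^{ki_0}t_{s'}^{ki_0}s's}}_{\sys{BE}}
\end{align*}
is exactly of the shape~\eqref{eq:atomic}, with $i_0$ the selected input index, $l=\iota_{s'}^{ki_0}$ the output index, $\tau=r_{s'}^{ki_0}t_{s'}^{ki_0}s'$ the output sign twist, and $\lambda=\frac{1}{2}\alpha_{k}\beta_{\kappa_{s'}^{ki_0}}$. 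Since $0\leq\alpha_{k},\beta_{\kappa_{s'}^{ki_0}}\leq 1$, one gets $\lambda\in[0,\tfrac12]\subseteq[0,1]$, so the parameter range demanded by~\eqref{eq:atomic} is respected.

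Denoting this atomic map by $\T{A}_{i_0,k,s'}$, I would then sum over all $1\leq i_0\leq\D{A}$, all $1\leq k\leq\D{B'}$, and $s'=\pm$. Applied to a generic pure input $\rket{(ij)_{s}}_{\sys{AE}}$, the factor $\sum_{i_0}\delta_{ii_0}$ collapses to setting $i_0=i$, so $\sum_{i_0,k,s'}\T{A}_{i_0,k,s'}$ reproduces verbatim the right-hand side of~\eqref{eq:arbitrary_transformation}. As this holds for every $\sys{E}$, Remark~\ref{rem:transformation_equiv_class} lets me conclude $\T{T}=\sum_{i_0,k,s'}\T{A}_{i_0,k,s'}$, exhibiting $\T{T}$ as a finite conical combination (indeed a plain sum, the scalars being absorbed into each $\lambda$) of maps of the form~\eqref{eq:atomic}.

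The only genuinely delicate point is the $(j,s,\sys{E})$-independence of the permutation data, which is precisely what guarantees that each summand is uniformly of the atomic shape~\eqref{eq:atomic} rather than a $j$- or $\sys{E}$-dependent family of such maps; this is immediate from~\eqref{eq:reversible_action}. The boundary case $\sys{B}=\sys{I}$ is handled in the same way, using the effect form~\eqref{eq:atomiceff} in place of~\eqref{eq:atomic} together with the conventions fixed at the opening of this appendix for symbols $\rket{(l_1l_2)_u}$ carrying a trivial tensor factor.
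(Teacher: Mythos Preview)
Your argument is correct and follows essentially the same approach as the paper: both proofs start from the explicit action~\eqref{eq:arbitrary_transformation} and exploit the $(j,s,\sys E)$-independence of the permutation data coming from~\eqref{eq:reversible_action} to recognise each summand as an atomic map of the form~\eqref{eq:atomic}. The only cosmetic difference is that the paper first regroups the terms by their output data $(m,\tau)$ via the index sets $I_{m,\tau}(i)$, whereas you leave the sum indexed by the dilation data $(i_0,k,s')$; both yield valid conical decompositions.
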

\begin{proof}
	Let $\sys{I}\neq\sys{A}\in\Sys{BCT}$ and $\sys{B}\in\Sys{BCT}$.
	On the one hand, we know that the action of an arbitrary transformation $\T{T}\in\Transf{A}{B}$ has the form~\eqref{eq:arbitrary_transformation}. On the other hand, the action on $\PurSt{AE}$ of an arbitrary conical combination of generalised transformations of the form~\eqref{eq:atomic} is given by:
	\begin{align}\label{eq:atomic_sum}
	\begin{split}
	&\begin{aligned}
	\Qcircuit @C=1em @R=1.3em
	{
	\multiprepareC{1}{(ij)_{s}}&\s{A}\qw&\gate{\sum_n\T{A}_n}&\s{B}\qw&\\
	\pureghost{(ij)_{s}}&\qw&\s{E}\qw&\qw&
	}
	\end{aligned}
	=\\[2.5ex]=&
	\sum_{m=1}^{\D{B}}\sum_{\tau=+,-}\lambda_{m,\tau}^{(i)}\begin{aligned}
	\Qcircuit @C=1em @R=1.3em
	{
		\multiprepareC{1}{(mj)_{\tau s}}&\s{B}\qw&\\
		\pureghost{(mj)_{\tau s}}&\s{E}\qw&
	}
	\end{aligned},
	\end{split}
	\end{align}
	where the $\lambda_{m,\tau}^{(i)}$ are non-negative real numbers such that $\sum_{m,\tau}\lambda_{m,\tau}^{(i)}\in[0,1]$. Now,  
	for every $1\leq i\leq D_\sys A$ let us define a set $I_{m,\tau}(i)$ such that: $(k,s')\in I_{m,\tau}(i)$ if and only if $\iota_{s'}^{ki}=m$ and $r_{s'}^{ki}t_{s'}^{ki} s'=\tau$.
	For all $i,m,\tau$ one can pose:
	\begin{align*}
	\lambda_{m,\tau}^{(i)}\coloneqq \frac{1}{2}\sum_{(k,s')\in I_{m,\tau}(i)}\alpha_{k}\beta_{\kappa_{s'}^{ki}}.
	\end{align*}
	Clearly, if $I_{m,\tau} (i)= \emptyset$ one has $\lambda_{m,\tau}^{(i)}=0$.
	This proves that Eq.~\eqref{eq:arbitrary_transformation} can be rewritten in form~\eqref{eq:atomic_sum}, namely, every $\T{T}\in\Transf{A}{B}$ is a conical combination of elements of the form~\eqref{eq:atomic}.
\end{proof}
Let $\sys{I}\neq\sys{A}\in\Sys{BCT}$ and $\sys{B}\in\Sys{BCT}$, $h$ be any chosen function from $\D{A}$ elements to $\D{B}$ elements, and $\xi$ any chosen function from $\D{A}$ to $\{+,-\}$ [$\xi$ is set to be identically $+$ if $\sys{B}=\sys{I}$]. Moreover, let $\sys{B''}\in\Sys{BCT}$ such that $\D{B''}=2\D{B}^{\D{A}}$ and $\sys{B}'\coloneqq\sys{B}''\sys{B}$, $\sys{A}'\coloneqq\sys{B}''\sys{A}$. Define $\T{\tilde{R}}_{\sys{A,B}}\in\RevTransf{B'A}{A'B}$ in the following way:
\begin{align}\label{eq:atomic_atomic}
\begin{split}
&\begin{aligned}
\Qcircuit @C=1em @R=1.3em
{
	&\multiprepareC{3}{((\sigma_{h,\xi}k)_{s_1}(ij)_{s_2})_{s_3}}&\s{B''}\qw&\multigate{2}{\T{\tilde{R}}_{\sys{A,B}}}&\s{\sys{B''}}\qw&\\
	&\pureghost{((\sigma_{h,\xi}k)_{s_1}(ij)_{s_2})_{s_3}}&\s{B}\qw&\ghost{\T{\tilde{R}}_{\sys{A,B}}}&\s{\sys{A}}\qw&\\
	&\pureghost{((\sigma_{h,\xi}k)_{s_1}(ij)_{s_2})_{s_3}}&\s{A}\qw&\ghost{\T{\tilde{R}}_{\sys{A,B}}}&\s{B}\qw&\\
	&\pureghost{((\sigma_{h,\xi}k)_{s_1}(ij)_{s_2})_{s_3}}&\qw&\qw&\s{E}\qw&
}
\end{aligned}
\coloneqq\\[2.5ex]
\coloneqq&\begin{aligned}
\Qcircuit @C=1em @R=1em
{
	&\multiprepareC{3}{((\sigma_{h,\xi}i)_{s_1}([h(i)\oplus k]j)_{\xi(i) s_2})_{s_3}}&\s{\sys{B''}}\qw&\\
	&\pureghost{((\sigma_{h,\xi}i)_{s_1}([h(i)\oplus k]j)_{\xi(i) s_2})_{s_3}}&\s{A}\qw&\\
	&\pureghost{((\sigma_{h,\xi}i)_{s_1}([h(i)\oplus k]j)_{\xi(i) s_2})_{s_3}}&\s{B}\qw&\\
	&\pureghost{((\sigma_{h,\xi}i)_{s_1}([h(i)\oplus k]j)_{\xi(i) s_2})_{s_3}}&\s{E}\qw&
}
\end{aligned},
\end{split}
\end{align}
where ${\oplus}$ denotes the sum modulo $\D{B}$. The states $\sigma_{h,\tau}$ in Eq.~\eqref{eq:atomic_atomic} implement every possible pair of functions $(h,\xi)$. It is easy to realise that the transformation $\tilde{\T{R}}_{\sys{A,B}}$ complies indeed with Postulate~\ref{postulate:reversible}. In the remainder of the present appendix and in Appendix~\ref{app:BCT_no-restriction}, we will use $\tilde{\T{R}}_{\sys{A,B}}$ as a universal processor for BCT's transformations.
		\begin{lemma}[Realisation of deterministic transformations]\label{lem:realisation_channels}
		 Let $\sys{I}\neq\sys{A}\in\Sys{BCT}$, $\sys{B}\in\Sys{BCT}$, and $\T{D}\in\TransfR{A}{B}$. Then the following are equivalent:
		\begin{enumerate}[label=(\Alph*)]
			\item\label{item:22} For every $i\in\{1,2,\ldots,\D{A}\}$ there exists a probability distribution $\{\lambda_{m,\tau}^{(i)}\}_{(m,\tau)\in I}$, with $I=\{1,2,\ldots,\D{B}\}\times\{+,-\}$, such that the following holds for all $\rket{(ij)_s}_{\sys{AE}}\in\PurSt{AE}$:
			\begin{align}\label{eq:channel2}
			\begin{aligned}
			\Qcircuit @C=1em @R=1.3em
			{
				\multiprepareC{1}{(ij)_{s}}&\s{A}\qw&\gate{\T{D}}&\s{B}\qw&\\
				\pureghost{(ij)_{s}}&\qw&\s{E}\qw&\qw&
			}
			\end{aligned}=
			\sum_{(m,\tau)\in I}\lambda_{m,\tau}^{(i)}
			\begin{aligned}
			\Qcircuit @C=1em @R=1.3em
			{
				\multiprepareC{1}{(mj)_{\tau s}}&\s{B}\qw&\\
				\pureghost{(mj)_{\tau s}}&\s{E}\qw&
			}
			\end{aligned}.
			\end{align}
			\item\label{item:33} $\T{D}\in\TransfN{A}{B}$;
			\item\label{item:11} There exists $\rket{\Sigma}_{\sys{B'}}\in\StN{B'}$ such that the following holds:
			\begin{align}\label{eq:channel}
			\begin{aligned}
			\Qcircuit @C=1em @R=1.3em
			{
				&\s{A}&\gate{\T{D}}&\s{B}\qw&
			}
			\end{aligned}
			=
			\begin{aligned}
			\Qcircuit @C=1em @R=1.3em
			{
				&\prepareC{\Sigma}&\s{B'}\qw&\multigate{1}{\tilde{\T{R}}_{\sys{A,B}}}&\s{A'}\qw&\measureD{e}&
				\\
				&\s{A}\qw&\qw&\ghost{\tilde{\T{R}}_{\sys{A,B}}}&\qw&\s{B}\qw&\qw&
			}
			\end{aligned};
			\end{align}
		\end{enumerate}
		\end{lemma}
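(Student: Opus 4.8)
The plan is to prove the three conditions equivalent by closing the cycle $\ref{item:11}\Rightarrow\ref{item:33}\Rightarrow\ref{item:22}\Rightarrow\ref{item:11}$. The first two implications are soft and follow from results already available; the real work is the explicit construction of the program state in $\ref{item:22}\Rightarrow\ref{item:11}$.

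The implication $\ref{item:11}\Rightarrow\ref{item:33}$ is immediate: in the dilation~\eqref{eq:channel} the state $\rket{\Sigma}_{\sys{B'}}$ is deterministic by hypothesis, the processor $\tilde{\T{R}}_{\sys{A,B}}$ is reversible and hence deterministic by Proposition~\ref{prop:pure_to_pure}, and $\rbra{e}_{\sys{A'}}$ is the deterministic effect; since being deterministic is preserved under sequential and parallel composition, the composite $\T{D}$ lies in $\TransfN{A}{B}$. For $\ref{item:33}\Rightarrow\ref{item:22}$ I would invoke Lemma~\ref{lem:char_transf}: as $\T{D}\in\Transf{A}{B}$, its action on the pure states $\rket{(ij)_s}_{\sys{AE}}$ is the conical combination~\eqref{eq:atomic_sum}, with $\lambda^{(i)}_{m,\tau}\ge 0$ and $\sum_{m,\tau}\lambda^{(i)}_{m,\tau}\in[0,1]$. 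Each $\rket{(ij)_s}_{\sys{AE}}$ is pure, hence deterministic, so by the characterisation of deterministic transformations $(\T{D}\boxtimes\T{I}_{\sys{E}})\rket{(ij)_s}_{\sys{AE}}\in\StN{BE}$. Pairing the output with $\rbra{e}_{\sys{BE}}$ and using that every $\rket{(mj)_{\tau s}}_{\sys{BE}}$ is normalised gives $\sum_{m,\tau}\lambda^{(i)}_{m,\tau}=1$ for every $i$, which is exactly~\eqref{eq:channel2}.

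The substantial step is $\ref{item:22}\Rightarrow\ref{item:11}$. First I would compute the dilation~\eqref{eq:channel} for a single pure program $\rket{(\sigma_{h,\xi}k)_{s_1}}_{\sys{B''B}}$. Feeding $\rket{(ij)_s}_{\sys{AE}}$, composing it with the program via~\eqref{eq:state_composition} to produce $\tfrac12\sum_{s_3}\rket{((\sigma_{h,\xi}k)_{s_1}(ij)_s)_{s_3}}$, applying the processor~\eqref{eq:atomic_atomic}, and finally discarding $\sys{A'}=\sys{B''A}$ with $\rbra{e}_{\sys{A'}}$ through the steering rule~\eqref{eq:steering}, the two sign branches recombine and the $\tfrac12$ cancels, yielding the single-valued deterministic ``function channel'' $\rket{(ij)_s}_{\sys{AE}}\mapsto\rket{([h(i)\oplus k]\,j)_{\xi(i)s}}_{\sys{BE}}$, i.e.\ $\lambda^{(i)}_{m,\tau}=\delta_{m,h(i)\oplus k}\,\delta_{\tau,\xi(i)}$. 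As the program $\sigma_{h,\xi}$ ranges over all pairs of functions $(h,\xi)$ (the reason $\sys{B''}$ is chosen large enough to index every assignment $i\mapsto(m,\tau)$), these channels realise precisely the extreme points of the polytope of admissible coefficient families $\{\lambda^{(i)}_{m,\tau}\}_i$—a product of simplices whose vertices are the deterministic functions $i\mapsto(m,\tau)$. Since the dilation is linear in $\rket{\Sigma}_{\sys{B'}}$ and $\StN{B'}$ is the convex hull of its pure states, any target family of distributions in~\eqref{eq:channel2} decomposes (Carathéodory) into a convex combination of such functions; taking $\rket{\Sigma}_{\sys{B'}}$ to be the corresponding mixture of the $\rket{(\sigma_{h,\xi}k)_{s_1}}_{\sys{B''B}}$ reproduces $\T{D}$ and lies in $\StN{B'}$, establishing~\eqref{eq:channel}. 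The degenerate case $\sys{B}=\sys{I}$ reduces, via the label conventions stated at the opening of this appendix, to the sole deterministic effect $\rbra{e}_{\sys{A}}$ with its analogous trivial dilation.

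The main obstacle I anticipate is the bookkeeping in the single-pure-program computation: correctly tracking the three signs $s_1,s_2,s_3$ and the associator rebracketings~\eqref{eq:quadripartite} as the state passes through the processor, and verifying that the factor $\tfrac12$ produced by state composition is exactly cancelled by the two recombining $s_3$-branches after $\rbra{e}_{\sys{A'}}$ is applied, so that the resulting map is genuinely single-valued and normalised. Once this identity is in hand, the remaining convex-decomposition argument is routine classical convex geometry.
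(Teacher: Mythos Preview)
Your proof is correct and follows the same cycle $\ref{item:11}\Rightarrow\ref{item:33}\Rightarrow\ref{item:22}\Rightarrow\ref{item:11}$ as the paper, with the first two implications handled identically (composition of deterministic maps, then Lemma~\ref{lem:char_transf} plus normalisation). Your single-pure-program computation is also right: after the processor~\eqref{eq:atomic_atomic} and the discard $\rbra{e}_{\sys{A'}}$, the output $\rket{([h(i)\oplus k]j)_{\xi(i)s}}_{\sys{BE}}$ is independent of both $s_1$ and $s_3$, so the two $s_3$-branches coincide and the factor $\tfrac12$ from~\eqref{eq:state_composition} cancels exactly as you anticipated.

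The only substantive difference is in $\ref{item:22}\Rightarrow\ref{item:11}$. You argue abstractly: the function channels realise the vertices of the product of simplices $\prod_i\Delta\bigl(\{1,\dots,\D{B}\}\times\{+,-\}\bigr)$, and any point of that polytope is a convex combination of vertices, so the required $\rket{\Sigma}_{\sys{B'}}$ exists. The paper instead gives an explicit greedy ``peeling'' construction: at each step it picks the minimum nonvanishing $\lambda^{(i)}_{m,\tau}$ across all $i$, builds a function $(h,\xi)$ passing through that entry (and through some nonvanishing entry for each other $i$), subtracts that minimum from the selected entries, and iterates. Since every distribution has the same total mass $1$, the residuals remain nonnegative and still sum to the same value across $i$, so the procedure terminates after finitely many steps with $\sum_n\mu_{h_n,\xi_n}=1$. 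Your convex-geometry argument is cleaner and perfectly sufficient here; the paper's algorithm buys an explicit, reusable decomposition of $\T{D}$ into a concrete finite mixture, which is then exploited verbatim in the subsequent Lemma~\ref{lem:realisation_transf} and in the proof of Proposition~\ref{prop:dilation_tests}.
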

	\begin{proof}
	\textbf{$\ref{item:11}\Rightarrow\ref{item:33}\Rightarrow\ref{item:22}$.} The chain of implications holds by definition and by the characterisation of deterministic transformations given in Subsec.~\ref{subsec:probabilistic}.
	
	\textbf{$\ref{item:22}\Rightarrow\ref{item:11}$.} Suppose that Eq.~\eqref{eq:channel2} holds. We provide below an explicit construction of a family of states $\rket{\Sigma}_{\sys{B'}}\in\StN{B'}$ such that Eq.~\eqref{eq:channel} holds. First, let $\rket{\Sigma}_{\sys{B'}}\in\StN{B'}$ be of the form $\sum_{(h,\xi)\in J}\mu_{h,\xi}\rket{\sigma_{h,\xi}}_{\sys{B''}}\rket{0}_{\sys{B}}$. We provide a step-by-step construction for suitable families of sets $J,\{\mu_{h,\xi}\}_{(h,\xi)\in J}$ such that Eq.~\eqref{eq:channel} reads as Eq.~\eqref{eq:channel2}. Start by taking the minimum nonvanishing value $\lambda_0\coloneqq\lambda_{m_0,\tau_0}^{(i_0)}$ over all the probability distributions $\{\lambda_{m,\tau}^{(i)}\}_{(m,\tau)\in I}$ for $i\in\{1,2,\ldots,\D{A}\}$. In the case where the minimum is not unique, just arbitrarily pick one of them. Define $h_0,\xi_0$ as those (families of) functions such that $h_0(i_0)=m_0,\xi_0(i_0)=\tau_0$ and  $h_0(i)=m_0^{(i)},\xi_0(i)=\tau_0^{(i)}$, where $m_0^{(i)},\tau_0^{(i)}$ are any chosen values such that $\lambda_{m_0^{(i)},\tau_0^{(i)}}^{(i)}$ is nonvanishing for every $i$. In the collections $\{\lambda_{m,\tau}^{(i)}\}_{(m,\tau)\in I}$ for $i\in\{1,2,\ldots,\D{A}\}$, reset the values of those coefficients $\lambda_{\tilde{m},\tilde{\tau}}^{(i)}$, where $(\tilde{m},\tilde{\tau})$ are in the image of $(h_0,\xi_0)$, in the following way: $\lambda_{\tilde{m},\tilde{\tau}}^{(i)}\mapsto \lambda_{\tilde{m},\tilde{\tau}}^{(i)}-\lambda_0$. By construction, this operation does not produce any negative value. Finally, set $\mu_{h_0,\xi_0}\coloneqq\lambda_0$.
Iterate the previous zeroth step. One realises that the iteration of the above procedure has a finite number of steps---say $N+1$---and eventually produces some families of sets $\{(h_n,\xi_n)\}_{n=0}^{N}$ and $\{\mu_{h_n,\xi_n}\}_{n=0}^{N}$. Choose now one arbitrary set $\{(h_n^*,\xi_n^*)\}_{n=0}^{N}$ and pose $J=\{(h_n^*,\xi_n^*)\}_{n=0}^{N}$. Using Eq.~\eqref{eq:arbitrary_transformation} one verifies that, by construction, Eq.~\eqref{eq:channel} reads as Eq.~\eqref{eq:channel2}, namely, $\ref{item:22}\Rightarrow\ref{item:11}$ holds.
\end{proof}

\begin{corollary}[Deterministic transformations]\label{cor:channels}
	Let $\T{D}\in\Transf{A}{B}$.
	Then $\T{D}\in\TransfN{A}{B}$ if and only if it admits of a reversible dilation for $\T{D}$ of the form~\eqref{eq:dilation} with $\rket\Sigma_\sys{B'}\in\StN{B'}$ and $\rbra{H}_{\sys{A'}}=\rbra{e}_{\sys{A'}}$.
\end{corollary}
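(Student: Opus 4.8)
The plan is to recognize this corollary as a repackaging of Lemma~\ref{lem:realisation_channels} (proved just above, for $\sys{I}\neq\sys{A}$) combined with the closure of deterministic transformations under composition; no genuinely new computation is needed. I would prove the two implications separately, routing the ``only if'' direction through the lemma and the ``if'' direction through a composition argument, precisely so that the latter can accommodate an arbitrary reversible map rather than the specific universal processor.

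For the ``only if'' direction I would assume $\T{D}\in\TransfN{A}{B}$, which is exactly condition~\ref{item:33} of Lemma~\ref{lem:realisation_channels}. The equivalences established there give, via the chain $\ref{item:33}\Rightarrow\ref{item:22}\Rightarrow\ref{item:11}$, a deterministic state $\rket{\Sigma}_{\sys{B'}}\in\StN{B'}$ realising $\T{D}$ through the dilation~\eqref{eq:channel}, in which the reversible transformation is the universal processor $\tilde{\T{R}}_{\sys{A,B}}$ and the output effect is the deterministic effect $\rbra{e}_{\sys{A'}}$. Since~\eqref{eq:channel} is nothing but the general template~\eqref{eq:dilation} specialised to $\T{R}=\tilde{\T{R}}_{\sys{A,B}}$, $\rket{\Sigma}_{\sys{B'}}\in\StN{B'}$ and $\rbra{H}_{\sys{A'}}=\rbra{e}_{\sys{A'}}$, this already exhibits a reversible dilation of the required form.

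For the ``if'' direction I would suppose instead that $\T{D}$ admits \emph{some} dilation of the form~\eqref{eq:dilation} with $\rket{\Sigma}_{\sys{B'}}\in\StN{B'}$ and $\rbra{H}_{\sys{A'}}=\rbra{e}_{\sys{A'}}$, and write it explicitly as the sequential composition
\begin{align*}
\T{D}=(\rbra{e}_{\sys{A'}}\boxtimes\T{I}_{\sys{B}})\circ\T{R}\circ(\rket{\Sigma}_{\sys{B'}}\boxtimes\T{I}_{\sys{A}}).
\end{align*}
Each factor is deterministic: $\rket{\Sigma}_{\sys{B'}}\boxtimes\T{I}_{\sys{A}}$ is the parallel composition of a deterministic state with an identity; $\T{R}$ is reversible, hence deterministic by Proposition~\ref{prop:pure_to_pure}; and $\rbra{e}_{\sys{A'}}\boxtimes\T{I}_{\sys{B}}$ is the parallel composition of the unique deterministic effect with an identity. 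Because being deterministic is preserved under both sequential and parallel composition (see Subsec.~\ref{subsec:probabilistic}), the composite $\T{D}$ lies in $\TransfN{A}{B}$, which closes the equivalence.

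I do not expect a real obstacle here, since all the substantive content---manufacturing the concrete ancilla $\rket{\Sigma}_{\sys{B'}}$ and the processor $\tilde{\T{R}}_{\sys{A,B}}$ out of the probability-distribution data $\{\lambda_{m,\tau}^{(i)}\}$---was already discharged in the proof of Lemma~\ref{lem:realisation_channels}. The one point demanding care is that the ``if'' direction must hold for an \emph{arbitrary} reversible $\T{R}$, not merely for $\tilde{\T{R}}_{\sys{A,B}}$; this is exactly why I establish it through the composition-closure argument, which is insensitive to the choice of $\T{R}$, rather than by appealing to the lemma directly. (The degenerate case $\sys{A}=\sys{I}$, where $\T{D}$ is a state, is trivial: the identity dilation $\T{R}=\T{I}$ settles the ``only if'' part, and the ``if'' part is again the composition argument.)
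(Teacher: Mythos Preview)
Your proposal is correct and matches the paper's approach: the paper states this corollary with no explicit proof, treating it as immediate from Lemma~\ref{lem:realisation_channels}, and you have supplied precisely the natural filling-in---the ``only if'' direction via $\ref{item:33}\Rightarrow\ref{item:11}$ of that lemma, and the ``if'' direction via closure of determinism under composition (using Proposition~\ref{prop:pure_to_pure} for $\T{R}$). Your care in noting that the converse must hold for an arbitrary reversible $\T{R}$, and in handling the $\sys{A}=\sys{I}$ case separately, is appropriate and not addressed explicitly in the paper.
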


\begin{lemma}[Realisation of arbitrary transformations]\label{lem:realisation_transf}
	Let $\sys{I}\neq\sys{A}\in\Sys{BCT}$, $\sys{B}\in\Sys{BCT}$, and $\T{T}\in\TransfR{A}{B}$
	Then, the following are equivalent:
	\begin{enumerate}[label=(\alph*)]
		\item\label{item:222} For every $i\in\{1,2,\ldots,\D{A}\}$ there exists a set $\{\gamma_{m,\tau}^{(i)}\}_{(m,\tau)\in I^{(i)}}$---with $I^{(i)}\subseteq\{1,2,\ldots,\D{B}\}\times\{+,-\}$, $\gamma_{m,\tau}^{(i)}>0$ for all $(m,\tau)\in I^{(i)}$, and $\sum_{(m',\tau')\in I^{(i)}}\gamma_{m',\tau'}^{(i)}\leq 1$---such the following holds that for all $\rket{(ij)_s}_{\sys{AE}}\in\PurSt{AE}$:
		\begin{align}\label{eq:arbitr_transf2}
		\begin{aligned}
		\Qcircuit @C=1em @R=1.3em
		{
			\multiprepareC{1}{(ij)_{s}}&\s{A}\qw&\gate{\T{T}}&\s{B}\qw&\\
			\pureghost{(ij)_{s}}&\qw&\s{E}\qw&\qw&
		}
		\end{aligned}=
		\sum_{(m,\tau)\in I^{(i)}}\gamma_{m,\tau}^{(i)}
		\begin{aligned}
		\Qcircuit @C=1em @R=1.3em
		{
			\multiprepareC{1}{(mj)_{\tau s}}&\s{B}\qw&\\
			\pureghost{(mj)_{\tau s}}&\s{E}\qw&
		}
		\end{aligned}.
		\end{align}
		\item\label{item:333} $\T{T}\in\Transf{A}{B}$;
		\item\label{item:111}  There exist a deterministic state $\rket{\Sigma}_{\sys{B'}}\in\StN{B'}$ such that, for every $\rket{(ij)_s}_{\sys{AE}}\in\PurSt{AE}$, $\T{T}\boxtimes\T{I}_{\sys{E}}\rket{(ij)_s}_{\sys{AE}}$ is in the refinement set of some $\T{D}\boxtimes\T{I}_{\sys{E}}\rket{(ij)_s}_{\sys{AE}}$---for $\T{D}\in\TransfN{A}{B}$ dilated as in Eq.~\eqref{eq:channel}---and an effect $\rbra{a}_{\sys{A'}}\in\Eff{A'}$, such that the following holds:
		\begin{align}\label{eq:arbitr_transf}
		\begin{aligned}
		\Qcircuit @C=1em @R=1.3em
		{
			&\s{A}&\gate{\T{T}}&\s{B}\qw&
		}
		\end{aligned}
		=
		\begin{aligned}
		\Qcircuit @C=1em @R=1.3em
		{
			&\prepareC{\Sigma}&\s{B'}\qw&\multigate{1}{\tilde{\T{R}}_{\sys{A,B}}}&\s{A'}\qw&\measureD{a}&
			\\
			&\s{A}\qw&\qw&\ghost{\tilde{\T{R}}_{\sys{A,B}}}&\qw&\s{B}\qw&\qw&
		}
		\end{aligned};
		\end{align}
	\end{enumerate}
\end{lemma}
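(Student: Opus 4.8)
The plan is to prove the cycle $\ref{item:111}\Rightarrow\ref{item:333}\Rightarrow\ref{item:222}\Rightarrow\ref{item:111}$, mirroring the structure of Lemma~\ref{lem:realisation_channels} but now allowing subnormalised weights and a non-deterministic effect in place of $\rbra{e}_{\sys{A'}}$. The implication $\ref{item:111}\Rightarrow\ref{item:333}$ is immediate: the right-hand side of Eq.~\eqref{eq:arbitr_transf} is an instance of the reversible dilation~\eqref{eq:dilation} with $\T{R}=\tilde{\T{R}}_{\sys{A,B}}$, $\rket{\Sigma}_{\sys{B'}}\in\StN{B'}$, and $\rbra{H}_{\sys{A'}}=\rbra{a}_{\sys{A'}}\in\Eff{A'}$, so Postulate~\ref{postulate:dilation} gives $\T{T}\in\Transf{A}{B}$.

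For $\ref{item:333}\Rightarrow\ref{item:222}$ I would invoke Lemma~\ref{lem:char_transf}: every $\T{T}\in\Transf{A}{B}$ is a conical combination of atomic maps of the form~\eqref{eq:atomic}, so its action on $\PurSt{AE}$ has the form~\eqref{eq:atomic_sum} with non-negative coefficients $\lambda_{m,\tau}^{(i)}$ obeying $\sum_{m,\tau}\lambda_{m,\tau}^{(i)}\in[0,1]$. Taking $I^{(i)}$ to be the support $\{(m,\tau):\lambda_{m,\tau}^{(i)}>0\}$ and setting $\gamma_{m,\tau}^{(i)}\coloneqq\lambda_{m,\tau}^{(i)}$ there yields Eq.~\eqref{eq:arbitr_transf2} with the required strict positivity and the bound $\sum_{(m',\tau')\in I^{(i)}}\gamma_{m',\tau'}^{(i)}\leq 1$.

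The hard part is $\ref{item:222}\Rightarrow\ref{item:111}$. Here I would first complete each subnormalised family $\{\gamma_{m,\tau}^{(i)}\}_{(m,\tau)\in I^{(i)}}$ to a genuine probability distribution $\{\lambda_{m,\tau}^{(i)}\}_{(m,\tau)\in I}$ (assigning the deficit $1-\sum_{(m',\tau')\in I^{(i)}}\gamma_{m',\tau'}^{(i)}$ to designated components), so that $\gamma_{m,\tau}^{(i)}\leq\lambda_{m,\tau}^{(i)}$ for all $i,m,\tau$. By the implication $\ref{item:22}\Rightarrow\ref{item:11}$ of Lemma~\ref{lem:realisation_channels}, the $\lambda$'s define a deterministic channel $\T{D}\in\TransfN{A}{B}$ realised by the dilation~\eqref{eq:channel}, with an explicit deterministic state $\rket{\Sigma}_{\sys{B'}}=\sum_{(h,\xi)\in J}\mu_{h,\xi}\rket{\sigma_{h,\xi}}_{\sys{B''}}\rket{0}_{\sys{B}}$ and deterministic effect $\rbra{e}_{\sys{A'}}$, where $\lambda_{m,\tau}^{(i)}=\sum_{(h,\xi):h(i)=m,\xi(i)=\tau}\mu_{h,\xi}$. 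Since $\gamma\leq\lambda$, the state $(\T{T}\boxtimes\T{I}_{\sys{E}})\rket{(ij)_s}_{\sys{AE}}$ lies in the refinement set of $(\T{D}\boxtimes\T{I}_{\sys{E}})\rket{(ij)_s}_{\sys{AE}}$, as required in~\ref{item:111}. It then remains to replace $\rbra{e}_{\sys{A'}}$ by a subeffect $\rbra{a}_{\sys{A'}}$ that rescales $\lambda$ back down to $\gamma$.

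I would exploit that, after applying $\tilde{\T{R}}_{\sys{A,B}}$ as defined in Eq.~\eqref{eq:atomic_atomic}, the output register $\sys{A'}=\sys{B''}\sys{A}$ retains the decomposition label $(h,\xi)$ on $\sys{B''}$ and the input index $i$ on $\sys{A}$; hence an effect $\rbra{a}_{\sys{A'}}$ discriminating $(h,\xi)$ jointly with $i$ can carry coefficients $a_{(h,\xi),i}$ depending on both. Choosing $a_{(h,\xi),i}\in[0,1]$ so that $\gamma_{m,\tau}^{(i)}=\sum_{(h,\xi):h(i)=m,\xi(i)=\tau}a_{(h,\xi),i}\,\mu_{h,\xi}$---possible for each fixed $i$ independently since the attainable range of that sum is exactly $[0,\lambda_{m,\tau}^{(i)}]$ and $\gamma_{m,\tau}^{(i)}\leq\lambda_{m,\tau}^{(i)}$---yields, via Eqs.~\eqref{eq:dilation_action} and~\eqref{eq:arbitrary_transformation}, precisely Eq.~\eqref{eq:arbitr_transf2}. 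The main subtlety I expect to manage carefully is verifying that this $\rbra{a}_{\sys{A'}}$ is a bona fide effect of BCT---that it is a conic combination of the pure-state-discriminating functionals mapping states to subnormalised probabilities, in accordance with the characterisation of effects in Subsec.~\ref{subsec:characterisation}---and that a single such effect reproduces the correct $i$-dependent weights simultaneously for all input indices $i$, which is what the joint dependence on $\sys{B''}$ and $\sys{A}$ makes possible.
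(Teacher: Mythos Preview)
Your proposal is correct and follows essentially the same route as the paper: the easy chain $\ref{item:111}\Rightarrow\ref{item:333}\Rightarrow\ref{item:222}$ via Postulate~\ref{postulate:dilation} and Lemma~\ref{lem:char_transf}, and then $\ref{item:222}\Rightarrow\ref{item:111}$ by completing the subnormalised weights to a probability distribution, invoking Lemma~\ref{lem:realisation_channels} to obtain $\T{D}$ and $\rket{\Sigma}_{\sys{B'}}=\sum_{(h,\xi)\in J}\mu_{h,\xi}\rket{\sigma_{h,\xi}}_{\sys{B''}}\rket{0}_{\sys{B}}$, and finally constructing the effect $\rbra{a}_{\sys{A'}}=\sum_{i}\sum_{(h,\xi)}a_{(h,\xi),i}\,\rbra{\sigma_{h,\xi}}_{\sys{B''}}\rbra{i}_{\sys{A}}$ with coefficients in $[0,1]$ chosen to rescale $\lambda$ down to $\gamma$. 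The paper's coefficients $\beta_{h,\xi}^{(i)}$ are exactly your $a_{(h,\xi),i}$, and your justification for their existence (the attainable range argument) is a clean way of expressing what the paper leaves to ``direct inspection''.
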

\begin{proof}
	\textbf{$\ref{item:111}\Rightarrow\ref{item:333}\Rightarrow\ref{item:222}$.} The chain of implications holds by Postulate~\ref{postulate:dilation} and Lemma~\ref{lem:char_transf}.
	
	\textbf{$\ref{item:222}\Rightarrow\ref{item:111}$.} For every map $\T{T}\in\TransfR{A}{A}$ defined as in Eq.~\eqref{eq:arbitr_transf2}, by Lemma~\ref{lem:realisation_channels} there exists a deterministic transformation $\T{D}\in\TransfN{A}{B}$ and a deterministic state $\rket{\Sigma}_{\sys{B'}}$ such that $\T{T}\boxtimes\T{I}_{\sys{E}}\rket{(ij)_s}_{\sys{AE}}$ is in the refinement set of $\T{D}\boxtimes\T{I}_{\sys{E}}\rket{(ij)_s}_{\sys{AE}}$ for every $\rket{(ij)_s}_{\sys{AE}}\in\PurSt{AE}$. It can be verified by direct computation that such a $\rket{\Sigma}_{\sys{B'}}$ is of the form $\sum_{(h,\xi)\in J}\mu_{h,\xi}\rket{\sigma_{h,\xi}}_{\sys{B''}}\rket{0}_{\sys{B}}$---and this has been indeed shown in the proof of Lemma~\ref{lem:realisation_channels}. Now consider:
	\begin{align}\label{eq:expr}
	\sum_{(h,\xi)\in J}\mu_{h,\xi}\begin{aligned}
	\Qcircuit @C=1em @R=1.3em
	{
		&\prepareC{\sigma_{h,\xi}}&\s{B''}\qw&\multigate{2}{\tilde{\T{R}}_{\sys{A,B}}}&\s{\sys{B''}}\qw& \\
		&\prepareC{0}&\s{B}\qw&\ghost{\tilde{\T{R}}_{\sys{A,B}}}&\s{\sys{A}}\qw&\\
		&\multiprepareC{1}{(ij)_s}&\s{A}\qw&\ghost{\tilde{\T{R}}_{\sys{A,B}}}&\s{B}\qw&\\
		&\pureghost{(ij)_s}&\qw&\s{E}\qw&\qw&
	}
	\end{aligned}.
	\end{align}
	By construction and by direct inspection of expression~\eqref{eq:expr} via Eq.~\eqref{eq:atomic_atomic}, for all $i$ there exists a suitably chosen set of coefficients $\{\beta_{h,\xi}^{(i)}\}_{(h,\xi)\in J^{(i)}}$---with $J^{(i)}\subseteq J$, $\beta_{h,\xi}^{(i)}>0$, and $\sum_{(h',\xi')\in J^{(i)}}\beta_{h',\xi'}^{(i)}\leq 1$---such that one attains for every $\tilde{i}$:
	\begin{align*}
	\begin{split}
	&\sum_{(\tilde{h},\tilde{\xi})\in J^{(i)}}\beta_{h,\xi}^{(i)}
	\begin{aligned}
	\Qcircuit @C=1em @R=1.3em
	{
		&\multiprepareC{1}{\Sigma}&\s{B''}\qw&\multigate{2}{\tilde{\T{R}}_{\sys{A,B}}}&\s{\sys{B''}}\qw&\measureD{\sigma_{\tilde{h},\tilde{\xi}}}& \\
		&\pureghost{\Sigma}&\s{B}\qw&\ghost{\tilde{\T{R}}_{\sys{A,B}}}&\s{\sys{A}}\qw&\measureD{\tilde{i}}& \\
		&\multiprepareC{1}{(ij)_s}&\s{A}\qw&\ghost{\tilde{\T{R}}_{\sys{A,B}}}&\qw&\s{B}\qw&\qw& \\
		&\pureghost{(ij)_s}&\qw&\qw&\s{E}\qw&\qw&\qw
	}
	\end{aligned}\!\!\!\!=\\[2.5ex]
	=&\delta_{i\tilde{i}}\sum_{(m,\tau)\in I^{(i)}}\gamma_{m,\tau}^{(i)}
	\begin{aligned}
	\Qcircuit @C=1em @R=1.3em
	{
		\multiprepareC{1}{(mj)_{\tau s}}&\s{B}\qw&\\
		\pureghost{(mj)_{\tau s}}&\s{E}\qw&
	}
	\end{aligned}.
	\end{split}
	\end{align*}
	Define now
	$\rbra{a}_{\sys{A'}}\coloneqq\sum_{i=1}^{\D{A}}\sum_{h,\xi\in J^{(i)}}\beta_{h,\xi}^{(i)}\rbra{\sigma_{h,\xi}}_{\sys{B''}}\rbra{i}_{\sys{A}}\in\EffR{A'}$. Then one obtains, for all $\rket{(ij)_s}_{\sys{AE}}\in\PurSt{AE}$:
	\begin{align*}
	\begin{aligned}
	\Qcircuit @C=1em @R=1.3em
	{
		\prepareC{\Sigma}&\s{B'}\qw&\multigate{1}{\tilde{\T{R}}_{\sys{A,B}}}&\s{\sys{A'}}\qw&\measureD{a}& \\
		\multiprepareC{1}{(ij)_s}&\s{A}\qw&\ghost{\tilde{\T{R}}_{\sys{A,B}}}&\qw&\s{B}\qw&\qw& \\
		\pureghost{(ij)_s}&\qw&\qw&\s{E}\qw&\qw&\qw
	}
	\end{aligned}
	\!\!\!=\!\!\begin{aligned}
	\Qcircuit @C=1em @R=1.3em
	{
		&\multiprepareC{1}{(ij)_{s}}&\s{A}\qw&\gate{\T{T}}&\s{B}\qw&\\
		&\pureghost{(ij)_{s}}&\qw&\s{E}\qw&\qw&
	}
	\end{aligned}\!\!\!.
	\end{align*}
	Finally, by construction one observes that $\rbra{a}_{\sys{A'}}\in\Eff{A'}$ (see classification of BCT's effects in Subsec.~\ref{subsec:characterisation})---i.e.~Eq.~\eqref{eq:arbitr_transf} holds---and this concludes the proof.
\end{proof}
We stress that the statement of condition~\ref{item:111} requires a unique deterministic state $\rket\Sigma_\sys{B'}$ for every transformation $\T T$ such that $\T{T}\boxtimes\T{I}_{\sys{E}}\rket{(ij)_s}_{\sys{AE}}$ 
refine $\T{D}\boxtimes\T{I}_{\sys{E}}\rket{(ij)_s}_{\sys{AE}}$ for a fixed deterministic $\tD\in\TransfN{A}{B}$. This is true, in particular, if $\T T$ refines $\T D$. The latter property will be crucially exploited in the proof of Proposition~\ref{prop:dilation_tests} (see Appendix~\ref{app:BCT_no-restriction}).
\begin{corollary}\label{cor:class_transf}
	Let $\sys{I}\neq\sys{A}\in\Sys{BCT}$ and $\sys{B}\in\Sys{BCT}$. Then every conical combination of elements of the form~\eqref{eq:atomic} that maps states to states is a transformation of BCT.
\end{corollary}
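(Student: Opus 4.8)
The plan is to reduce the statement to the already-established equivalence \ref{item:222}$\Leftrightarrow$\ref{item:333} of Lemma~\ref{lem:realisation_transf}. I would start from an arbitrary conical combination $\T{F}=\sum_{n}c_{n}\T{A}_{n}$ with $c_{n}\geq 0$, where each $\T{A}_{n}$ acts as in Eq.~\eqref{eq:atomic} with parameters $\lambda_{n}\in[0,1]$, $i_{0}^{(n)}$, $l_{n}$ and $\tau_{n}$. Composing with $\T{I}_{\sys{E}}$ and evaluating on a generic pure state $\rket{(ij)_{s}}_{\sys{AE}}$, the Kronecker deltas $\delta_{i i_{0}^{(n)}}$ collapse the sum, so that $\T{F}\boxtimes\T{I}_{\sys{E}}$ sends $\rket{(ij)_{s}}_{\sys{AE}}$ to $\sum_{(m,\tau)}\gamma^{(i)}_{m,\tau}\rket{(mj)_{\tau s}}_{\sys{BE}}$, where $\gamma^{(i)}_{m,\tau}\coloneqq\sum_{n:\,i_{0}^{(n)}=i,\,l_{n}=m,\,\tau_{n}=\tau}c_{n}\lambda_{n}\geq 0$. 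Restricting $I^{(i)}$ to those $(m,\tau)$ with $\gamma^{(i)}_{m,\tau}>0$, this is exactly the form~\eqref{eq:arbitr_transf2} appearing in condition~\ref{item:222}, except that the sub-normalisation $\sum_{(m,\tau)\in I^{(i)}}\gamma^{(i)}_{m,\tau}\leq 1$ still has to be verified.

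Establishing that bound is where the hypothesis that $\T{F}$ \emph{maps states to states} enters, and it is the only genuine step. Since BCT is classical, hence simplicial and convex (Postulate~\ref{postulate:classical}), for every system the state set is the simplex spanned by $\rket{\varepsilon}$ and the pure states; in particular a non-negative combination of the distinct pure states $\{\rket{(mj)_{\tau s}}_{\sys{BE}}\}_{(m,\tau)}$ lies in $\St{BE}$ if and only if its total weight does not exceed $1$. Moreover BCT is causal (every simplicial theory is, see Subsec.~\ref{subsec:simplicial}), so there is a unique deterministic effect $\rbra{e}_{\sys{BE}}$ with $\rbraket{e}{(mj)_{\tau s}}_{\sys{BE}}=1$. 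Applying the assumption that $\T{F}\boxtimes\T{I}_{\sys{E}}$ maps $\St{AE}$ into $\St{BE}$ to the deterministic input $\rket{(ij)_{s}}_{\sys{AE}}$ and pairing the output with $\rbra{e}_{\sys{BE}}$ gives $\sum_{(m,\tau)\in I^{(i)}}\gamma^{(i)}_{m,\tau}=\rbra{e}_{\sys{BE}}(\T{F}\boxtimes\T{I}_{\sys{E}})\rket{(ij)_{s}}_{\sys{AE}}\leq 1$ for every $i$, which is precisely the missing inequality.

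With both requirements of condition~\ref{item:222} in hand, the implication \ref{item:222}$\Rightarrow$\ref{item:333} of Lemma~\ref{lem:realisation_transf} yields $\T{F}\in\Transf{A}{B}$, proving the corollary for $\sys{B}\neq\sys{I}$. The case $\sys{B}=\sys{I}$ is handled identically after replacing Eq.~\eqref{eq:atomic} by its effect counterpart~\eqref{eq:atomiceff}: the collapsed action then reads $\gamma^{(i)}\rket{j}_{\sys{E}}$, the relevant normalisation $\gamma^{(i)}\leq 1$ is again read off by the unique deterministic effect of $\sys{E}$, and one invokes the effect-valued instance of the realisation argument. I expect no serious obstacle; the only point requiring care is the bookkeeping that each coefficient $\gamma^{(i)}_{m,\tau}$ depends on the input label $i$ alone---and not on $j$, $s$, or the ancilla $\sys{E}$---so that the collapsed action genuinely matches the $i$-indexed family of sub-distributions demanded by condition~\ref{item:222}.
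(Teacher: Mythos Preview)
Your proposal is correct and follows essentially the same approach as the paper: rewrite the conical combination in the form~\eqref{eq:arbitr_transf2} and then invoke the implication \ref{item:222}$\Rightarrow$\ref{item:333} of Lemma~\ref{lem:realisation_transf}. You simply spell out in detail the step the paper leaves implicit, namely that the ``maps states to states'' hypothesis is precisely what guarantees the sub-normalisation $\sum_{(m,\tau)}\gamma^{(i)}_{m,\tau}\leq 1$ required by condition~\ref{item:222}.
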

\begin{proof}
	This is a straightforward consequence of the fact that every conical combination of generalised transformations of the form~\eqref{eq:atomic} can be expressed as in Eq.~\eqref{eq:arbitr_transf2}, combined with implication $\ref{item:222}\Rightarrow\ref{item:333}$ in Lemma~\ref{lem:realisation_transf}.
\end{proof}
Lemma~\ref{lem:char_transf} and Corollary~\ref{cor:class_transf} provide a classification of BCT's transformations.
\begin{lemma}[Atomic transformations]\label{lem:atomic}
	Let $\sys{I}\neq\sys{A}\in\Sys{BCT}$ and $\sys{B}\in\Sys{BCT}$. A map $\T{A}\in\TransfR{A}{B}$ is an atomic transformation if and only if $\T{A}\boxtimes\T{I}_{\sys{E}}$ is of the form~\eqref{eq:atomic} for every $\sys{E}\in\Sys{BCT}$.
\end{lemma}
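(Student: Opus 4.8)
The plan is to read the statement directly off the classification of BCT's transformations already obtained in Lemma~\ref{lem:char_transf} and Lemma~\ref{lem:realisation_transf}, interpreting atomicity as the property of generating an extreme ray of the cone $\TransfC{A}{B}$. By Lemma~\ref{lem:char_transf} together with Corollary~\ref{cor:class_transf}, a map $\T{A}\in\TransfR{A}{B}$ is a transformation if and only if, for every $\sys{E}$, its action on the pure states $\rket{(ij)_s}_{\sys{AE}}$ has the form~\eqref{eq:arbitr_transf2} with non-negative coefficients $\gamma^{(i)}_{m,\tau}$, one independent family for each input index $i\in\{1,\dots,\D{A}\}$, subject only to $\sum_{(m,\tau)}\gamma^{(i)}_{m,\tau}\leq 1$. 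Since these coefficients may be chosen independently, the cone generated by $\Transf{A}{B}$ is the orthant parametrised by $\{\gamma^{(i)}_{m,\tau}\geq 0\}$, and the form~\eqref{eq:atomic} is exactly the one-parameter family supported on a single coordinate $(i_0,l,\tau)$.

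First I would prove the easy implication. Assuming $\T{A}$ is of the form~\eqref{eq:atomic} with parameters $(\lambda,i_0,l,\tau)$, suppose $\T{A}=\T{T}_1+\T{T}_2$ with $\T{T}_1,\T{T}_2\in\Transf{A}{B}$. Writing both summands via~\eqref{eq:arbitr_transf2} and using that in a simplicial theory every state decomposes uniquely into non-null extremal states—so that a vanishing conical combination of distinct pure states forces all its coefficients to vanish—I would conclude that the coefficients of $\T{T}_1$ and $\T{T}_2$ also vanish outside the single coordinate $(i_0,l,\tau)$. Hence $\T{T}_1=(\lambda_1/\lambda)\T{A}$ and $\T{T}_2=(\lambda_2/\lambda)\T{A}$ with $\lambda_1+\lambda_2=\lambda$ (the degenerate case $\lambda=0$ forces $\T{A}=\varepsilon_{\sys{A}\to\sys{B}}=\T{T}_1=\T{T}_2$). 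In either case $\T{T}_1\propto\T{T}_2$, so $\T{A}$ is atomic in the sense of~\eqref{eq:atomicity}.

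For the converse I would argue by contraposition. If $\T{A}\boxtimes\T{I}_{\sys{E}}$ is not of the form~\eqref{eq:atomic}, then in its expansion~\eqref{eq:arbitr_transf2} the support $S=\{(i,m,\tau):\gamma^{(i)}_{m,\tau}>0\}$ contains at least two distinct coordinates. Partitioning $S$ into two nonempty pieces and retaining the corresponding coefficients, I obtain two assignments which still satisfy $\sum_{(m,\tau)}\gamma^{(i)}_{m,\tau}\leq 1$ for every $i$, hence by Corollary~\ref{cor:class_transf} define genuine transformations $\T{A}_1,\T{A}_2\in\Transf{A}{B}$ with $\T{A}=\T{A}_1+\T{A}_2$; having disjoint nonempty supports they are linearly independent, hence not proportional, contradicting atomicity. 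Therefore an atomic $\T{A}$ must be supported on a single coordinate, i.e.\ of the form~\eqref{eq:atomic}, for every $\sys{E}$ (the coefficients being $\sys{E}$-independent, cf.\ Remark~\ref{rem:transformation_equiv_class}). The main obstacle, and the point demanding the most care, is precisely the first step: justifying that the families $\{\gamma^{(i)}_{m,\tau}\}$ are genuinely free and independent across the input index $i$, so that the cone is really an orthant and splitting its support again yields valid transformations—which is exactly what Lemmas~\ref{lem:char_transf} and~\ref{lem:realisation_transf} and Corollary~\ref{cor:class_transf} provide.
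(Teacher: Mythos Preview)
Your proof is correct and follows essentially the same approach as the paper's: both directions hinge on the decomposition of an arbitrary transformation into the elementary maps of the form~\eqref{eq:atomic} (via Lemma~\ref{lem:char_transf} and Corollary~\ref{cor:class_transf}), together with the unique decomposition of states in a simplicial theory to force coefficients to match. The only cosmetic difference is that, for the converse, the paper singles out one summand $\T{A}_{\tilde n}$ and compares it to $\T{A}-\T{A}_{\tilde n}$, whereas you partition the support into two arbitrary nonempty pieces; both yield a nontrivial splitting into genuine transformations and hence the same contradiction with atomicity.
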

\begin{proof}
	By Lemma~\ref{lem:realisation_transf}, every map of the form~\eqref{eq:atomic} is an admissible transformation of the theory. First, we show that every transformation of $\T{A}\in\Transf{A}{B}$ such that $\T{A}\boxtimes\T{I}_{\sys{E}}$, for every $\sys{E}\in\Sys{BCT}$, is of the form~\eqref{eq:atomic}, satisfies the definition of atomicity (see Eq.~\eqref{eq:atomicity}). Indeed, we have already proven that the action of an arbitrary transformation $\T{T}\in\Transf{A}{B}$ is given by Eq.~\eqref{eq:arbitrary_transformation}. Let us then pose:
	\begin{align*}
	&\begin{aligned}
	\Qcircuit @C=1em @R=1.3em
	{
		\multiprepareC{1}{(ij)_{s}}&\s{A}\qw&\gate{\T{A}}&\s{B}\qw&\\
		\pureghost{(ij)_{s}}&\qw&\s{E}\qw&\qw&
	}
	\end{aligned}
	=
	\lambda\delta_{ii_0}\begin{aligned}
	\Qcircuit @C=1em @R=1.3em
	{
		\multiprepareC{1}{(lj)_{\tau s}}&\s{B}\qw&
		\\
		\pureghost{(lj)_{\tau s}}&\s{E}\qw&
	}
	\end{aligned}
	=\\[2.5ex]=\ \ &
	\begin{aligned}
	\Qcircuit @C=1em @R=1.3em
	{
		\multiprepareC{1}{(ij)_{s}}&\s{A}\qw&\gate{\T{A}_1+\T{A}_2}&\s{B}\qw&\\
		\pureghost{(ij)_{s}}&\qw&\s{E}\qw&\qw&
	}
	\end{aligned}
	=\\[2.5ex]=\ \ &
	\frac{1}{2}\sum_{n=1,2}\sum_{k,s'}\alpha_{k}^{n}\beta_{\kappa_{n,s'}^{ki}}^{n}
	\begin{aligned}
	\Qcircuit @C=1em @R=1.3em
	{
		\multiprepareC{1}{(\iota_{n,s'}^{ki}j)_{r_{n,s'}^{ki}t_{n,s'}^{ki} ss'}}&\s{B}\qw&\\
		\pureghost{(\iota_{n,s'}^{ki}j)_{r_{n,s'}^{ki}t_{n,s'}^{ki} ss'}}&\s{E}\qw&
	}
	\end{aligned}.
	\end{align*}
	Now, since all the coefficients $\lambda,\alpha_{k}^{n},\beta_{\kappa_{n,s'}^{ki}}^{n}$ are non-negative, it must be for all $\rket{(ij)_s}_\sys{AE}\in\PurSt{AE}$:
	\begin{align*}
	(\T{A}_n\boxtimes\T{I}_{\sys{E}})\rket{(ij)_s}_{\sys{AE}}=\begin{cases}
	0, & i\neq i_0, \\
	\lambda_n\rket{(lj)_{\tau s}}_{BE}, & i=i_0,
	\end{cases}	
	\end{align*}
	for $n=1,2$ and non-negative coefficients $\lambda_n$ such that $\lambda_1+\lambda_2=\lambda$. It follows that $\T{A}_1\propto\T{A}_2$, namely those transformations $\T{A}\in\Transf{A}{B}$, such that $\T{A}\boxtimes\T{I}_{\sys{E}}$ for every $\sys{E}\in\Sys{BCT}$ is of the form~\eqref{eq:atomic}, are atomic. Conversely, let $\T{A}\boxtimes\T{I}_{\sys{E}}$ be a transformation from $\sys{AE}$ to $\sys{BE}$. By Lemma~\ref{lem:char_transf}, we can write, without loss of generality, $\T{A}=\sum_{n\in N}\T{A}_{n}$, where $\T{A}_n\boxtimes\T{I}_{\sys{E}}$ is of the form~\eqref{eq:atomic} for all $\sys{E}\in\Sys{BCT},n\in N$ and $\T{A}_{n_1}\not\propto\T{A}_{n_2}$ for all $n_1\neq n_2$. Accordingly, by Lemma~\ref{lem:realisation_transf}, for any $\tilde{n}\in N$ both $\T{A}_{\tilde{n}}\boxtimes\T{I}_{\sys{E}}$ and $\left(\T{A}-\T{A}_{\tilde{n}}\right)\boxtimes\T{I}_{\sys{E}}$ are transformations of BCT. Now, suppose that $\T{A}$ is atomic, namely, $\T{A}_{\tilde{n}}\propto\left(\T{A}-\T{A}_{\tilde{n}}\right)$. This implies that $\T{A}\propto\T{A}_{\tilde{n}}$, i.e., $\T{A}\boxtimes\T{I}_{\sys{E}}$ is of the form~\eqref{eq:atomic}.
\end{proof}
Proposition~\ref{prop:atomic_conclusive} follows combining Lemmas~\ref{lem:char_transf} and~\ref{lem:atomic}, and Corollaries~\ref{cor:channels} and~\ref{cor:class_transf}.

\section{Operational realisation scheme for arbitrary instruments in BCT [Proof of Proposition~\ref{prop:dilation_tests}]}\label{app:BCT_no-restriction}
In the present appendix, in order to make sense of expressions of the form $\rket{(l_1l_2)_{u}}_{\sys{S}_1\sys{S}_2}$ with $\sys{S}_1=\sys{I}$ or $\sys{S}_2=\sys{I}$, these can be safely replaced by the expressions $\rket{(*l_2)_{+}}_{\sys{S}_2}$ or $\rket{(l_1*)_{+}}_{\sys{S}_1}$, respectively.
\begin{lemma}\label{lem:pseudo_no-restriction}
	Let $\sys{A},\sys{B}\in\Sys{BCT}$ and $\T{T}\in\TransfR{A}{B}$ such that $\T{T}\boxtimes\T{I}_{\sys{E}}$ maps $\St{AE}$ to $\St{BE}$ for all $\sys{E}\in\Sys{BCT}$. Then $\T{T}\in\Transf{A}{B}$.
\end{lemma}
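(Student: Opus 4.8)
The plan is to reduce the statement to the implication $\ref{item:222}\Rightarrow\ref{item:333}$ of Lemma~\ref{lem:realisation_transf} (equivalently, to Corollary~\ref{cor:class_transf}), by showing that the hypothesis forces $\T{T}$ to act on the bipartite pure states exactly in the form~\eqref{eq:arbitr_transf2}. First I would dispose of the degenerate cases. If $\sys{A}=\sys{I}$, then $\T{T}$ is a generalised state of $\sys{B}$ and, taking $\sys{E}=\sys{I}$, the hypothesis reads directly $\T{T}\in\St{B}=\Transf{I}{B}$. If $\sys{B}=\sys{I}$, the argument below runs in the same way with the atomic effect template~\eqref{eq:atomiceff} in place of~\eqref{eq:atomic}, the conclusion following from the effect branch of Proposition~\ref{prop:atomic_conclusive} and the characterisation of BCT's effects in Subsec.~\ref{subsec:characterisation}. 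So I may assume $\sys{I}\neq\sys{A}$ and $\sys{I}\neq\sys{B}$.

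The key structural observation is that $\TransfR{A}{B}=\Span_\mathbb{R}\Transf{A}{B}$ is spanned by the atomic transformations of the form~\eqref{eq:atomic}. Indeed, by Lemma~\ref{lem:char_transf} every physical transformation is a conical combination of such atomics, so every element of the real span---in particular $\T{T}$---is a (not necessarily conical) real linear combination $\T{T}=\sum_a\mu_a\T{A}_a$ of atomics, where $\T{A}_a$ is specified by data $(\lambda_a,i_0^a,l_a,\tau_a)$ as in~\eqref{eq:atomic}. Evaluating on a pure state $\rket{(ij)_s}_{\sys{AE}}$ and collecting terms by the output index $m$ and sign $\tau$, one obtains $(\T{T}\boxtimes\T{I}_\sys{E})\rket{(ij)_s}_{\sys{AE}}=\sum_{(m,\tau)}c^{(i)}_{m,\tau}\rket{(mj)_{\tau s}}_{\sys{BE}}$, where $c^{(i)}_{m,\tau}\coloneqq\sum_{a:\,i_0^a=i,\,l_a=m,\,\tau_a=\tau}\mu_a\lambda_a$ is a real number depending only on $i$ (and the output labels), since each atomic piece preserves the ancillary index $j$, composes the sign as $\tau s$, and fires only when $i=i_0^a$.

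Now I would invoke the hypothesis. Fixing any $\sys{E}\neq\sys{I}$ and any $i$, the vector $\sum_{(m,\tau)}c^{(i)}_{m,\tau}\rket{(mj)_{\tau s}}_{\sys{BE}}$ must lie in $\St{BE}$. The distinct pairs $(m,\tau)$ yield distinct pure states of $\sys{BE}$ (the $\sys{B}$-index is $m$, the $\sys{E}$-index is the fixed $j$, the sign is $\tau s$), and BCT is simplicial, so every state has a unique decomposition into non-null extremal states with non-negative weights summing to at most one. It follows that $c^{(i)}_{m,\tau}\geq 0$ for all $(m,\tau)$ and $\sum_{(m,\tau)}c^{(i)}_{m,\tau}\leq 1$. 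Setting $I^{(i)}\coloneqq\{(m,\tau):c^{(i)}_{m,\tau}>0\}$ and $\gamma^{(i)}_{m,\tau}\coloneqq c^{(i)}_{m,\tau}$, these are precisely the conditions in~\eqref{eq:arbitr_transf2}; hence $\T{T}$ satisfies condition~\ref{item:222} of Lemma~\ref{lem:realisation_transf}, and $\ref{item:222}\Rightarrow\ref{item:333}$ yields $\T{T}\in\Transf{A}{B}$.

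The main obstacle is the middle step: justifying that an arbitrary generalised transformation acts with coefficients depending only on the input index $i$, and not on $j$, $s$, or the ancilla $\sys{E}$. This is what lets a single simpliciality argument pin down positivity and normalisation of all weights at once, and it is where the bilocal-tomographic classification (Proposition~\ref{prop:atomic_conclusive}, via Lemma~\ref{lem:char_transf}) does the real work: without it one could not exclude a state-preserving $\T{T}$ acting on entangled inputs outside the atomic template~\eqref{eq:atomic}. I would also stress that the choice $\sys{E}\neq\sys{I}$ is essential, not cosmetic---on $\St{A}$ alone one sees only the sums $\sum_\tau c^{(i)}_{m,\tau}$, and it is exactly the entangled inputs $\rket{(ij)_s}_{\sys{AE}}$ that separate the two signs and thereby expose each coefficient individually, consistent with the failure of local discriminability emphasised in Remark~\ref{rem:transformation_equiv_class}.
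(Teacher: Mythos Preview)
Your argument is correct and takes a genuinely different route from the paper's proof. The paper does \emph{not} start from the atomic spanning set: it writes the most general action on bipartite pure states as $(\T{T}\boxtimes\T{I}_\sys{B})\rket{(ij)_s}_{\sys{AB}}=\sum_{m,l,\tau}\lambda^{i,j,s}_{m,l,\tau}\rket{(ml)_{\tau s}}_{\sys{CB}}$ and then imposes the commutation relation~\eqref{eq:local_commute} twice---first with the local atomic effects $\rbra{j'}_\sys{B}$ to force $l=j$, then with arbitrary $\T{R}'\in\RevTransf{B}{B}$ to eliminate the $j,s$-dependence of the coefficients---before invoking Proposition~\ref{prop:atomic_conclusive}. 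You instead use the very definition $\TransfR{A}{B}=\Span_{\mathbb{R}}\Transf{A}{B}$ together with Lemma~\ref{lem:char_transf} to write $\T{T}$ directly as a real combination of atomics of the form~\eqref{eq:atomic}; the structural constraints that the paper extracts via commutation then come for free, and the hypothesis plus simpliciality of $\St{BE}$ reduce the problem to checking non-negativity and sub-normalisation of finitely many coefficients. Your route is shorter and makes transparent that the lemma is essentially a corollary of the atomic classification already established in Appendix~\ref{app:atomic}; the paper's route, by contrast, is more self-contained in that it re-derives the structural template from the OPT commutation axioms rather than importing it, which may be of independent interest if one wants to see exactly which operational constraints pin down the form~\eqref{eq:arbitr_transf2}. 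Your handling of the degenerate cases $\sys{A}=\sys{I}$ and $\sys{B}=\sys{I}$, and the remark that $\sys{E}\neq\sys{I}$ is needed to separate the sign $\tau$, are both apt.
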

\begin{proof}
	Let $\T{T}\in\TransfR{A}{C}$ be a generalised transformation such that $\T{T}\boxtimes\T{I}_{\sys{E}}$ maps $\St{AE}$ to $\St{BE}$ for all $\sys{E}\in\Sys{BCT}$. By simpliciality, $\T{T}$ may take the following general form:
	\begin{align}\label{eq:transf_generic}
	\T{T}\boxtimes\T{I}_{\sys{B}}\rket{(ij)_s}_{\sys{A}\sys{B}} = \sum_{m,l,\tau}\lambda_{m,l,\tau}^{i,j,s}\rket{(ml)_{\tau s}}_{\sys{C}\sys{B}},
	\end{align}
	with $\lambda_{m,l,\tau}^{i,j,s}>0$ for all $i,j,s,m,l,\tau$ and $\sum_{m,l,\tau}\lambda_{m,l,\tau}^{i,j,s}\leq 1$. 
	Let now $\rbra{j'}_{\sys{B}}\in\Eff{B}$ be such that $\rbraket{j'}{j}_\sys{B}=\delta_{j'j}$ for all $\rket{j}_\sys{B}\in\PurSt{B}$.
    Imposing Eq.~\eqref{eq:local_commute}, namely,
	\begin{align*}
	\left[\T{I}_\sys{C}\boxtimes\rbra{j'}_\sys{B}\right]\left(\T{T}\boxtimes\T{I}_{\sys{B}}\right)=\left(\T{T}\boxtimes\T{I}_{\sys{B}}\right)\left[\T{I}_\sys{A}\boxtimes\rbra{j'}_\sys{B}\right],
	\end{align*}
	and using Eqs.~\eqref{eq:steering}, one has for all $i,j,j',s$:
	\begin{align*}
	\begin{split}
	&\left[\T{I}_\sys{C}\boxtimes\rbra{j'}_\sys{B}\right]\left(\T{T}\boxtimes\T{I}_{\sys{B}}\right)\rket{(ij)_s}_{\sys{A}\sys{B}} =\\
	&\left[\T{I}_\sys{C}\boxtimes\rbra{j'}_\sys{B}\right]\sum_{m,l,\tau}\lambda_{m,l,\tau}^{i,j,s}\rket{(ml)_{\tau s}}_{\sys{C}\sys{B}}
	=\sum_{m,\tau}\lambda_{m,j',\tau}^{i,j,s}\rket{m}_{\sys{C}}=\\
	&\left(\T{T}\boxtimes\T{I}_{\sys{B}}\right)\left[\T{I}_\sys{A}\boxtimes\rbra{j'}_\sys{B}\right]\rket{(ij)_s}_{\sys{A}\sys{B}}=\delta_{j'j}\T{T}\rket{i}_\sys{A}.
	\end{split}
	\end{align*}
	The above equation implies that Eq.~\eqref{eq:transf_generic} takes the form:
	\begin{align}\label{eq:transf_generic2}
	\T{T}\boxtimes\T{I}_{\sys{B}}\rket{(ij)_s}_{\sys{A}\sys{B}} = \sum_{m,\tau}\lambda_{m,\tau}^{i,j,s}\rket{(mj)_{\tau s}}_{\sys{C}\sys{B}}.
	\end{align}
	Accordingly, imposing now
	\begin{align*}
	\left(\T{I}_\sys{C}\boxtimes\T{R'}\right)\left(\T{T}\boxtimes\T{I}_{\sys{B}}\right)=\left(\T{T}\boxtimes\T{I}_{\sys{B}}\right)\left(\T{I}_\sys{A}\boxtimes\T{R'}\right)
	\end{align*}
	for all $\T{R'}\in\RevTransf{B}{B}$ (see Postulate~\ref{postulate:reversible}), one has for all $i,j,s,\pi',\sigma'_m$:
	\begin{align*}
	\begin{split}
	&\left(\T{I}_\sys{C}\boxtimes\T{R}'\right)\left(\T{T}\boxtimes\T{I}_{\sys{B}}\right)\rket{(ij)_s}_{\sys{A}\sys{B}} =\\
	&=\left(\T{I}_\sys{C}\boxtimes\T{R}'\right)\sum_{m,\tau}\lambda_{m,\tau}^{i,j,s}\rket{(mj)_{\tau s}}_{\sys{C}\sys{B}}=\\
	&=\sum_{m,\tau}\lambda_{m,\tau}^{i,j,s}\rket{(m\pi'(j))_{\sigma'_m \tau s}}_{\sys{C}\sys{B}}=\\
	&=\left(\T{T}\boxtimes\T{I}_{\sys{B}}\right)\left(\T{I}_\sys{C}\boxtimes\T{R}'\right)\rket{(ij)_s}_{\sys{A}\sys{B}}=\\
	&=\sum_{m,\tau}\lambda_{m,\tau}^{i,\pi'(j),\sigma'_ms}\rket{(m\pi'(j))_{\tau \sigma'_m s}}_{\sys{C}\sys{B}}.
	\end{split}
	\end{align*}
	The above equation implies that the coefficients $\lambda_{m,\tau}^{i,j,s}$ cannot depend on $j,s$ for all $i,m,\tau$, namely Eq.~\eqref{eq:transf_generic2} takes the following form:
	\begin{align}\label{eq:transf_generic3}
	\T{T}\boxtimes\T{I}_{\sys{B}}\rket{(ij)_s}_{\sys{A}\sys{B}} = \sum_{m,\tau}\lambda_{m,\tau}^{i}\rket{(mj)_{\tau s}}_{\sys{C}\sys{B}}.
	\end{align}
	Using Proposition~\ref{prop:atomic_conclusive}, one realises from Eq.~\eqref{eq:transf_generic3} that $\T{T}\in\TransfA{BCT}$, and this concludes the proof.
\end{proof}
We now use Lemmas~\ref{lem:realisation_channels},~\ref{lem:realisation_transf}, and~\ref{lem:pseudo_no-restriction} in order to prove Proposition~\ref{prop:dilation_tests}.
\paragraph{\textbf{$\ref{item:3}\Rightarrow\ref{item:2}\Rightarrow\ref{item:1}$.}} Recalling Postulate~\ref{postulate:instruments}, the chain of implications is trivial.
\paragraph{\textbf{$\ref{item:1}\Rightarrow\ref{item:3}$.}} Let $\InstrC{T}{X}{A}{B}\subset\TransfR{A}{B}$ such that $\InstrC{T}{X}{A}{B}\boxtimes\InstrC{I}{\star}{E}{E}$ maps preparation-instruments of $\sys{AE}$ to preparation-instruments of $\sys{BE}$ for all $\sys{E}\in\Sys{BCT}$.

Let us first prove the implication in the case where $\sys{A}=\sys{I}$. Choosing $\sys{E}=\sys{I}$, by hypothesis we have that $\InstrC{T}{X}{A}{B}$ maps the unity $1\in\PurSt{I}$ to a preparation instrument of $\sys{B}$. Namely, $\InstrC{T}{X}{A}{B}$ is a preparation instrument of BCT. Let us denote $\srho_{\mathsf{X}}^{\sys{I}\!\rightarrow\!\sys{B}}\coloneqq\TestC{T}{X}{A}{B}$ and $\rho=\sum_{i\in I}\lambda_{i}\rket{i}_{\sys{B'}}\coloneqq\sum_{x\in\Out{X}}\rho_x\in\StN{B}$ with $\lambda_{i'}> 0$ for all $i'\in I$ and $\sum_{i\in I}\lambda_{i}=1$.  Then, the set $\{\rho_x\}_{x\in\Out{X}}$ is a refinement of $\rho$. Take $\sys{B'}=\sys{BB}$, $\sys{A'}=\sys{B}$, $\T{R}=\T{I}_{\sys{BB}}\in\RevTransf{B'A}{A'B}$, and $\rket{\Sigma}_{\sys{B'}}=\sum_{i\in I}\lambda_{i}\rket{i}_{\sys{B}}\rket{i}_{\sys{B}}$, so that $\rho=[\rbra{e}_{\sys{A'}}\boxtimes\T{I}_{\sys{B}}]\T{R}\rket{\Sigma}_{\sys{B'}}$. By simpliciality, every $\rho_x$ can be rewritten as $\sum_{i\in I}\gamma_{i}^{(x)}\rket{i}_{\sys{B}}$, for suitably defined $\gamma_{i}^{(x)}$ such that $0\leq\gamma_{i}^{(x)}\leq\lambda_{i}$ and $\sum_{x\in\Out{X}}\gamma_{i}^{(x)}=\lambda_{i}$ for every $i\in I$. Define now, for every $x\in\Out{X}$ and $i\in I$, $\beta_{i}^{(x)}\coloneqq \gamma_{i}^{(x)}/\lambda_{i}$. It is clear that, by definition, it must be $0\leq\beta_{i}^{(x)}\leq 1$ and $\sum_{x'\in\Out{X}}\beta_{i}^{(x')}=1$ for every $x\in\Out{X}$ and $i\in I$. Now, defining $\rbra{a_x}_{\sys{A'}}\coloneqq\sum_{i}\beta_{i}^{(x)}\rbra{i}_{\sys{A'}}\in\EffR{A'}$ for every $x\in\Out{X}$, one has $\rho_x=[\rbra{a_x}_{\sys{A'}}\boxtimes\T{I}_{\sys{B}}]\T{R}\rket{\Sigma}_{\sys{B'}}$. Finally, since by the classification of BCT's effects (see Subsec.~\ref{subsec:characterisation}) one has $\rbra{a_{x'}}_{\sys{A'}}\in\Eff{A'}$ for all $x'\in\Out{X}$, and being $\sum_{x\in\Out{X}}\rbra{a_x}_{\sys{A'}}=\rbra{e}_{\sys{A'}}$, by Postulate~\ref{postulate:instruments} one can conclude $\{\rbra{a_x}_{\sys{A'}}\}_{x\in\Out{X}}\in\InstrA{BCT}$.

Let now be $\sys{A}\neq\sys{I}$ and $\tilde{\T{R}}_{\sys{A,B}}\in\RevTransf{B'A}{A'B}$ defined as in Eq.~\eqref{eq:atomic_atomic}. By hypothesis, $\T{T}_x\boxtimes\T{I}_{\sys{E}}$ maps $\St{AE}$ to $\St{BE}$ for every $x\in\Out{X}$ and $\sys{E}\in\Sys{BCT}$. Accordingly, by Lemma~\ref{lem:pseudo_no-restriction}, $\T{T}_x\in\Transf{A}{B}$ for every $x\in\Out{X}$. On the one hand, condition~\ref{item:22} of Lemma~\ref{lem:realisation_channels} holds for $\T{D}=\sum_{x\in\Out{X}}\T{T}_x$. On the other hand, condition~\ref{item:333} of Lemma~\ref{lem:realisation_transf} holds for $\T{T}=\T{T}_x$, and for every $x\in\Out{X}$. Thus, by construction, and invoking implications  $\ref{item:22}\Rightarrow\ref{item:33}$ in Lemma~\ref{lem:realisation_channels} and $\ref{item:333}\Rightarrow\ref{item:111}$ in Lemma~\ref{lem:realisation_transf}, there exists a state $\rket{\Sigma}_{\sys{B'}}\in\StN{B'}$
such that the following holds:
\begin{align*}
\begin{aligned}
\Qcircuit @C=1em @R=1.3em
{
	&\s{A}&\gate{\sum_{x\in\Out{X}}\T{T}_x}&\s{B}\qw&
}
\end{aligned}
=
\begin{aligned}
\Qcircuit @C=1em @R=1.3em
{
	&\prepareC{\Sigma}&\s{B'}\qw&\multigate{1}{\tilde{\T{R}}_{\sys{A,B}}}&\s{A'}\qw&\measureD{e}&
	\\
	&\s{A}\qw&\qw&\ghost{\tilde{\T{R}}_{\sys{A,B}}}&\qw&\s{B}\qw&\qw&
}
\end{aligned}.
\end{align*}
We have thus shown that the collection $\InstrC{T}{X}{A}{B}\subset\Transf{A}{B}$ is a refinement for the deterministic transformation $\T{D}\coloneqq\sum_{x\in\Out{X}}\T{T}_x$. 
Considering Eq.~\eqref{eq:expr}, we know that $\rket{\Sigma}_{\sys{B'}}$ can be taken of the form $\sum_{(h,\xi)\in J}\mu_{h,\xi}\rket{\sigma_{h,\xi}}_{\sys{B''}}\rket{0}_{\sys{B}}$ with positive 
coefficients $\mu_{h,\xi}$.
Moreover, $\T{D}$ can be decomposed into conical combinations of atomic 
maps 
\begin{align}\label{eq:atomic_programmed}
\begin{split}
&\delta_{(h,\xi)(\tilde{h},\tilde{\xi})}
\begin{aligned}
\Qcircuit @C=1em @R=1.3em
{
	\multiprepareC{1}{(ij)_{s}}&\s{A}\qw&\gate{\T{A}_{\tilde{i}}^{(h,\xi)}}&\s{B}\qw&\\
	\pureghost{(ij)_{s}}&\qw&\s{E}\qw&\qw&
}
\end{aligned}
\coloneqq\\[2.5ex]\coloneqq
&
\begin{aligned}
\Qcircuit @C=1em @R=1.3em
{
	&\prepareC{\sigma_{h,\xi}}&\s{B''}\qw&\multigate{2}{\tilde{\T{R}}_{\sys{A,B}}}&\s{\sys{B''}}\qw&\measureD{\sigma_{\tilde{h},\tilde{\xi}}}& \\
	&\prepareC{0}&\s{B}\qw&\ghost{\tilde{\T{R}}_{\sys{A,B}}}&\s{\sys{A}}\qw&\measureD{\tilde{i}}& \\
	&\multiprepareC{1}{(ij)_s}&\s{A}\qw&\ghost{\tilde{\T{R}}_{\sys{A,B}}}&\qw&\s{B}\qw&\qw& \\
	&\pureghost{(ij)_s}&\qw&\qw&\s{E}\qw&\qw&\qw
}.
\end{aligned}
=\\[2.5ex]=
&
\delta_{(h,\xi)(\tilde{h},\tilde{\xi})}\delta_{i\tilde{i}}\begin{aligned}
\Qcircuit @C=1em @R=1.3em
{
	\multiprepareC{1}{(h(i)j)_{\xi(i) s}}&\s{B}\qw&
	\\
	\pureghost{(h(i)j)_{\xi(i) s}}&\s{E}\qw&
}
\end{aligned}.
\end{split}
\end{align}
In other terms, the following decomposition holds:
\begin{align}
\T{D} = \sum_{\tilde{i}=1}^{\D{A}}\sum_{(h,\xi)\in J}\mu_{h,\xi}\T{A}_{\tilde{i}}^{(h,\xi)}.
\label{eq:decomp_D}
\end{align}
Now, in the light of Lemma~\eqref{lem:atomic}, the decomposition of a transformation into atomic transformations is unique up to trivial refinements---namely, refinements where the elements are proportional to each other.
This implies that every refinement of $\tD$ must consist in a trivial refinement and subsequent coarse-graining of the decomposition in Eq.~\eqref{eq:decomp_D}. Consequently, for all $x'\in\Out{X},(h',\xi')\in J,\tilde{i}'\in\{1,2,\ldots,\D{A}\}$, it must be:
\begin{align*}
\T{T}_{x'} = \sum_{\tilde{i}=1}^{\D{A}}\sum_{(h,\xi)\in J}\nu_{(h,\xi),\tilde{i}}^{(x')}\T{A}_{\tilde{i}}^{(h,\xi)},\quad \nu_{(h',\xi'),\tilde{i}'}^{(x')}\in[0,1], \\ 
\sum_{x\in\Out{X}} \nu_{(h',\xi'),\tilde{i}'}^{(x)} =
\mu_{h',\xi'}, \quad
\sum_{(h,\xi)\in J} \nu_{(h,\xi),\tilde{i}'}^{(x')} 
\leq 1.
\end{align*}
Now, let us define the following collection of coefficients:
\begin{align*}
    \zeta_{(h,\xi),\tilde{i}}^{(x)}\coloneqq \frac{\nu_{(h,\xi),\tilde{i}}^{(x)}}{\mu_{h,\xi}} \in[0,1],\\ \forall x\in\Out{X},\ (h,\xi)\in J,\ \tilde{i}\in\{1,2,\ldots,\D{A}\}.
\end{align*}
Accordingly, each $\T{T}_x$ can be achieved as follows:
\begin{align*}
&\rbra{a_x}_{\sys{A'}} \coloneqq \sum_{\tilde{i}=1}^{\D{A}}\sum_{(\tilde{h},\tilde{\xi})\in J} \zeta_{(\tilde{h},\tilde{\xi}),\tilde{i}}^{(x)} \rbra{\sigma_{\tilde{h},\tilde{\xi}}}_{\sys{B''}}\rbra{\tilde{i}}_{\sys{A}}, \\[2.5ex]
&
\begin{aligned}
\Qcircuit @C=1em @R=1.3em
{
	&\prepareC{\Sigma}&\s{B'}\qw&\multigate{1}{\tilde{\T{R}}_{\sys{A,B}}}&\s{A'}\qw&\measureD{a_x}&
	\\
	&\s{A}\qw&\qw&\ghost{\tilde{\T{R}}_{\sys{A,B}}}&\qw&\s{B}\qw&\qw&
}
\end{aligned} =\\[2.5ex]=&
\sum_{\tilde{i}=1}^{\D{A}}\sum_{(h,\xi)\in J}\mu_{h,\xi}\zeta_{(h,\xi),\tilde{i}}^{(x)}\T{A}_{\tilde{i}}^{(h,\xi)} \equiv
\begin{aligned}
\Qcircuit @C=1em @R=1.3em
{
	&\s{A}&\gate{\T{T}_x}&\s{B}\qw&
}
\end{aligned}.
\end{align*}
By construction, and by the classification of BCT's effects given in Subsec.~\ref{subsec:characterisation}, $\{\rbra{a_x}_{\sys{A'}}\}_{x\in\Out{X}}\subset\Eff{A'}$ and $\sum_{x\in\Out{X}}\rbra{a_x}_{\sys{A'}}=\sum_{(\tilde{h},\tilde{\xi})\in J} \rbra{\sigma_{\tilde{h},\tilde{\xi}}}_{\sys{B''}}\rbra{e}_{\sys{A}}$ hold. In the case where $\sum_{(\tilde{h},\tilde{\xi})\in J} \rbra{\sigma_{\tilde{h},\tilde{\xi}}}_{\sys{B''}}\neq\rbra{e}_{\sys{B''}}$, one could complete the collection $\{\rbra{a_x}_{\sys{A'}}\}_{x\in\Out{X}}$ adding the effect $\sum_{(\tilde{h},\tilde{\xi})\in \tilde{J}} \rbra{\sigma_{\tilde{h},\tilde{\xi}}}_{\sys{B''}}\rbra{e}_{\sys{A}}$, where $\tilde{J}$ collects all the pairs $(h,\xi)\not\in J$, to any of the effects in the collection, say e.g.~$\rbra{a_{x_0}}_{\sys A'}$. This simply amounts to adding 
the associated null transformation to the corresponding transformation $\T{T}_{x_0}$, since $\rbraket{\sigma_{\tilde{h},\tilde{\xi}}}{\sigma_{h',\xi'}}_{\sys{B''}}=0$ for every $(\tilde{h},\tilde{\xi})\in \tilde{J}$ and $(h',\xi')\in J$. Then, by the first part of Postulate~\ref{postulate:instruments}, the collection of effects $\{\rbra{a_x}_{\sys{A'}}\boxtimes\T{I}_{\sys{C}}\}_{x\in\Out{X}}$ maps preparation-instruments of $\sys{AC}$ to preparation-instruments of $\sys{C}$ for all $\sys{C}\in\Sys{BCT}$.  Finally, by the second part of Postulate~\ref{postulate:instruments}, one can conclude that $\{\rbra{a_x}_{\sys{A'}}\}_{x\in\Out{X}}\in\InstrA{BCT}$.

\section{Conditional instruments in theories with a unique deterministic effect}\label{app:conditional}
We characterise those causal theories (see Property~\ref{prope:causality}) which satisfy Postulate~\ref{postulate:instruments} and Property~\ref{prope:unrestricted_instruments}, proving that they contain every possible conditional instrument, namely, they satisfy Property~\ref{postulate:conditional}.
\begin{theorem}\label{thm:conditional}
	Let $\Theta$ be an OPT satisfying Property~\ref{prope:causality}, Postulate~\ref{postulate:instruments}, and Property~\ref{prope:unrestricted_instruments}. Then the theory $\Theta$ also satisfies Property~\ref{postulate:conditional}.
\end{theorem}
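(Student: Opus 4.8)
The plan is to use Property~\ref{prope:unrestricted_instruments} as the only criterion for instrumenthood, thereby reducing the claim to a single normalisation check which is then discharged by causality. Write $\mathsf{Z}\coloneqq\{(x,y)\colon x\in\mathsf{X},\,y\in\mathsf{Y}^{(x)}\}$ for the outcome space of the candidate conditional collection $\{\T{B}_y^{(x)}\circ\T{A}_x\}_{(x,y)\in\mathsf{Z}}$. First I would note that each $\T{B}_y^{(x)}\circ\T{A}_x$ is a bona fide transformation from $\sys{A}$ to $\sys{C}$, being the sequential composition of $\T{A}_x\in\Transf{A}{B}$ with $\T{B}_y^{(x)}\in\Transf{B}{C}$; hence the collection is a subset of $\TransfA{\Theta}$, and by Property~\ref{prope:unrestricted_instruments} it lies in $\InstrA{\Theta}$ if and only if, for every $\sys{E}\in\Sys{\Theta}$, its extension by $\T{I}_\sys{E}$ maps preparation-instruments of $\sys{AE}$ to preparation-instruments of $\sys{CE}$.

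Accordingly, I would fix $\sys{E}$ and an arbitrary preparation-instrument $\{\rket{\rho_w}_{\sys{AE}}\}_w$ of $\sys{AE}$, which by Postulate~\ref{postulate:instruments} means exactly that $\rket{\rho}_{\sys{AE}}\coloneqq\sum_w\rket{\rho_w}_{\sys{AE}}\in\StN{AE}$. Since each extended composite transformation $(\T{B}_y^{(x)}\circ\T{A}_x)\boxtimes\T{I}_\sys{E}$ maps $\St{AE}$ into $\St{CE}$, every member of the image family $\{(\T{B}_y^{(x)}\circ\T{A}_x)\boxtimes\T{I}_\sys{E}\rket{\rho_w}_{\sys{AE}}\}_{x,y,w}$ is a genuine state of $\sys{CE}$. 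By causality there is a unique deterministic effect on $\sys{CE}$, so the only thing left to prove is that these states sum to a deterministic one, i.e.\ that $\rbra{e}_{\sys{CE}}$ evaluated on their sum equals $1$.

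The computation I would carry out rests on two consequences of causality. First, the product of deterministic effects is deterministic, hence by uniqueness $\rbra{e}_{\sys{CE}}=\rbra{e}_\sys{C}\boxtimes\rbra{e}_\sys{E}$. Second, for the channel $\T{B}^{(x)}\coloneqq\sum_{y}\T{B}_y^{(x)}\in\TransfN{B}{C}$ the composite $\rbra{e}_\sys{C}\T{B}^{(x)}$ is again a deterministic effect of $\sys{B}$, hence $\rbra{e}_\sys{C}\T{B}^{(x)}=\rbra{e}_\sys{B}$. Summing over $y$ first collapses the branch labelled $x$ into $\T{B}^{(x)}\boxtimes\T{I}_\sys{E}$, and then bifunctoriality yields $\rbra{e}_{\sys{CE}}(\T{B}^{(x)}\boxtimes\T{I}_\sys{E})=(\rbra{e}_\sys{C}\T{B}^{(x)})\boxtimes\rbra{e}_\sys{E}=\rbra{e}_{\sys{BE}}$, uniformly in $x$. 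The remaining sums over $x$ and $w$ therefore telescope to $\rbra{e}_{\sys{BE}}(\T{A}\boxtimes\T{I}_\sys{E})\rket{\rho}_{\sys{AE}}$, where $\T{A}\coloneqq\sum_x\T{A}_x\in\TransfN{A}{B}$ is the channel associated with $\Instr{A}{X}$; since a channel extended by the identity sends $\StN{AE}$ to $\StN{BE}$ and $\rho$ is deterministic, this evaluates to $1$, establishing that the image family is a preparation-instrument of $\sys{CE}$. As $\sys{E}$ and $\{\rho_w\}$ were arbitrary, Property~\ref{prope:unrestricted_instruments} then delivers the conditional instrument, which is precisely Property~\ref{postulate:conditional}.

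The main obstacle is the $x$-dependence of the second instrument $\mathsf{B}^{(x)}$: one cannot pull it out of the sum over $x$, because the choice of downstream channel is conditioned on the outcome of the first instrument. The crucial point is that causality forces the deterministic effect to annihilate \emph{every} channel down to the same $\rbra{e}_\sys{B}$, so the conditional dependence on $x$ washes out after coarse-graining over $y$---and it is exactly this uniformity, i.e.\ the uniqueness of the deterministic effect, that makes the telescoping go through. Without causality different $\T{B}^{(x)}$ could yield different deterministic effects and the argument would break down.
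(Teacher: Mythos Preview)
Your proof is correct and follows essentially the same route as the paper's: both reduce the claim, via Property~\ref{prope:unrestricted_instruments}, to checking that the candidate conditional collection (extended by $\T{I}_\sys{E}$) sends an arbitrary preparation-instrument to a collection of states whose total pairs to~$1$ with the unique deterministic effect, and both discharge this by collapsing $\sum_y\T{B}_y^{(x)}$ to a channel and using $\rbra{e}_\sys{C}\T{B}^{(x)}=\rbra{e}_\sys{B}$ uniformly in~$x$. Your explicit invocation of bifunctoriality and of $\rbra{e}_{\sys{CE}}=\rbra{e}_\sys{C}\boxtimes\rbra{e}_\sys{E}$ simply unpacks what the paper summarises as ``the characterisation of deterministic transformations''; otherwise the arguments coincide.
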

\begin{proof}
	We denote the unique deterministic effect of each $\sys{A}\in\Sys{\Theta}$ by $\rbra{e}_{\sys{A}}$. Suppose that $\srho_{\mathsf{X}}^{\sys{I}\!\rightarrow\!\sys{AE}}$ and $\InstrC{A}{Y}{A}{B}$ are instruments of $\Theta$, and let $\{\mathsf{B}_{\mathsf{Z}^{(y)}}^{(y)}\}_{y\in\Out{Y}}\subset\InstrAB{B}{C}$
	a collection of instruments labelled by $y\in\Out{Y}$. Consider now the following collection of transformations:
	\begin{align*}
	\mathsf{P}\coloneqq\bigcup_{y\in\mathsf{Y}}\lbrace(\T{B}_{z}^{(y)}\T{A}_y\boxtimes\T{I}_{\sys{E}})\rho_{x}\rbrace_{(x,z)\in\Out{X}\times\mathsf{Z}^{(y)}}\in\InstrABR{I}{C}.
	\end{align*}
	Now, $(\T{B}_{z}^{(y)}\T{A}_y\boxtimes\T{I}_{\sys{E}})\rho_{x}\in\St{AE}$ for all $x\in\Out{X},y\in\Out{Y},z\in\Out{Z}^{(y)}$ by hypothesis. In addition, by Property~\ref{prope:causality} and using the characterisation of deterministic transformations given in Subsec.~\ref{subsec:probabilistic}, one has:
	\begin{align*}
	&\rbra{e}_{\sys{CE}}\sum_{z\in\Out{Z}^{(y)}}\T{B}_{z}^{(y)}\boxtimes\T{I}_{\sys{E}} = \rbra{e}_{\sys{BE}},\quad \forall y\in\Out{Y}, \\
	&\rbra{e}_{\sys{BE}}\sum_{y\in\Out{Y}}\T{A}_{y}\boxtimes\T{I}_{\sys{E}} = \rbra{e}_{\sys{AE}}, \\
	&\rbra{e}_{\sys{AE}}\sum_{x\in\Out{X}}\rho_{x} = 1.
	\end{align*}
	Accordingly:
	\begin{align*}
		\rbra{e}_{\sys{CE}}\sum_{x\in\Out{X}}\sum_{y\in\Out{Y}}\sum_{z\in\Out{Z}^{(y)}}(\T{B}_{z}^{(y)}\T{A}_y\boxtimes\T{I}_{\sys{E}})\rho_{x} = 1,
	\end{align*}
	and then, by Postulate~\ref{postulate:instruments}, one has that $\mathsf{P}\in\InstrAB{I}{C}$. Finally, by Property~\ref{prope:unrestricted_instruments}, one concludes that the conditional generalised instrument	$\bigcup_{y\in\mathsf{Y}}\lbrace\T{B}_{z}^{(y)}\T{A}_y\rbrace_{z\in\mathsf{Z}^{(y)}}$ is an instrument of the theory $\Theta$.
\end{proof}

\section{Homogeneous strict bilocal discriminability and essential uniqueness of purification imply postulates~\eqref{eq:dimensions} and~\eqref{eq:state_composition} in a simplicial theory}\label{app:simplicial}
Let $\Theta$ be a simplicial theory, and define, for all pure states $\rket{i}_{\sys{A}}\in\PurSt{A},\rket{j}_{\sys{B}}\in\PurSt{B}$, the set $\PurRef{ij}{AB}\subseteq\PurSt{AB}$ collecting those pure states of $\sys{AB}$ which convexly refine the product state $\rket{i}_{\sys{A}}\rket{j}_{\sys{B}}$. Moreover, define:
\begin{align*}
&n_{\sys{AB}}^{ij}\coloneqq \lvert\PurRef{ij}{AB}\rvert,\\
&\PurRefA{AB}\coloneqq \bigcup_{\substack{1\leq i\leq \D{A} \\ 1\leq j\leq \D{B}}} \PurRef{ij}{AB}, \\
&\PurRefbar{AB}\coloneqq \PurSt{AB}\setminus\PurRefA{AB},\\
&l_{\sys{AB}}\coloneqq \lvert\PurRefbar{AB}\rvert.
\end{align*}
By direct inspection of the proof of Theorem 2 in Ref.~\cite{d2019classical}, one easily verifies that, for all $(i,j)\neq(i',j')$, $\PurRef{ij}{AB}\cap\PurRef{i'j'}{AB}=\emptyset$. Accordingly, in a simplicial theory $\Theta$ each $\rket{\rho}_{\sys{AB}}\in\PurRefA{AB}$ can be unambiguously labelled as follows:
\begin{align*}
\rket{\rho}_{\sys{AB}} = \rket{\left(ij\right)_\sigma}_{\sys{AB}},\quad \sigma\in \{ 1,2,\ldots, n_{\sys{AB}}^{ij}\}.
\end{align*}
Then, for all systems $\sys{A},\sys{B}$ and $i,j,\sigma$,
the following holds:
\begin{align}
&\D{AB}=\sum_{\substack{1\leq i\leq \D{A} \\ 1\leq j\leq \D{B}}}n_{\sys{AB}}^{ij} + l_{\sys{AB}},\label{eq:composition_general_general} \\
&\rket{i}_{\sys{A}}\rket{j}_{\sys{B}}= \sum_{\sigma'=1}^{n_{\sys{AB}}^{ij}} p_{\sigma'}^{ij}\rket{(ij)_{\sigma'}}_{\sys{AB}},\quad p_{\sigma}^{ij}>0,\quad  \sum_{\sigma'=1}^{n_{\sys{AB}}^{ij}} p_{\sigma'}^{ij}=1.\label{eq:classification_states_general}
\end{align}
\begin{theorem}\label{thm:classification}
	Let $\Theta$ be a simplicial theory satisfying Property~\ref{prope:uniqueness_purification}. Then, for all systems $\sys{A},\sys{B}\in\Sys{\Theta}$, there exists a positive integer $n_{\sys{AB}}$ such that the following holds. For all pure states $\rket{i}_{\sys{A}}\in\PurSt{A},\rket{j}_{\sys{B}}\in\PurSt{B}$, permutation $\pi$ of $\D{A}$ elements, there exist a reversible transformation $\T{R}\in\RevTransf{A}{A}$ and a permutation $\kappa_{ij}$ of $n_{\sys{AB}}$ elements, such that:
	\begin{align}
	&n_{\sys{AB}}^{ij}=n_{\sys{AB}},\label{eq:composition_general_reversible}\\
	&\left(\T{R}\boxtimes\T{I}_{\sys{B}}\right)\rket{\left(ij\right)_\sigma}_{\sys{AB}}= \rket{\left(\pi(i)j\right)_{\kappa_{ij}(\sigma)}}_{\sys{AB}},\label{eq:reversible1}\\
	&\D{AB}= n_{\sys{AB}}\D{A}\D{B} + l_{\sys{AB}},\label{eq:composition_general}\\
	&\rket{i}_{\sys{A}}\rket{j}_{\sys{B}}= \frac{1}{n_{\sys{AB}}}\sum_{\sigma=1}^{n_{\sys{AB}}}\rket{(ij)_{\sigma}}_{\sys{AB}}.\label{eq:composition_states}
	\end{align}
\end{theorem}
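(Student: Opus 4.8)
The plan is to leverage two structural facts in tandem: simpliciality---which gives a \emph{unique} decomposition of every state into pure states and makes each pure state extremal---and essential uniqueness of purification (Property~\ref{prope:uniqueness_purification}), which I will use to manufacture transitive group actions whose orbit-averaging forces both the uniform weights in Eq.~\eqref{eq:composition_states} and the common value $n_{\sys{AB}}$ in Eq.~\eqref{eq:composition_general_reversible}.

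First I would pin down the marginals of each refinement $\rket{(ij)_\sigma}_{\sys{AB}}$. Starting from Eq.~\eqref{eq:classification_states_general} and applying the deterministic effect $\rbra{e}_{\sys{B}}$ (unique by causality, Property~\ref{prope:causality}), one obtains $\rket{i}_{\sys{A}}=\sum_{\sigma} p^{ij}_{\sigma}\,\rbra{e}_{\sys{B}}\rket{(ij)_\sigma}_{\sys{AB}}$, a convex combination of deterministic states of $\sys{A}$ (the $\sys{A}$-marginals of the states $\rket{(ij)_\sigma}$, which are deterministic as pure states of a simplicial theory). Since $\rket{i}_{\sys{A}}$ is extremal and all $p^{ij}_\sigma>0$, every summand must equal $\rket{i}_{\sys{A}}$; symmetrically the $\sys{B}$-marginal of each $\rket{(ij)_\sigma}$ is $\rket{j}_{\sys{B}}$. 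Thus each $\rket{(ij)_\sigma}$ is simultaneously a purification of $\rket{i}_{\sys{A}}$ and of $\rket{j}_{\sys{B}}$.

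Next, to obtain the uniform weights, I fix $(i,j)$ and note that all the $\rket{(ij)_\sigma}$ are purifications of $\rket{i}_{\sys{A}}$ sharing the \emph{same} $\sys{B}$-marginal $\rket{j}_{\sys{B}}$. Essential uniqueness then yields, for any $\sigma,\sigma'$, a reversible $\T{R}\in\RevTransf{B}{B}$ with $(\T{I}_{\sys{A}}\boxtimes\T{R})\rket{(ij)_\sigma}=\rket{(ij)_{\sigma'}}$; comparing $\sys{B}$-marginals forces $\T{R}\rket{j}_{\sys{B}}=\rket{j}_{\sys{B}}$, so such $\T{R}$ also fixes the product $\rket{i}_{\sys{A}}\rket{j}_{\sys{B}}$. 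Applying one such $\T{R}$ to the decomposition in Eq.~\eqref{eq:classification_states_general} and invoking the uniqueness of the simplicial decomposition shows that the weight vector $(p^{ij}_\sigma)_\sigma$ is invariant under a permutation group acting transitively on $\{1,\dots,n^{ij}_{\sys{AB}}\}$, hence constant; this gives Eq.~\eqref{eq:composition_states} with $n^{ij}_{\sys{AB}}$ in place of $n_{\sys{AB}}$. To remove the dependence on $(i,j)$ I use essential uniqueness with $\sys{A}=\sys{I}$, which makes $\RevTransf{A}{A}$ act transitively on $\PurSt{A}$: picking $\T{R}_{\sys{A}}$ with $\T{R}_{\sys{A}}\rket{i}_{\sys{A}}=\rket{i'}_{\sys{A}}$, the reversible $\T{R}_{\sys{A}}\boxtimes\T{I}_{\sys{B}}$ preserves purity (Proposition~\ref{prop:pure_to_pure}) and, since $\rbra{e}_{\sys{A}}\T{R}_{\sys{A}}=\rbra{e}_{\sys{A}}$, maps $\PurRef{ij}{AB}$ bijectively onto $\PurRef{i'j}{AB}$; hence $n^{ij}_{\sys{AB}}=n^{i'j}_{\sys{AB}}$, and likewise in the second index, yielding the common value $n_{\sys{AB}}$ of Eq.~\eqref{eq:composition_general_reversible}. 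Summing Eq.~\eqref{eq:composition_general_general} over the $\D{A}\D{B}$ pairs $(i,j)$ then gives Eq.~\eqref{eq:composition_general}.

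Finally, for Eq.~\eqref{eq:reversible1} I would observe that any $\T{R}\in\RevTransf{A}{A}$ induces a permutation $\pi$ of $\PurSt{A}$ (Proposition~\ref{prop:pure_to_pure}), and that $\T{R}\boxtimes\T{I}_{\sys{B}}$---preserving purity and the $\sys{B}$-marginal---restricts to a bijection $\PurRef{ij}{AB}\to\PurRef{\pi(i)j}{AB}$, which is exactly the claimed action with $\kappa_{ij}$ the induced label permutation. The main obstacle is the realizability clause: establishing that \emph{every} prescribed permutation $\pi$ of $\D{A}$ elements is actually induced by some single-system reversible. Essential uniqueness only hands me transitivity of $\RevTransf{A}{A}$ on $\PurSt{A}$, not the full symmetric group at one stroke, so the delicate step is to realize a given $\pi$ by producing, for each target index, two purifications over $\sys{A}$ of a suitable $\sys{B}$-state whose relating reversible induces the required move on $\PurSt{A}$---decomposing $\pi$ into such elementary moves and carefully tracking the resulting action on the $\sigma$-labels to extract $\kappa_{ij}$.
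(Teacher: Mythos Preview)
Your approach is essentially the paper's. Both arguments exploit essential uniqueness of purification in the same two ways: (i) with $\sys{A}=\sys{I}$ to obtain transitivity of $\RevTransf{A}{A}$ on $\PurSt{A}$, used (together with Proposition~\ref{prop:pure_to_pure} and uniqueness of the simplicial decomposition) to show that $n^{ij}_{\sys{AB}}$ is independent of $(i,j)$; and (ii) with $\sys{A}$ nontrivial to relate any two $\rket{(ij)_\sigma},\rket{(ij)_{\sigma'}}$ by a reversible on $\sys{B}$, forcing the weight vector $(p^{ij}_\sigma)_\sigma$ to be invariant under a transitive action and hence uniform. Your explicit marginal computation showing that each $\rket{(ij)_\sigma}$ purifies both $\rket{i}_{\sys{A}}$ and $\rket{j}_{\sys{B}}$ is exactly what the paper asserts in one line (``by simpliciality, $\rket{(ij)_\sigma}$ is a purification of $\rket{i}$ and $\rket{j}$'').

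Regarding the obstacle you flag---realising an \emph{arbitrary} permutation $\pi$ of $\PurSt{A}$ by a single $\T{R}$---the paper does not supply the decomposition-into-elementary-moves argument you sketch. It simply writes that essential uniqueness with $\sys{A}=\sys{I}$ ``implies \emph{transitivity of reversible channels on pure states}. That is, for every \ldots\ permutation $\pi$ \ldots\ there exists $\T{R}$ such that $\T{R}\rket{i}=\rket{\pi(i)}$ for all $\rket{i}$'', thereby identifying transitivity with realisation of the full symmetric group in one stroke. So you are being more cautious than the paper at this point: for Eqs.~\eqref{eq:composition_general_reversible}, \eqref{eq:composition_general}, \eqref{eq:composition_states} only transitivity is needed and both arguments establish them completely, while the full-$\pi$ clause underpinning Eq.~\eqref{eq:reversible1} is not more carefully justified in the paper's proof than in your proposal.
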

\begin{proof}
	Property~\ref{prope:uniqueness_purification}, choosing $\sys{A}=\sys{I}$, implies \emph{transitivity of reversible channels on pure states}. That is, for every system $\sys{A}$ and every permutation $\pi$ of $\D{A}$ elements, there exists a reversible transformation $\T{R}\in\RevTransf{A}{A}$ such that $\T{R}\rket{i}_\sys{A}=\rket{\pi(i)}_\sys{A}$ for all $\rket{i}_\sys{A}\in\PurSt{A}$. Let $\pi$ and $\pi'$ denote two permutations of, respectively, $\D{A}$ and $\D{B}$ elements, and define $\tilde{i}\coloneqq\pi(i),\tilde{j}\coloneqq\pi'(j)$ for all $i\in\{1,\ldots,\D{A}\},j\in\{1,\ldots,\D{B}\}$. Then, combining Property~\ref{prope:uniqueness_purification} with Proposition~\ref{prop:pure_to_pure}, for all $\pi,\pi'$ there exist reversible transformations $\T{R}\in\RevTransf{A}{A},\T{R}'\in\RevTransf{B}{B}$, such that the following holds:
	\begin{align}\label{eq:pure_to_pure}
	\begin{split}
	&\rket{\tilde{i}}_\sys{A}\coloneqq\T{R}\rket{i}_\sys{A}\in\PurSt{A},\\ &\rket{\tilde{j}}_\sys{B}\coloneqq\T{R}'\rket{j}_\sys{B}\in\PurSt{B}.
	\end{split}
	\end{align}
	Still by Proposition~\ref{prop:pure_to_pure}, we can also denote:
	\begin{align}\label{eq:pure_states_permuted}
	\left(\T{R}\boxtimes\T{R}'\right)\rket{(ij)_\sigma}_{\sys{AB}} = \rket{\left(\iota^{ij}_{\sigma}\upsilon^{ij}_{\sigma}\right)_{\kappa^{ij}_{\sigma}}}_{\sys{AB}}.
	\end{align}
	Recall now Eq.~\eqref{eq:classification_states_general}, which holds by simpliciality. Eqs.~\eqref{eq:pure_to_pure} and~\eqref{eq:pure_states_permuted}, combined with Eq.~\eqref{eq:classification_states_general}, read as
	\begin{align}
	\begin{split}\label{eq:product_perm}
	\rket{\tilde{i}}_{\sys{A}}\rket{\tilde{j}}_{\sys{B}}&= 
	\sum_{\tilde{\sigma}=1}^{n_{\sys{AB}}^{\tilde{i}\tilde{j}}} p_{\tilde{\sigma}}^{\tilde{i}\tilde{j}}\rket{\left(\tilde{i}\tilde{j}\right)_{\tilde{\sigma}}}_{\sys{AB}}=\left(\T{R}\boxtimes\T{R}'\right)\rket{i}_{\sys{A}}\rket{j}_{\sys{B}}=\\ &=\sum_{\sigma=1}^{n_{\sys{AB}}^{ij}} p_{\sigma}^{ij}\rket{\left(\iota^{ij}_{\sigma}\upsilon^{ij}_{\sigma}\right)_{\kappa^{ij}_{\sigma}}}_{\sys{AB}}.
	\end{split}
	\end{align}
	First of all, by simpliciality, Eq.~\eqref{eq:product_perm} implies $\iota^{ij}_{\sigma}=\tilde{i}, \upsilon^{ij}_{\sigma}=\tilde{j}$ for all $i\in\{1,\ldots,\D{A}\},j\in\{1,\ldots,\D{B}\},\sigma\in\{1,\ldots, n_{\sys{AB}}^{ij}\}$. Moreover, by Proposition~\ref{prop:pure_to_pure}, for all $i,j,\tilde{i},\tilde{j},k$ it must be:
	\begin{align}\label{eq:final}
	n_{\sys{AB}}^{ij}=n_{\sys{A'B'}}^{\tilde{i}\tilde{j}},\quad \kappa^{ij}_{\sigma}=\kappa_{ij}(\sigma),\quad p_{\kappa_{ij}(\sigma)}^{\tilde{i}\tilde{j}}=p_{\sigma}^{ij},
	\end{align}
	where $\kappa_{ij}$ is a permutation of $n_{\sys{AB}}^{ij}$ elements. This proves Eq.~\eqref{eq:composition_general_reversible}, Eq.~\eqref{eq:reversible1}, and---recalling Eq.~\eqref{eq:composition_general_general}---Eq.~\eqref{eq:composition_general}. For all $i,j,\sigma$, by simpliciality, $\rket{(ij)_{\sigma}}_{\sys{AB}}$ is a purification of $\rket{i}_{\sys{A}}$ and $\rket{j}_{\sys{B}}$. Thus, by Property~\ref{prope:uniqueness_purification}, in Eq.~\eqref{eq:final} the permutation $\kappa_{ij}$ in $\PurRef{ij}{AB}$ can be arbitrarily chosen when $i=\tilde{i}$ and $j=\tilde{j}$. Since $p_{\sigma}^{ij}>0$ and $\sum_{\sigma'=1}^{n_{\sys{AB}}} p_{\sigma'}^{ij}=1$ for all $i,j,\sigma$, then one has $p_{\sigma}^{ij}=1/n_{\sys{AB}}$ for all $i,j,\sigma$. This finally also proves Eq.~\eqref{eq:composition_states}.
\end{proof}
\begin{corollary}\label{cor:classification_bilocal}
	Let $\Theta$ be a simplicial theory satisfying Property~\ref{prope:uniqueness_purification}, $\sys{I}\neq\sys{A},\sys{B}\in\Sys{\Theta}$, and $i\in\{1,2,\ldots,\D{A}\},j\in\{1,2,\ldots,\D{B}\}$. Then $\Theta$ also satisfies strict bilocal discriminability, with $\D{AB}>\D{A}\D{B}$, if and only if $l_{\sys{AB}}=0$ and $n_{\sys{AB}}^{ij}=2$.
\end{corollary}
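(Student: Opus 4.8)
The plan is to translate ``strict bilocal discriminability with $\D{AB}>\D{A}\D{B}$'' into a single dimensional recursion by means of Theorem~\ref{thm:bilocal}, and then to read off $n_{\sys{AB}}$ and $l_{\sys{AB}}$ from the structure already extracted in Theorem~\ref{thm:classification}. First I would invoke Theorem~\ref{thm:bilocal}: a simplicial theory is finite-dimensional, so bilocal discriminability is equivalent to $\Delta^{(3)}_{\sys{XYZ}}=0$ for every tripartite system, i.e.\ to Eq.~\eqref{eq:dimensions_bilocal}. Subtracting $\D{AB}\D{C}$ from both sides of \eqref{eq:dimensions_bilocal} and using $\Delta^{(2)}_{\sys{XY}}=\D{XY}-\D{X}\D{Y}$, this collapses to the clean recursion
\begin{align*}
\Delta^{(2)}_{\sys{(AB)C}}=\D{A}\,\Delta^{(2)}_{\sys{BC}}+\D{B}\,\Delta^{(2)}_{\sys{AC}},
\end{align*}
valid for all triples, while the clause $\D{AB}>\D{A}\D{B}$ is simply $\Delta^{(2)}_{\sys{AB}}>0$, which by itself excludes local discriminability and hence upgrades ``bilocal'' to ``strict''.

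For the implication $\Leftarrow$ I would assume $l_{\sys{XY}}=0$ and $n_{\sys{XY}}=2$ for every nontrivial pair (the statement being uniform in the systems). By Eq.~\eqref{eq:composition_general} this gives $\D{XY}=2\D{X}\D{Y}$, so $\Delta^{(2)}_{\sys{XY}}=\D{X}\D{Y}>0$, local discriminability fails, and a one-line check shows $\D{ABC}=2\D{AB}\D{C}=4\D{A}\D{B}\D{C}$ coincides with the right-hand side of \eqref{eq:dimensions_bilocal}; hence $\Theta$ is strictly bilocal-tomographic with $\D{AB}=2\D{A}\D{B}>\D{A}\D{B}$.

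The substantial direction is $\Rightarrow$. Here I would first prove $l_{\sys{AB}}=0$, i.e.\ that $\PurRefbar{AB}=\emptyset$, and only then pin down $n_{\sys{AB}}$. The mechanism for the second half is purely arithmetic: once $l_{\sys{AB}}=0$, homogeneity forces $n_{\sys{XY}}\ge 2$ and $\Delta^{(2)}_{\sys{XY}}=(n_{\sys{XY}}-1)\D{X}\D{Y}$, so the recursion above reduces to the multiplicative identity $(n_{\sys{(AB)C}}-1)\,n_{\sys{AB}}=n_{\sys{BC}}+n_{\sys{AC}}-2$; evaluating it on a single nontrivial system ($\sys{A}=\sys{B}=\sys{C}=\sys{S}$) yields $(n_{\sys{(SS)S}}-1)\,p=2p-2$ with $p:=n_{\sys{SS}}$, whence $p\mid 2(p-1)$, i.e.\ $p\mid 2$, forcing $p=2$, and a short bootstrap through the recursion then propagates $n_{\sys{XY}}=2$ to every pair. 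For the first half I would argue that a pure state refining no product cannot have pure marginals: by essential uniqueness of purification (Property~\ref{prope:uniqueness_purification}) together with the transitivity of reversible channels on pure states established inside the proof of Theorem~\ref{thm:classification}, any purification of a pure single-system state is a reversible image on the purifying factor of a product-refining state, and by the fiber rule \eqref{eq:reversible1} a local reversible permutes the sets $\PurRef{ij}{AB}$ among themselves, so its image still lies in $\PurRefA{AB}$.

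The main obstacle is exactly this claim $l_{\sys{AB}}=0$. The dimensional recursion alone does not imply it: the rule $\D{AB}=\D{A}\D{B}+(\D{A}-1)(\D{B}-1)$ also satisfies \eqref{eq:dimensions_bilocal} (the counterexample noted in the Discussion) yet has $n\equiv 1$ and $l\neq 0$, so the genuine input must be essential uniqueness of purification. The delicate residual case is that of pure bipartite states with \emph{mixed} marginals, which the marginal argument above does not immediately exclude; I expect to dispose of it by running the recursion over all ancillae $\sys{C}$ against the uniform-weight decomposition \eqref{eq:composition_states} and the count \eqref{eq:composition_general}, showing that any nonzero $l_{\sys{AB}}$ would violate the integral bookkeeping forced by $\Delta^{(3)}\equiv 0$. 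Finally I would record that, since every step is uniform in the choice of systems, the equivalence holds verbatim for the displayed pair $\sys{A},\sys{B}$ and indices $i,j$, recalling that $n_{\sys{AB}}^{ij}=n_{\sys{AB}}$ by \eqref{eq:composition_general_reversible}.
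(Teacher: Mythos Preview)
Your argument for $n_{\sys{AB}}=2$ once $l_{\sys{AB}}=0$ is essentially the paper's: both plug $\D{XY}=n_{\sys{XY}}\D{X}\D{Y}$ into Eq.~\eqref{eq:dimensions_bilocal} and solve; your specialisation to $\sys{A}=\sys{B}=\sys{C}$ merely makes explicit what the paper compresses into one sentence. The $\Leftarrow$ direction is likewise the same as the paper's (which simply refers back to Proposition~\ref{prop:bilocal}).

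The genuine gap is the step $l_{\sys{AB}}=0$. You correctly observe that the dimensional recursion alone cannot force it, but you then misattribute the missing input to Property~\ref{prope:uniqueness_purification}. Your partial argument---essential uniqueness plus the fiber rule \eqref{eq:reversible1} excludes pure bipartite states in $\PurRefbar{AB}$ with a \emph{pure} marginal---is sound, but the residual case of both marginals mixed is left as an ``I expect to dispose of it'', and the vague ``integral bookkeeping'' sketch is not a proof. The paper closes this gap in one line by invoking Theorem~4 of Ref.~\cite{d2019classical}: in any simplicial theory satisfying $n$-local discriminability for some $n$, one has $\PurRefbar{AB}=\emptyset$. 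Crucially, that external result uses only the simplicial structure together with the discriminability hypothesis, \emph{not} essential uniqueness; the latter enters the corollary solely through Theorem~\ref{thm:classification}, to ensure $n_{\sys{AB}}^{ij}=n_{\sys{AB}}$ is well defined before one plugs into Eq.~\eqref{eq:dimensions_bilocal}. So your diagnosis of the counterexample $\D{AB}=\D{A}\D{B}+(\D{A}-1)(\D{B}-1)$ is slightly off: what rules it out at the level of simplicial theories is not Property~\ref{prope:uniqueness_purification} but the structural content of $n$-local discriminability beyond the mere dimension formula, as captured by that cited theorem.
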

\begin{proof}
	By Theorem 4 of Ref.~\cite{d2019classical}---which holds for simplicial theories with $n$-local discriminability for some positive integer $n$---one has that $l_{\sys{AB}}=0$ for all $\sys{A},\sys{B}\in\Sys{\Theta}$. Now, it suffices to plug Eq.~\eqref{eq:composition_general} into Eq.~\eqref{eq:dimensions_bilocal} of Theorem~\ref{thm:bilocal}. Solving for $n_{\sys{AB}}$, one finds the two solutions $n_{\sys{AB}}=1$ or $n_{\sys{AB}}=2$, and then it must be $n_{\sys{AB}}=2$ for every $\sys{I}\neq\sys{A},\sys{B}\in\Sys{\Theta}$. The converse has been proven in Proposition~\ref{prop:bilocal}.
\end{proof}

\end{document}